\theoremstyle{plain}
\newcommand{\localCharacterization}{\hyperref[thm:main_poly_characterization]{Local Characterization Theorem}\xspace}
\newcommand{\discreteLocalCharacterization}{\hyperref[thm:localchar_discrete]{Discrete Local Characterization Theorem}\xspace}
\author[1]{Vipul Arora}
\affil{\footnotesize National University of Singapore. $\{$\texttt{vipul, arnab}$\}$\texttt{@comp.nus.edu.sg}.}
\author[1]{Arnab Bhattacharyya\thanks{Supported in part by an NRF Tier 2 grant (MOE2019-T2-1-152). Research partly conducted while visiting the Simons Institute for the Theory of Computing.}}
\author[2]{Noah Fleming\thanks{Supported by NSERC.}}
\affil{\footnotesize University of California, San Diego, and Memorial University. \texttt{nfleming@mun.ca}}
\author[3]{Esty Kelman\thanks{Supported in part by an Amazon Faculty Research Award to AB, and in part by ERC grant 834735. Research partly conducted while visiting the Simons Institute for the Theory of Computing.}}
\affil{\footnotesize Hebrew University, and Reichman University. \texttt{esther.kelman@post.idc.ac.il}.}
\author[4]{Yuichi Yoshida\thanks{Supported in part by JSPS KAKENHI Grant Number JP17H04676, and 20H05965.}}
\affil{\footnotesize National Institute of Informatics. \texttt{yyoshida@nii.ac.jp}.}
\author{Vipul Arora\thanks{National University of Singapore. \texttt{vipul@comp.nus.edu.sg}.}
\and
Arnab Bhattacharyya\thanks{National University of Singapore.
\texttt{arnabb@nus.edu.sg}.}
\and
Noah Fleming\thanks{University of California, San Diego, and Memorial University. \texttt{nfleming@eng.ucsd.edu}.}
\and
Esty Kelman\thanks{Hebrew University, and Reichman University. \texttt{esther.kelman@post.idc.ac.il}.}
\and Yuichi Yoshida\thanks{National Institute of Informatics. \texttt{yyoshida@nii.ac.jp}.}
}
\date{\today}
\begin{document}
\maketitle

\ignore{
\setlength\tabcolsep{0em}
\newcommand{\myPad}{\hspace{2.em}}
 \begin{tabular}{c@{\myPad}c@{\myPad}c}
   Vipul Arora &
 	Arnab Bhattacharyya  &
 	Noah Fleming$^\dagger$ \\[-.5mm]
   	\small\slshape National University of Singapore &
 	\small\slshape National University of Singapore &
 	\small\slshape UC San Diego \\[-1mm]
 	 & & \small\slshape Memorial University
 \end{tabular}\vspace{1em}
 \begin{tabular}{c@{\myPad}c}
   Esty Kelman &
 	Yuichi Yoshida
 	\\[-.5mm]
   	\small\slshape Hebrew University & \small\slshape National Institute of Informatics \\[-1mm]
 	\small\slshape Reichman University &
 \end{tabular}

\vspace{9mm}
}

\begin{abstract}
We study the problem of testing whether a function $f\colon \reals^n \to \reals$ is a polynomial of degree at most $d$ in the \emph{distribution-free} testing model. Here, the distance between functions is measured with respect to an unknown distribution $\mathcal{D}$ over $\reals^n$ from which we can draw samples. In contrast to previous work, we do not assume that $\mathcal{D}$ has finite support. 
    
We design a tester that given query access to $f$, and sample access to $\mathcal{D}$, makes $\poly(d/\eps)$ queries to $f$, accepts with probability $1$ if $f$ is a polynomial of degree $d$, and rejects with probability at least $2/3$ if every degree-$d$ polynomial $P$ disagrees with $f$ on a set of mass at least $\eps$ with respect to $\mathcal{D}$. 
Our result also holds under mild assumptions when we receive only a polynomial number of bits of precision for each query to $f$, or when $f$ can only be queried on rational points representable using a logarithmic number of bits. Along the way, we prove a new stability theorem for multivariate polynomials that may be of independent interest. 
\end{abstract}

    


\section{Introduction}

Traditionally, {program testing} involves running a suspect program on a curated test set and checking the validity of the results. 
To formalize and quantitatively study this problem, Blum, Luby, and Rubinfeld~\cite{BlumLR93} initiated research on {\em self-testers}, which check a particular property of the
given program by verifying whether the program's output on a random input is consistent with its outputs on other related inputs. Soon afterwards, spurred by connections to 
the newly emerging areas of interactive proof systems and probabilistically checkable  proofs, self-testing blossomed into the general area of \emph{property testing}; see the textbooks~\cite{Goldreich17,ArnabYoshida22} for detailed introductions.
Perhaps because of these early connections to complexity theory and coding theory, the standard setup in property testing is to assume that both the domain and range of the
function being tested are finite sets. 

In this work, we return to the roots of property testing and consider testing properties of real-valued functions $f: \R^n \to \R$ with real-valued inputs. Specifically, we 
focus on the fundamental problem of \emph{low-degree testing} which has been widely and intensely studied in the standard setup. Recall that in the traditional setup we
are given query access to a function $f\colon \mathbb{F}^n \to \mathbb{F}$ defined over some finite field $\mathbb{F}$ and a parameter $\varepsilon >0$. 
The aim of a property tester for these parameters is to distinguish with probability at least $2/3$ between the case when $f$ is a polynomial of degree at most $d$,
 and the case when $f$ is $\eps$-far, i.e., it disagrees with any polynomial $P$ of degree at most $ d$ on at least an $\eps$ fraction of the domain $\mathbb{F}^n$.
 
 To extend the notion of testing to functions defined over $\R^n$, we need a notion of ``$\eps$-farness'' in this setting. One approach
 is to fix a specific distribution $\caD$ over $\R^n$, and define $f$ to be $\eps$-far from a property $\mathcal{P}$ if:
 \begin{equation}\label{eq:far}
 \inf_{g \in \mathcal{P}}\Pr_{\bm x \sim \caD}[f(x) \neq g(x)] > \eps.
 \end{equation}
 Indeed, for $\caD$ being the standard Gaussian distribution, this is the approach used by most prior work on testing properties of functions over the reals, e.g.,
 testing halfspaces~\cite{MatulefORS10, MatulefORS102, matulef2009testing, Harms19}, surface area~\cite{Neeman14, KothariNOW14}, high-dimensional convexity~\cite{FSS17},  linear separators~\cite{BalcanBBY12}, and linear $k$-juntas~\cite{DeMN19}. However, this approach is not entirely satisfactory, as the assumed $\caD$ may not be the relevant underlying input distribution. 

A different approach is to use the framework of {\em distribution-free testing}, studied first by Halevy and Kushilevitz~\cite{HalevyK07}, that does not assume knowledge of $\caD$. Instead,
it is only assumed that the tester receives sample access to the underlying distribution $\caD$, and the goal is to reject when~\eqref{eq:far} holds. Distribution-free testing has been widely studied for a variety of properties of boolean functions, e.g., monomials~\cite{GlasnerS07, DolevR11}, juntas~\cite{LiuCSSX19}, halfspaces~\cite{ChenX16, CP22}, and monotonicity~\cite{BCS20}. Distribution-free property testing over $\R^n$ is an emerging trend in the field, that has been studied, e.g., for monotonicity~\cite{BCS20, HY20}, halfspaces~\cite{Harms19} and polynomial threshold functions~\cite{BFH21}.  Most directly relevant here is the work of Fleming and Yoshida~\cite{FlemingY20} where they studied distribution-free testing of linearity of functions $f: \R^n \to \R$. 
 
To further discuss testing real functions, we first formally define distribution-free testing of real functions.
For a property $\mathcal P$ over real functions, we say that an algorithm is a \emph{tester} for $\mathcal P$ if, given query access to a function $f: \mathbb{R}^n \to \mathbb{R}$, and sampling access to an unknown distribution $\mathcal{D}$, and $\eps > 0$, it distinguishes the case that $f$ satisfies $\mathcal P$, from the case that $f$ is {\em $\eps$-far from $\mathcal P$ over $\mathcal{D}$}, i.e., for any function  $g:\mathbb{R}^n \to \mathbb{R}$ satisfying $\mathcal P$, \[\Pr_{\bm x \sim \mathcal{D}}[f(\bm x) \neq g(\bm x)]>\eps\] holds.
We say that a tester is a \emph{one-sided error tester}, if it always accepts functions satisfying $\mathcal{P}$.
We also explore testing in the presence of errors. In this context, the early works~\cite{ABCG93, GLRSW91} introduced the notion of {\em approximate testing}, which
was made more formal by the work of Ergun, Kumar and Rubinfeld~\cite{EKR01}. Given two parameters $\alpha<\beta$, in addition to $\eps>0$, query access to $f: \R^n \to \R$ and sample
access to a distribution $\caD$, the goal of an approximate tester for a property $\mathcal{P}$ is to distinguish between the following two cases:
\begin{itemize}
\item \textbf{YES:}
There exists $h \in \mathcal{P}$, such that $|f(\bm x)-h(\bm x)| < \alpha$ for all $\bm x\in\R^n$.
\item \textbf{NO:}
For every $h \in \mathcal{P}$, $\Pr_{\bm x \sim \caD}[|f(\bm x)-h(\bm x)| >\beta]>\eps$. 
\end{itemize}

In the \textbf{YES} case, we say that $f$ is \emph{pointwise $\alpha$-close to $h$}. Here, $\alpha$ should be thought of as a representational limitation, or a round-off/truncation error. For example, $\alpha = 1/\exp(\mathrm{poly}(n))$ can be achieved by storing $\mathrm{poly}(n)$ bits of precision.


\subsection{Our Contributions}

Our first result gives an exact tester for low-degree that generalizes the result of~\cite{FlemingY20}. Note that there is a trivial $\Omega(\max\{d, 1/\varepsilon\})$ lower bound on the complexity of testing degree-$d$ polynomials. 

\begin{restatable}{thm}{exact}\label{thm:low_degree_test}
Let $d\in\mathbb{N}$, and for $L>0$, suppose $f: \R^n \to \R$ is a function that is bounded in the ball $\ball(\bm{0}, L)$. 
Given $\eps >0$, query access to $f$, and sampling access to an unknown distribution $\mathcal{D}$, there exists a one-sided error, distribution-free,  $O (d^5+\frac{d^2}{\varepsilon}\log\frac{1}{\varepsilon} )$-query tester for testing whether $f$ is a degree-$d$ polynomial, or is $\eps$-far from degree-$d$ polynomials over $\mathcal{D}$.
\end{restatable}
\noindent 
Some form of the boundedness condition 
is necessary to test low degree using standard functional equation characterizations. Even for linearity, Hamel \cite{Ham05} showed the existence of functions $f: \R \to \R$ that satisfy the {\em Cauchy functional equation} $f(x+y)=f(x)+f(y)$ everywhere but are unbounded on {\em any} measurable set\footnote{In fact, Hamel showed that if $f$ is a non-linear solution, the set $\{(x,f(x))\}$ intersects every neighborhood of every point in $\R \times \R$, and so is clearly unbounded on any measurable set.}. On the other hand, Cauchy \cite{Cau21} showed that the only continuous solutions to $f(x+y)=f(x)+f(y)$ are the linear maps $f(x) = cx$. Darboux \cite{Dar75} later showed that boundedness on any interval is a weaker condition than continuity that also implies the result of Cauchy. The latter two results were generalized to low-degree polynomials by Fr\'echet \cite{Fre09} and Ciesielski \cite{Ciesielski1959SomePO} respectively. \anote{This is not a formal argument. For example, Hamel's counterexample would be rejected if the tester checked $f(ax)=af(x)$ for real values $a$.}


\ignore{
 \spcnoindent \emph{A note on the analyticity assumption.} \nnote{move this later and put a forward pointer to it} Our \localCharacterization assumes that the given function is analytic. This circumvents discontinuity issues which arise when moving from countable domains, such as the rationals. A simple example of such an issue is that there are functions $f$ which are additive ($f(x+y) = f(x)+f(y)$) but are not homogeneous (there exist $\alpha$ and $x$ such that $f(\alpha x) \neq \alpha f(x)$). Furthermore, for any linear function one can construct (assuming the axiom of choice) dense families of additive but not-homogeneous functions which are close and far from that linear function. The same assumption was used in~\cite{FlemingY20} in order to obtain a linearity tester from their tester for additivity. For us, the proof of the \localCharacterization relies on the Taylor series expansion of $g$ being convergent locally, which is guaranteed by analyticity. \\}

\autoref{thm:low_degree_test} serves as a starting point for our investigation into approximate low-degree testing. In this setting, we give an approximate tester for low-degree polynomials, where the unknown underlying distribution $\mathcal{D}$ is required to be {\em $(\eps, R)$-concentrated}. We say that a distribution is $(\eps, R)$-concentrated if most of its mass is concentrated in a ball of radius $R$, that is, \[
\Pr_{\bm p \sim \mathcal{D}}[\bm p\in\ball(\bm 0,R)] \geq 1- \eps.
\]
Note that the standard Gaussian distribution is $(0.01,2\sqrt{n})$-concentrated.
\begin{restatable}{thm}{approximateLowDegree}\label{thm:approximate_low_degree_tester}
Let $d\in\mathbb{N}$, $f: \R^n \to \R$ be a function that is bounded in  $\ball(\bm{0}, 2d\sqrt{n})$, and for $\eps \in (0,1), R>0$, let $\caD$ be an $(\eps/4, R)$-concentrated distribution. 
Given $\alpha>0, \beta\geq 2^{(2n)^{O(d)}}R^d\alpha$, query access to $f$, and sampling access to $\mathcal{D}$, there is a one-sided error, $O (d^5+\frac{d^2}{\varepsilon} \log\frac{1}{\varepsilon} )$-query tester which,
 distinguishes between the case when  $f$ is pointwise $\alpha$-close to some degree-$d$ polynomial and the case when, for every degree-$d$ polynomial $h\colon\mathbb{R}^n\to\mathbb{R}$, $\Pr_{\bm p \sim \caD}[|f(\bm p)-h(\bm p)| > \beta ] > \eps$.
\end{restatable}

Thus, if $d$ is constant, $R$ is polynomial in $n$, and the tester receives $\poly(n)$ most significant bits of $f({\bm p})$ for any query point $\bm{p}$, the tester accepts when $f$ is a degree-$d$ polynomial, and rejects when $f$ is not pointwise $1$-close to a degree-$d$ polynomial on at least an $\eps$ fraction of $\mathcal{D}$.
In \autoref{sec:additivity}, we consider the  special case of testing additivity. Here, we give a tester which requires only $O(\log n)$ bits of precision.

The above results assume that the function can be queried on arbitrary points in $\mathbb{R}^n$ which is unrealistic in view of finite precision issues. We also analyze the setting where the tester can evaluate $f$ only on points with finite number of bits of precision and also, the unknown distribution $\caD$ is promised to be supported on points with finite number of bits of precision. More precisely, $\caD$ is given to be supported on points of the lattice  $\caL\triangleq\frac1B\Z^n$, for some  parameter $B$ controlling the density of the lattice, and
also, $f$ can be queried only on a lattice $\caL' \triangleq \frac{1}{B'}\Z^n$ for a bounded $B'$.
This setting models the situation where we only care about the function's behavior on finitely representable inputs, and on such  inputs, the function can be evaluated exactly. The goal is to obtain a tester that does not require $B'$ to be very large but still allows $B$ to be large. 

\begin{restatable}{thm}{discretesampling}\label{thm:low_degree_test_discrete}
For $d, B, R>0$, let $B'\geq 16\max\{n^{5/2+2d} d^{2d},B^2R^2/\sqrt{n}\}$ be a multiple of $B$. Let $\caL = \frac{1}{B}\mathbb{Z}^n$ and $\caL' = \frac{1}{B'} \mathbb{Z}^n$.
Given $\eps>0$, query access to a function $f: \R^n \to \R$, and sample access to an unknown $(\eps/4, R)$-concentrated distribution $\caD$ supported on $\caL$, there is a one-sided error,  $O(d^5+\frac{d^2}{\varepsilon}\log\frac{1}{\varepsilon})$-query tester for testing whether $f$ agrees with a degree-$d$ polynomial on $\mathcal{L}$, or is $\eps$-far from degree-$d$ polynomials over $\mathcal{D}$. The tester queries $f$ on points in $\caL'$. 
\end{restatable}

Note that unlike \autoref{thm:approximate_low_degree_tester} and \autoref{thm:low_degree_test}, our tester for lattices does not make any assumptions about the function $f$. 

\ignore{
\nnote{define $(\zeta,R)$-concentrated distribution as a distribution such that at least $(1-\zeta$ fraction of its prob mass is in $\ball(\bm c, R)$ for some $\bm c$, then say we prove it for $\bm c = 0$ as this makes things notationally simpler, but proof generalizes in straightforward way.}
}

\ignore{
Next, we explore the testing in the presence of errors. In this context, the early works~\cite{ABCG93, GLRSW91} introduced the notion of {\em approximate testing}, which
was made more formal by the work of Ergun, Kumar and Rubinfeld~\cite{EKR01}. Given two parameters $\alpha<\beta$, in addition to $\eps>0$, query access to $f: \R^n \to \R$ and sample
access to a distribution $\caD$, the goal of an approximate tester for a property $\mathcal{P}$ is to distinguish between the following two cases:
\begin{itemize}
\item
There exists $h \in \mathcal{P}$, such that $|f(\bm x)-h(\bm x)| < \alpha$ for all $\bm x\in\R^n$.
\item
For every $h \in \mathcal{P}$, $\Pr_{\bm x \sim \caD}[|f(\bm x)-h(\bm x)| >\beta]>\eps$. 
\end{itemize}
In the former case, we say that $f$ is \emph{pointwise $\alpha$-close to $h$}.
Here, $\alpha$ should be thought of as a representational limitation or round-off/truncation error. Thus, it is 
reasonable to assume that $\alpha = 1/\mathrm{poly}(n)$, as this can be achieved by storing $O(\log n)$ bits of precision.



In this setting, we analyze the property of additivity: a function $h\colon\R^n \to \R$ is said to be additive, if $h(\bm p + \bm q) = h(\bm p) + h(\bm q)$ for all $\bm p, \bm q\in\R^n$.\footnote{Recall that over $\mathbb{R}^n$ a function must be both \emph{additive} and \emph{homogeneous} (i.e., $f(c \bm x) = c f(\bm x)$ for every $\bm x \in \mathbb{R}^n$, $c \in \mathbb{R}$) to be linear.}  We give a \emph{partially} \nnote{no longer "partially"} distribution-free tester for additivity, which works for any distribution $\mathcal{D}$. Let $\lrad\in\R$ be the radius of the smallest Euclidean ball that contains most of the  mass of $\caD$ --- formally, we require $\Pr_{\bm p \sim \mathcal{D}} [\bm p \in \ball(\bm 0,\lrad)] \geq 1-\varepsilon/4$. Note that the standard Gaussian distribution satisfies this requirement for $\lrad=O(\sqrt{n})$.

\begin{restatable}{thm}{noisy}\label{thm:central-gausssian}
  There exists a one-sided error partially distribution-free  $O (\frac{1}{\varepsilon} \log\frac{1}{\varepsilon} )$-query approximate tester for testing additivity. The tester distinguishes between the case when $f$ is pointwise $\alpha$-close to some additive function and the case when, for every additive function $h$, $\Pr_{\bm p \sim \caD}[|f(\bm p)-h(\bm p)| > O(\lrad n^{1.5}\alpha) ] > \eps$.
  
\end{restatable}

Finally, if we allow for our approximation factor $\beta$ to depend on $\|p\|_2$, then we show a fully distribution-free tester for additivity. We call it \emph{multiplicative-approximate tester}.

\begin{restatable}{thm}{multiApprox}
 There exists a one-sided error  distribution-free, $O(\frac{1}{\varepsilon}\log\frac{1}{\varepsilon})$-queries, multiplicatively-approximate tester for testing additivity. It distinguishes between the case when $f$ is pointwise $\alpha$-close to some additive function, and the case when, for every additive function $h$, $\Pr_{\bm p \sim \caD}[|f(\bm p)-h(\bm p)| > O((1+\|\bm p\|_2)n^{1.5}\alpha)] > \eps$.
\end{restatable}

An important open question is to show approximate testability for general low-degree polynomials.
}

\subsection{Related Work}

Although distribution-free testing (for graph properties) was already defined in an early work on property testing~\cite{GoldreichGR98}, the first distribution-free testers for non-trivial properties appeared much later in the work of Halevy and Kushilevitz~\cite{HalevyK07}.
Since then, distribution-free testers have been considered for a variety of Boolean functions including low-degree polynomials, dictators, and monotone functions~\cite{HalevyK07}, $k$-juntas~\cite{HalevyK07,LiuCSSX19,Bshouty19,Belovs19}, conjunctions, decision lists, and linear threshold functions~\cite{GlasnerS07}, monotone and non-monotone monomials~\cite{DolevR11}, and monotone conjunctions~\cite{GlasnerS07, ChenX16}. 
The first (partial) distribution-free testing result for functions on the Euclidean space was due to Harms~\cite{Harms19}: He gave an efficient tester for half spaces over any rotationally invariant distribution.
Then, as we mentioned above, Fleming and Yoshida~\cite{FlemingY20} gave a tester for linearity of functions over the Euclidean space.

Property testing originated (implicitly, under the name of self-testing) in the work of Blum, Luby, and Rubinfeld~\cite{BlumLR93}, who exhibited the famous BLR tester for linearity over $\mathbb{F}_2$. Since then, testers have been developed for higher degree polynomials, such as the famous Rubinfeld Sudan~\cite{RubinfeldS96} and Raz and Safra~\cite{RazS97} tests for degree-$d$ polynomials over sufficiently large finite fields. One line of work, closely related to ours extended the domain over which these testers worked, culminating in the work of Lipton~\cite{Lipton89} and Rubinfeld and Sudan~\cite{RubinfeldS92}, who gave testers for degree-$d$ polynomials over any finite subset of rationals, where the distance is measured according to the uniform distribution; see~\cite{KiwiMS01} for an excellent survey.
The main distinguishing features between this paper and the works of~\cite{Lipton89, RubinfeldS92} is that (i) we work in the distribution-free setting, (ii) we do not assume that the domain is finite, and (iii) the input function is multivariate.

\subsection{Proof Overview}\label{sec:proof_technique}


This work significantly extends the framework of Fleming and Yoshida~\cite{FlemingY20}, who exhibited a constant-query algorithm for testing the linearity of functions over $\mathbb{R}^n$ in the distribution-free setting (when distance is measured according to an arbitrary distribution $\mathcal{D}$); thus, we briefly describe their proof first. 

\paragraph{Testing Linearity over the Reals.} The tester follows the high-level ``self-correct and test'' approach of Halevy and Kushilevitz~\cite{HalevyK07}. To test whether a given function $f\colon \mathbb{R}^n \to \mathbb{R}$ is linear, it suffices to construct a \emph{linear} function $g_{\mathsf{lin}}:\mathbb{R}^n \to \mathbb{R}$ such that:
\begin{enumerate}
    \item If $f$ is indeed a linear function, then $f=g_{\mathsf{lin}}$.
    \item For any $\bm p \in \mathbb{R}^n$, we can efficiently query the value of $g_{\mathsf{lin}}(\bm p)$ using queries to $f$. 
\end{enumerate}
Indeed, by (1), to test if $f$ is linear, it suffices to estimate the distance between $f$ and $g_{\mathsf{lin}}$ (measured according to $\mathcal{D}$), which can be done efficiently by (2). 

To construct $g_{\mathsf{lin}}$ they use the standard self-correcting approach pioneered in the Blum, Luby, and Rubinfeld (BLR) test for linearity over $\mathsf{GF(2)}$~\cite{BlumLR93}. However, this has to be significantly modified. Standard self-correction arguments require that every point in the distribution has equal probability mass, and there is no natural analogue to the uniform distribution over $\mathbb{R}^n$. Instead, they modify the self-correcting argument to work for the standard Gaussian distribution --- that is, by evaluating $f$ on points sampled from $\mathcal{N}(\bm 0,I)$, they are able to construct the desired function $g_{\mathsf{lin}}$. Note that even though $g_{\mathsf{lin}}$ is constructed using samples from $\mathcal{N}(\bm 0, I)$, in order to test whether $f$ is close to a linear function over the given distribution $\mathcal{D}$, by (1) it suffices to estimate the distance between $f$ and $g_{\mathsf{lin}}$ over $\mathcal{D}$. This can be done by sampling sufficiently many points $\bm p \sim \mathcal{D}$ and checking whether $f(\bm p) = g_{\mathsf{lin}}(\bm p)$, using (2) in order to evaluate $g_{\mathsf{lin}}(\bm p)$.

To circumvent the issue that points have differing probability mass under $\mathcal{N}(\bm 0, I)$, they project every point into a Euclidean ball $\ball(\bm 0, \srad)$ of small radius $\srad$ at the center of the Gaussian (see \autoref{fig:small_ball_gaussian}). Within this ball, every point has approximately the same mass and they are able to perform the self-correction argument. In particular, they define
\[ g_{\mathsf{lin}}(\bm p) \triangleq \gamma_{\bm p} \cdot \maj_{\bm q \sim \mathcal{N}(\bm 0,I)} \left[f \left( \frac{ \bm p}{\gamma_{\bm p}} -\bm q \right)  + f(\bm q)  \right], \] 
where $\gamma_{\bm p} \in \mathbb{R}$ is such that $\bm p/ \gamma_{\bm p} \in \ball(\bm 0,\srad)$. 
That is, $g_\mathsf{lin}(\bm p)$ is the majority value weighted according to the standard Gaussian distribution.
This is essentially the same self-corrected function used in the BLR test, except that each point $\bm q$ is first projected into $\ball(\bm 0, \srad)$. 

Finally they argue that, if their tests pass with a sufficiently high probability, then $g_{\mathsf{lin}}$ is a linear function, and furthermore, for any $\bm p \in \mathbb{R}^n$, the value of $g_{\mathsf{lin}}(\bm p)$ can be recovered with a small number of queries to $f$.

\begin{figure}
    \centering
    \begin{tikzpicture}[scale=0.8]
        \filldraw[color=white!0, even odd rule,inner color=red!50,outer color=white] (0,0) circle (3);
        \filldraw[color=black!60, fill=green!10, very thick](0,0) circle (0.8);
        
        \filldraw[color=black!60, fill=betterYellow!30, thick](1,1.5) circle (0.07);
        \filldraw[color=black!60, fill=betterYellow!30, thick](0.16,0.3) circle (0.07);
        \node[] at (0,0) {\small $\ball(\bm 0,\srad)$};
        \node[] at (0.1,-2.8) {$\mathcal{N}(\bm 0,I)$};
        \draw[very thick, color=cyan, ->](0.95,1.45) -- (0.2,0.34);
    \end{tikzpicture}
    \caption{Each point in the Gaussian distribution is projected into the small ball $\ball(\bm 0,\srad)$ at the origin.}\label{fig:small_ball_gaussian}
\end{figure}
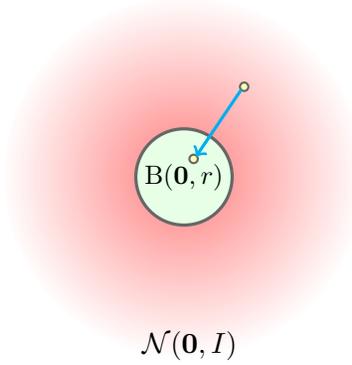

\paragraph{Exactly Testing Polynomials over the Reals.}

Our work is a significant generalization of the ideas used in the linearity test so that they may be applied to degree-$d$ polynomials.
Given a function $f\colon \mathbb{R}^n \to \mathbb{R}$, we construct a degree-$d$ polynomial $g\colon\mathbb{R}^n \to \mathbb{R}$ such that 
\begin{enumerate} 
    \item If $f$ is a degree-$d$ polynomial, then $f=g$.
    \item For any $\bm p \in \mathbb{R}^n$, we can efficiently query the value $g(\bm p)$ using queries to $f$.
\end{enumerate}
As in the the case of linear functions, we construct $g$ using samples from the Gaussian distribution. We mitigate the fact that points are weighted non-uniformly, by restricting attention to a small (open) ball $\ball(\bm 0, \srad)$, defining $g$ within that ball, and then extending outwards. Formally, let $\alpha_i \triangleq {(-1)}^{i+1} \binom{d+1}{i}$, and for any $\bm p \in \ball(\bm 0, \srad)$ and $\bm q \in \mathbb{R}^n$, let $g_{\bm q}(\bm p) \triangleq \sum_{i=1}^{d+1} \alpha_i \cdot f(\bm p+i \bm q)$.
For points $\bm p \in \ball(\bm 0,\srad)$, we define $g$ to be 
\[ g(\bm p) \triangleq \maj_{\bm q \sim \mathcal{N}(\bm 0, I)} \left[ g_{\bm q}(\bm p) \right],\] 
where the majority is weighted according to $\cN(\bm 0,I)$. For points $\bm p \not \in \ball(\bm 0,\srad)$, the value of $g(\bm p)$ is defined by interpolating from evaluations of $g$, on $d+1$ distinct points within $\ball(\bm 0,\srad)$ (this is defined formally in \autoref{sec:exact}).


Having thus defined $g$, we would like to argue that if a certain set of tests pass with sufficiently high probability, then $g$ is a degree-$d$ polynomial.
\ignore{However, previous characterizations of degree-$d$ polynomials, such as the straightforward generalization of the characterization used by Rubinfeld and Sudan~\cite{RubinfeldS96} to polynomials over $\mathbb{R}^n$, are highly non-local. To circumvent this, we develop a new \emph{local characterization} of degree-$d$ polynomials over the reals, which is more amenable to being tested (the \localCharacterization).
With this characterization in hand, we argue that $g$ is a degree-$d$ polynomial}
We make this argument in three steps, where each step extends the domain over which we guarantee that $g$ is a polynomial. 
\begin{enumerate} 
    \item We show that $g$ is consistent with a degree-$d$ univariate polynomial on any line segment $\pline_{\bm a, \bm b}^{\ball} \triangleq \{\bm a+ x \bm b \in \ball(\bm 0,\srad): x \in \mathbb{R}\}$ within the ball $\ball(\bm 0,\srad)$. To prove this, we generalize the self-correction argument from~\cite{RubinfeldS96} 
    to hold over the ball $\ball(\bm 0,\srad)$ of reals.
    \item We show how to stitch together these ``local'' representations of $g$ on lines into a degree-$d$ \emph{multivariate} polynomial, which is consistent with $g$ within a hypercube contained within $\ball(\bm 0, \srad)$. We describe this step in more detail below.
    \item We extend this representation of $g$ within the hypercube to a consistent representation of $g$ as a degree-$d$ polynomial everywhere. This follows by extrapolating $g$ from the small ball $\ball(\bm{0},r)$ to all of $\R^n$. 
\end{enumerate}
The main innovation is step (2), and therefore we will describe it in more detail. Step (2) is proved in two parts: first, we argue that $g$ can be represented as a polynomial of degree $dn$; second, we reduce the degree to $d$.

\begin{figure}[h]
    \centering
    \begin{tikzpicture}[scale=0.7]
        \filldraw[color=black!60, fill=green!9, very thick](3,0.25) circle (3.9);
        \filldraw[color=black!60, fill=red!5, very thick](0,-1) -- (2,-3) -- (6,-2) -- (6,1.5) -- (4,3.5) -- (0,2.5);
        \draw[color=blue!50, very thick] (4.38,.445) -- (8,1.33);
        \draw[color=black!60, very thick] (4,0)-- (4,3.5) -- (6,1.5) -- (6,-2) -- cycle;
        \draw[color=purple!20, very thick](3.2,-0.2)-- (3.2,3.3) -- (5.2,1.3) -- (5.2,-2.2) -- cycle;
        \filldraw[color=purple!60, fill=betterYellow!25, very thick](2.4,-0.4)-- (2.4,3.1) -- (4.4,1.1) -- (4.4,-2.4) -- cycle;
        \draw[color=purple!20, very thick] (0.8,-0.8)-- (0.8,2.7) -- (2.8,0.7) -- (2.8,-2.8) -- cycle;
        \draw[color=purple!20, very thick] (1.6,-0.6)-- (1.6,2.9) -- (3.6,0.9) -- (3.6,-2.6) -- cycle;
        \draw[color=black!60, very thick] (0,-1)-- (0,2.5) -- (2,0.5) -- (2,-3) -- cycle;
        \node[text width=1.2cm] at (3,-4) {$\ball(\bm 0,\srad)$};
        \filldraw[fill=green!9,  color =green!9] (5.4,2.6) circle (6pt);
		\draw [cyan, very thick, xshift=4cm] plot [smooth, tension=1] coordinates { (-1.6,0.7) (-1.1,1.5) (-0.7,-0.5) (0,0.1) (0.4,-2.4)};
		\draw[color=purple!60, very thick](2.4,-0.4)-- (2.4,3.1) -- (4.4,1.1) -- (4.4,-2.4) -- cycle;
		\draw[color=blue!50, very thick] (-2,-1.13) -- (3.15,.14);
		\draw[color=black!60, very thick] (2,-3) -- (6,-2);
        \draw[color=black!60,  very thick] (0,-1) -- (2.4,-.4);
        \draw[color=black!60,  very thick] (0,2.5) -- (4,3.5);
        \draw[color=black!60, very thick] (2,0.5) -- (6,1.5);
		
		\draw[color=black!60, very thick,->] (2.5,-3.15) -- (5.8,-2.35);
	    \filldraw[fill=green!9,  color =green!9] (4.08,-2.8) circle (6.8pt);
	    
		\node[text width=1cm] at (4.5,-2.8) {$x_n$};
		\node[text width=1cm] at (-1.9,-1.3) {$L$};
		\node[text width=1.5cm] at (3.4,-0.7) {\small $n(d-1)$};
    \end{tikzpicture}
    \caption{The construction of the degree $nd$ representation of $g$. $d+1$ slices, parallel to the $x_n$ axis, of the cube are chosen. On each slice $g$ is a degree $n(d-1)$ polynomial (cyan). These degree $n(d-1)$ representations on slices are stitched together along a line $L$.}\label{fig:local_to_global_simple}
\end{figure}
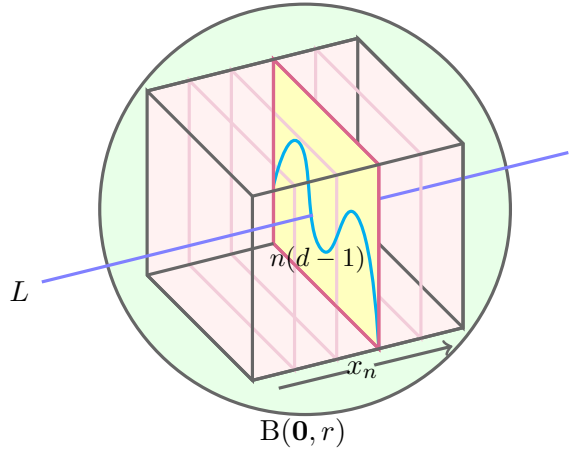

To prove the first part, we consider the largest $n$-dimensional cube that can be inscribed in the ball $\ball(\bm 0,\srad)$. 
We then discretize the cube by picking $d+1$ slices perpendicular to the $x_n$ axis ($(n-1)$-dimensional sub-cubes), and argue by induction that $g$ can be written as a degree $(n(d-1))$ polynomial on each slice\footnote{The reason that we use a hypercube embedded within the ball --- rather than using $d+1$ slices of $\ball(\bm 0,\srad)$ --- is that we require each  of the $(n-1)$-dimensional polynomials have the same domain. If we took $d+1$ slices of $\ball(\bm 0, \srad)$, this would not be true.} (see \autoref{fig:local_to_global_simple}). To combine these $n(d-1)$-degree polynomials into a degree-$dn$ polynomial, we consider any $L$ line parallel to the $x_n$ axis. This line has exactly one intersection point with each of the $d+1$ slices. By step (1), $g$ restricted to this line is a degree-$d$ \emph{univariate} polynomial. Using this univariate representation of $g$ on the line to interpolate between the $n(d-1)$-degree representations of $g$ on the slices allows us to obtain a representation of $g$ as a degree-$dn$ polynomial within the hypercube.


To reduce the degree of this representation of $g$ from $dn$ to $d$, we use the fact that (by step (1)) $g$ can be represented as a degree-$d$ univariate polynomial on every line segment within the ball. In particular, we show that for any representation if $g$ as a polynomial of some degree $m$, there exists a radial line $L_{\bm 0, \bm b}$ such that $g$ restricted to this line also has degree $m$.  However, by step (1), $g$ restricted to any line has degree at most $d$, and this implies that $m \leq d$.

\ignore{
Finally, we show that our construction allows us to efficiently recover the value of $g(\bm p)$ for any $\bm p \in \mathbb{R}^n$. To see this, observe that $\bm p$ lies along the line $L_{\bm 0, \bm p}$, so it suffices to ``learn'' the representation of the univariate degree-$d$ polynomial $g(x \bm p)$ in the formal variable $x$, and then evaluate it at $x=1$.
To recover the representation of $g(x \bm p)$, we evaluate it on $d+1$ points $v_1,\ldots, v_{d+1} \in \mathbb{R}$ such that $v_i \bm p \in \ball(\bm 0,\srad)$, which by definition can be done with queries to $f$. The $d+1$ evaluations $g(v_i \bm p)$ uniquely determine $g(x \bm p)$, which allows us to recover it by using polynomial interpolation.}

\paragraph{Approximate Testing Polynomials over the Reals.} 
The major new challenge that arises in approximate testing is that we must ensure that our tester accepts all functions that are \emph{pointwise $\delta$-close} to  being a polynomial; i.e., we should accept $f$ if there exists a degree-$d$ polynomial $\hat f$ such that for every $\bm x \in \mathbb{R}^n$
\[ |f(\bm x) - \hat f(\bm x)| \leq \delta. \]
We work in the setting where the unknown distribution $\mathcal{D}$ is known to satisfy the condition that $1-\eps/4$ fraction of the mass of $\mathcal{D}$ is contained within $\ball(\bm 0, \lrad)$ for a given parameter $\lrad$ (see \autoref{fig:proj2dPlane}).

\begin{figure}
\center
\begin{tikzpicture}[scale=0.8]
    \filldraw[color=white!0, even odd rule,inner color=red!50,outer color=white] (0,0) circle (3);
    \draw[color=black!60, fill=green!9, very thick] (0,0) circle (2cm);
  \draw[color=black!60, fill=betterYellow!25, very thick] (0,0) circle (1cm);

  \draw[black!60, thick ,dashed] (1.4,-1.4)  -- (0,0);
  \draw[black!60, thick ,dashed] (.7,0.7)  -- (0,0);

  \node[text width=.5cm] at (0 , 0) {$\bm 0$};
  \node[text width=.5cm] at (1.3 , -0.8) {$\lrad$};
  \node[text width=.5cm] at (.4 , 0.5) {$\srad$};
  \node[text width=.5cm] at (0 , -2.5) {$\mathcal{D}$};
    \end{tikzpicture}
  \caption{The ball $\ball(\bm 0, \lrad)$, containing at least $1-\varepsilon/4$ of the mass of the $(\varepsilon/4,R)$-concentrated distribution $\mathcal{D}$, and the ball $\ball(\bm 0, \srad)$ from which $g$ is extrapolated. }
 
  \label{fig:proj2dPlane}
\end{figure}
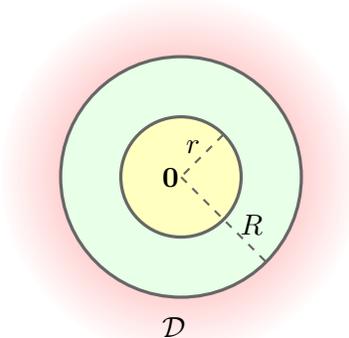

We begin by constructing a self-corrected function $g$, except that now, it is in terms of the median\footnote{The use of median in the context of approximate testing is not a new idea; see, e.g.,~\cite{KiwiMS01}.} instead of the majority. Our analysis then follows the three-step outline mentioned above for the exact case.  In the first step, we argue that $g$ approximately satisfies the univariate characterization of degree-$d$ polynomials on every line restricted to $\ball(\bm 0,\srad)$, and hence, $g$ is pointwise close to a low degree univariate polynomial on every such line segment. The last conclusion is due to a theorem of Gajda~\cite{Gaj91} from the literature on Hyers-Ulam stability results for functional equations; see the book~\cite{HIR12} for a comprehensive survey of this area. 

Our main technical contribution comes in the second step of the analysis. We show that being pointwise close to a multivariate low-degree polynomial is approximately a `lifted' property~\cite{GKS13}.

\begin{lem}\label{lem:maintech}
    Let $m\in(0,1],\delta>0$, and let $h\colon [-m,m]^n \to \mathbb{R}$. 
    If for every line $\pline$, there exists  a degree-$d$ univariate polynomial $\hat{h}_{\pline}$ that is pointwise $\delta$-close to  $h_{\pline}$ (the restriction of $h$ on the line $\pline$), 
    then $h$ is pointwise $((2/m)^{(n^{40d} )} \delta )$-close to a  degree-$d$ polynomial. 
\end{lem}

The proof of \autoref{lem:maintech} is by induction on $n$, where we show in each step, that (i) the function  is pointwise close to a degree-$2d$ polynomial, and then that (ii) the function {from step (i)} is pointwise close to a degree-$d$ polynomial. Both parts refine the corresponding analysis in the exact case. 
\begin{itemize}
\item
For part (i), we choose $d+1$ hyperplanes $\cube_0\triangleq\{x_n = c_0\}, \dots, \cube_d \triangleq \{x_n=c_d\}$ where $c_0, \dots, c_d$ are the scaled \emph{Chebyshev nodes}. By induction, there exist degree-$d$ polynomials $\widehat{g}_i$ that are pointwise close to $g$ on $\cube_i$.  Now, for any line $\pline$ parallel to the $x_n$ axis, we look at the univariate degree $d$ polynomial $g_{\pline}$ that $g$ is pointwise close to, and the degree $d$ polynomial $\hat{g}_{\pline}$ that agrees with $\hat{g}_i$ for each of the intersections between $\pline$ and $\cube_i$. The difference $g_{\pline} - \hat{g}_{\pline}$ is small at the Chebyshev nodes, which implies that $g_{\pline} - \hat{g}_{\pline}$ is small everywhere inside  $[-m,m]^n$. This argument yields a degree-$2d$ polynomial that is pointwise close to $g$ on $[-m,m]^n$.

\item
We prove a more general result that implies what we need in part (ii). 
\begin{restatable}{thm}{multivariateDegreeReduction}\label{thm:multivarite_degree_reduction}
Let $m\in (0,1]$, $n\geq 2$ and  $p$ be an $n$-variate polynomial of total degree at most $\ell$, for some $d\leq \ell$. If for every $\bm a\in[-m,m]^n$, the univariate polynomial $p_{\bm 0,\bm a}(t) =p(\bm a t)$ which is the restriction of $p$ to the radial line $\pline_{\bm 0,\bm a}$, is pointwise $\varepsilon$-close to a degree-$d$ univariate polynomial on the interval $t\in[-1,1]$, then $p$ is pointwise $\eta$-close to $p^{\leq d}$ (the truncation of $p$ to degree $d$) on $[-m,m]^n$ for $\eta=2 (2/m)^{2n^{18\ell}}\varepsilon$. 
\end{restatable}

In order to prove \autoref{thm:multivarite_degree_reduction}, suppose for the sake of contradiction that $p - p^{\leq d}$ is large at some point in $[-m,m]^n$. By a straightforward argument, this implies that there must be coefficient $\alpha_I$ of a degree $\geq d$ monomial in $p$ which has large magnitude. From this, we would like to conclude that the restriction of $p$ to some radial line $\pline_{{\bm 0}, {\bm a}}$ must not be pointwise close to a degree-$d$ polynomial, and hence we would have a contradiction. Let the restriction of $p$ to this line be defined as
\[
p({\bm a}t) = \sum_{k \leq \ell} \gamma_k({\bm a}) T_k(t),
\] 
where $T_k$ is the $k$th \emph{Chebyshev polynomial}. It turns out that in order to show $p(\bm at)$ is not close to a degree-$d$ polynomial, it suffices to show $\gamma_k({\bm a})$ is large for some $k>d$.

The large coefficient $\alpha_I$ of $p$ appears in some coefficient $\gamma_{k^*}({\bm a})$ for $k^*>d$\footnote{In fact, $k^*$ is either $d+1$ or $d+2$.}. Note that $\gamma_{k^*}$ is itself a degree-$\ell$ multivariate polynomial when we consider ${\bm a}$ as variables. In order to conclude that $\gamma_k(\bm a)$ is large for some $\bm a$, we will  choose values for $\bm a$ such that $\gamma_{k^*}$ is a degree-$d$ univariate polynomial (in some variable $z$) and there is a monomial in $\gamma_{k^*}$ with a large coefficient; anti-concentration then implies that there is a setting of $z$ which makes $\gamma_{k^*}$ large. To satisfy this, we want to choose a substitution for $\bm a$ in $z$ such that the monomials under this substitution have exactly the same coefficients as those of $\gamma_{k^*}(\bm a)$ (that is, no two monomials collapse to the same monomial).

Fixing a formal variable $z$, we set ${\bm a}$ to be $(z^{y_1}, \dots, z^{y_n})$ for an integer valued vector $\bm{y} = (y_1, \dots, y_n)$, and define $\widetilde{\gamma}_{k^*}(z) = \gamma_{k^*}(z^{y_1}, \dots, z^{y_n})$. We choose $\bm{y}$ in such a way that distinct monomials of ${\bm a}$ in $\gamma_{k^*}$ lead to distinct powers of $z$ in $\widetilde{\gamma}_{k^*}$; such a $\bm{y}$ exists due to a probabilistic argument. 

At this point, we have a univariate polynomial $\widetilde{\gamma}_{k^*}$ that has at least one large coefficient, and we would like to conclude that it has a large value at some point. This is a statement about the \emph{anti-concentration} of the polynomial $\widetilde{\gamma}_{k^*}$. If the largest coefficient were the leading term, then it is well-known that Chebyshev polynomials attain the smallest uniform norm on $[-1,1]$ among all such polynomials. In our situation, the largest coefficient may not be the leading one; nevertheless, we can show a lower bound on the uniform norm by making a connection to Chebyshev polynomials\footnote{Note that (scaled) Chebyshev polynomials are bounded by $2^{1-d}\eta$ within $[-1,1]$. We leave it open whether the lower bound of $2^{-O(d^2)}$ can be improved to $2^{-O(d)}$. However, for our application, this improvement would not be significant.}: 
\begin{restatable}{lem}{nonmoniclowerbound}\label{lem:non-monic-lower-bound}
    Let $p(x) = \sum_{i=0}^d \alpha_i x^i$ be a degree-$d$ polynomial and let $\eta >0$. If $|\alpha_i| \geq \eta$ for some $i\geq 1$, then there exists $x \in [-1,1]$ such that $|p(x)| \geq 2^{-2d^2}\eta $.
\end{restatable}
\end{itemize}

We now return to the main thread of describing the three-step analysis for \autoref{thm:approximate_low_degree_tester}. In the last step, we need to extrapolate our definition of $g$ from within the small ball $\ball(\bm 0, \srad)$ to the bigger ball $\ball(\bm 0,\lrad)$, within which the underlying distribution is concentrated. Again, using properties of Chebyshev polynomials, we show that if $g$ is pointwise $\eta$-close to a degree-$d$ polynomial in some ball $\ball(\bm 0,\srad')$\footnote{In our analysis, we have, and choose $\srad'$ that are strictly smaller than $\srad$.}, then its extrapolation is pointwise
$(O(R/r'))^d\eta$ pointwise close to a degree-$d$ polynomial in $\ball(\bm 0,\lrad)$. After this, the rest of the analysis mirrors the one for the exact case. 

\ignore{
As a warm-up to the polynomial case (and because we obtain better error parameters), we first show a simple noise-resistant tester for \emph{additivity}. Recall that a function is additive if for every $\bm x, \bm y \in \mathbb{R}^n$, $h(\bm x + \bm y) = h(\bm x) + h(\bm y)$; $h$ is \emph{linear} if additionally $h(c \bm x) = c h(\bm x)$ for all $c \in \mathbb{R}$. In this setting we would like to determine whether a given function $f$ is $\delta$-additive over an unknown distribution $\mathcal{D}$, meaning that 
\[ | f(\bm x + \bm y) - f( \bm x) - f(\bm y) | \leq \delta, \]
or $\varepsilon$-far from any additive function over $\mathcal{D}$ with sufficiently high probability.

To develop our approximate additivity tester, we significantly modify the linearity tester of~\cite{FlemingY20}. Our algorithm first tests whether the $f$ satisfies some constraints that characterize the notion of $\delta$-additivity over $\cN(\bm 0,I)$. If this is the case, then we argue that we can construct  a  function $g\colon \mathbb{R}^n \to \mathbb{R}$ which is a \emph{self-corrected} version of $f$, satisfying 
\begin{itemize}
    \item $O(\delta)$-additive in a small ball,
    \item pointwise close to an additive function $h$,
\end{itemize}
and furthermore, if $f$ is indeed $\delta$-additive, then $f$ and $g$ (and therefore $f$ and $h$) are close. 
Moreover, we show that we can estimate the evaluation of $g$ on any point within a small fixed ball around $\bm 0$ with high probability by querying $f$ on correlated points. 
Altogether, this allows us to test whether $f$ is $\delta$-additive by estimating the distance between $f$ and $g$ (and therefore $f$ and $h$) by approximating the value of $g$ on sampled points and comparing it with the value of $f$ on the same point. At a high-level, the construction of $g$ follows similar ideas to the construction of the self-corrected function in~\cite{FlemingY20}, outlined at the beginning of this section: first, $g$ is constructed within a small ball, and then extrapolated to $\mathbb{R}^n$. 
However, this is significantly complicated by the approximate setting. In order to mitigate errors, we define $g$ as a median, rather than a majority, and our analysis crucially relies on a Hyers-Ulam-type stability theorem from~\cite{Kominek89}.

By a significantly non-trivial generalization of the techniques that we used in order to establish our exact tester for polynomials, and the noise-resistant tester for additivity, we give a noise-resistant tester for degree-$d$ polynomials. \nnote{mention some of the trouble that we encountered in proving this}
}
\paragraph{Exactly Testing Polynomials over Discrete Domains.}
For \autoref{thm:low_degree_test_discrete}, the main complication is that we can no longer evaluate points (even approximately!) on points drawn from $\mathcal{N}(0, I)$. We crucially relied on properties of the Gaussian (e.g., it is stable) for showing the self-correction properties of $g$ in the above results. Instead here, we sample from \emph{discrete Gaussian} distributions on lattices in order to define the self-corrected function $g$.  Discrete Gaussians are a fundamental object of study in lattice cryptography (see, e.g.,~\cite{MR07, Reg09}). Ours seems to be the first application of discrete Gaussians in a property testing setting.

For a lattice $\caL$, the discrete Gaussian $\caG(\caL, s)$ is proportional to the density function of $\mathcal{N}(0,s)$ on the lattice points. The self-corrected function $g$ is defined as $\mathsf{maj}_{\bm{q}\sim \caG(\caL', 1)}[g_{\bm{q}}(\bm{p})]$, where $g_{\bm{q}}$ is the same as in the exact testing analysis over $\R^n$. We perform the same three-step analysis here as above. For the first step, in order to show that $g$ satisfies the degree-$d$ characterization over lattice points, we derive explicit bounds on the TV distance between discrete Gaussians that were implicit in previous literature. For the second step, we follow the argument in the exact case, but we need to ensure that the lattice is large enough so that a nonzero low-degree polynomial is nonzero on at least one lattice point. Finally, in the third step, we extrapolate $g$ from its self-corrected values on lattice points of $\caL'$ inside a small ball $\ball(\bm 0,\srad)$ to lattice points of $\caL$ on which $\caD$ is supported. By the concentration property of $\caD$ and from taking $\caL'$ fine enough, we can find $d+1$ lattice points of $\caL'$ on any line from the origin to a point in $\caL \cap \ball(\bm 0,R)$. This suffices for the extrapolation and the rest of the analysis.

\ignore{
 We show that we can do the self-correction as described above on lattice points inside the ball $\ball(\bm 0, \srad)$, and we extend our local-to-global recovery result for polynomials to this discrete setting. In order to carry through our analysis, we assume that our lattice $\frac{1}{B}\mathbb{Z}^n$ is sufficiently dense --- $B=\Omega ({(dn)}^{2d} )$. This corresponds to assuming that our algorithm is operating with $O(\log n)$ bits of precision. Furthermore, we require that the unknown distribution $\mathcal{D}$ places the majority of its mass on points that are of distance $\mathrm{poly}(n)$ from the origin \ynote{Such an assumption is not mentioned in the theorem statement}.
}


\subsection{Further Remarks}
We leave the question of improving the bounds for the query complexity and the other parameters in \autoref{thm:approximate_low_degree_tester} and \autoref{thm:low_degree_test_discrete} as interesting open problems. Also, it would be very interesting to obtain a separation between the complexities of the exact and approximate testing problems, in terms of query complexity. For the case of $d=1$, we have an improved analysis that appears in \autoref{sec:additivity}.

It is also natural to ask about {\em tolerant testing} \cite{PRR06} in our setting. This is distinct from approximate testing, because in the completeness case, the function is only required to equal a degree-$d$ polynomial $P$ with some probability over the distribution $\caD$ which may be less than 1. Our test should still work under an appropriate choice of parameters, because by the union bound, we can upper bound the probability that one of the queries does not come from $P$. 

Our work also opens the way for investigating the testability of other multivariate functional equations. Is there a general theory that characterizes testability (under natural assumptions) just as there is for finite fields \cite{KS08, BFHHL13}?

\subsection{Organization}
In the following section we discuss some preliminaries used for the exact testing. In  \autoref{sec:exact} we give the full proof for the existence of an exact tester for low-degree polynomials, proving \autoref{thm:low_degree_test}. \autoref{sec:approximate} is devoted to proving \autoref{thm:approximate_low_degree_tester}, giving the approximate tester, wherein in \autoref{sec:chebyshev}, we give more preliminaries needed for the approximate tester. And \autoref{sec:discrete} contains the tester for discrete domains, as specified in \autoref{thm:low_degree_test_discrete}, wherein in \autoref{sec:prelim_lattices} we give some more preliminaries needed for the discrete case. As the exact tester is the starting point for the other settings, in the later sections we rely on the proofs from \autoref{sec:exact}, and show what changes need to be done.
Finally, in \autoref{sec:additivity}, we prove a sub-case of the approximate tester, where $d=1$ and show a better result. While in the other appendices we show full proofs of some intermediate lemmata/theorems, that we skipped in the paper for the convenience of the reader. 

\section{Preliminaries}\label{sec:prelims}
Here we record some notations and definitions which will be used throughout the paper. 
For a positive integer $n$, let $[n] = \{1,2,\ldots,n\}$.
We will reserve non-boldface symbols (such as $a \in \mathbb{R}$) to represent variables and scalars, and we will use boldface (such as ${\bm a} \in \mathbb{R}^n$) to represent vectors. 

For any $S \subseteq \mathbb{R}^n$, we say that $f\colon \mathbb{R}^n \to \mathbb{R}$ is a \emph{polynomial over $S$} if there exists a degree-$d$ polynomial $g:\mathbb{R}^n \to \mathbb{R}$ such that $f(\bm x) = g(\bm x)$ for every $\bm x \in S$.
A \emph{line} is a polynomial of the form ${\bm a} + i{\bm b}$, where $i$ is a variable, and we will denote by $\pline_{\bm a,\bm b} \triangleq \{{\bm a} + i{\bm b} : i \in \mathbb{R} \}$, the set of points on this line. A \emph{radial line} is a line that passes through the origin; that is, a line of the form $\pline_{\bm 0,\bm b}$ for some ${\bm b} \in \mathbb{R}^n$. Throughout this paper, it will be convenient to talk about functions restricted to lines.
For $\bm{a},\bm{b} \in \mathbb{R}^n$, let $f_{\bm{a},\bm{b}}: \mathbb{R} \to \mathbb{R}$ be defined as the restriction of $f$ on $\pline_{\bm{a},\bm{b}}$, i.e., $f_{\bm{a},\bm{b}}(x) = f(\bm{a} + x\bm{b})$. 

\paragraph{Local Characterization of Degree-$d$ Polynomials.}
In order to test whether a univariate function $f$ is consistent with a degree-$d$ polynomial, we will use a characterization of degree-$d$ polynomials which is more amenable to this task. This characterization involves inspecting the \emph{finite forward differences} of $f$, defined as 
\begin{equation}\label{eq:definition-of-difference}
    \Delta_{h}[f](x) \triangleq f(x+h)-f(x),
\end{equation}
for $h \in \mathbb{R}$.  This difference is a linear operator, i.e., for functions $f$, and $g$,
\begin{equation}\label{eq:difference-is-linear}
    \Delta_h[f + g](x) = \Delta_h[f](x) + \Delta_h[g](x).
\end{equation}
Higher order finite forward differences are defined inductively as, 
\begin{equation}\label{eq:higher-order-difference}
     \Delta_{h}^{(m)}[f](x) \triangleq \Delta_h \Big[\Delta_h^{(m-1)}[f] \Big](x)  = {(-1)}^{m+1}\sum_{i=0}^{m}\alpha_i\cdot f(x+ih),
\end{equation} 
where $\alpha_i \triangleq {(-1)}^{i+1}{\binom{m}{i}},m\in\mathbb{Z}_{>1}$, and $\Delta_h^{(1)}=\Delta_h$.
Finite forward differences are related to the standard notion of a derivative, and we explain this further in \autoref{sec:appendix}. 

We will use the following characterization of degree-$d$ polynomials, that follow from well-known results in analysis (see \autoref{sec:appendix} for details).

\begin{restatable}{local_characterization}{localchar}\label{thm:main_poly_characterization}
Let $a,b \in \mathbb{R}$ such that $a<b$, and let $g\colon(a,b)\to\mathbb{R}$ be a univariate, bounded function. If for every $x\in(a,b)$ and sufficiently small $h>0$, such that $a<x<x+(d+1)h<b$, $\Delta_h^{(d+1)}[g](x)= 0$,  then $g$ is a degree-$d$ polynomial.
\end{restatable}

A discrete variant of this theorem, given in \autoref{sec:discrete}, will be used for our lattice-based tester.


\paragraph{Sampling from Gaussian Distributions.} In order to test that the local characterization holds, we will sample points from the $\bm p\sim\cN(\bm 0,\beta I)$ for various values of $\beta$. This is possible, given sampling access to $\cN(\bm 0,I)$, by multiplying sampled vectors with the respective $\sqrt{\beta}$'s, since $\sqrt{\beta}\bm v\sim\cN(\bm 0,\beta I)$, if $\bm v\sim\cN(\bm 0,I)$.

In order to generalize our tester to distributions that need not be centered at the origin, but say, at $\bm c\in\R^n$, we can test the local characterization at points sampled from Gaussians that are centered at such $\bm c$'s. This again is possible, given sampling access to $\cN(\bm 0,I)$, by translating the sampled vectors by the respective $\bm c$'s, since $\bm c+\bm v\sim\cN(\bm c,I)$, if $\bm v\sim\cN(\bm 0,I)$.

Throughout this work, we will need to relate points sampled from different Gaussian distributions. For two distributions $\mathcal{D}$ and $\mathcal{D'}$ on the same domain $\Omega$, the total variation distance between them is defined as $$\dtv(\mathcal{D},\mathcal{D'})\triangleq\frac{1}{2}\int_{\Omega}|\mathcal{D}(x)-\mathcal{D'}(x)|dx.$$
We will use the following lemma (a proof can be found in~\cite{FlemingY20}) to bound the total variation distance between two Gaussian distributions. Let $\| \cdot \|_2$ denote the operator norm on matrices.

\begin{lem}\label{lem:bound-TV-different-means}
  Consider two Gaussian distributions $\mathcal{N}(\bm{\mu}_1,\bm{\Sigma}), \mathcal{N}(\bm{\mu}_2,\bm{\Sigma})$ with shared invertible covariance matrices $\bm{\Sigma} \in \mathbb{R}^{n \times n}$. Then $\dtv(\mathcal{N}(\bm{\mu}_1,\bm{\Sigma}), \mathcal{N}(\bm{\mu}_2,\bm{\Sigma})) \leq \phi$ holds, if $\| \bm{\mu}_1 -\bm{\mu}_2 \|_2 \leq  2 \phi / \sqrt{ \| \bm{\Sigma}^{-1} \|_2}$.
\end{lem}

An immediate corollary of \autoref{lem:bound-TV-different-means} is the following:

\begin{lem}\label{lem:usefulDTVBound}
	For any integer $k \geq 1$, real $\srad>0$, and $\bm{p} \in \mathbb{R}^n$, such that $\|\bm{p}\|_2 \leq \srad$, it follows that $\dtv(\mathcal{N}(\bm{0}, kI), \mathcal{N}(\bm{p}, kI)) \leq k\srad/2$.
\end{lem}
\begin{proof}
    Observe that the spectral norm of $kI$ is $k$, and therefore $\|\bm p\|_2 \sqrt{\|kI\|_2} \leq k\srad$. 
    It follows from \autoref{lem:bound-TV-different-means}, that $\dtv(\mathcal{N}(\bm{0}, kI), \mathcal{N}(\bm{p}, kI)) \leq k\srad/2$.
\end{proof}


\section{Exact Testing}\label{sec:exact}
In this section, we develop a distribution-free tester for low-degree polynomials over the $\mathbb{R}^n$, assuming that we can exactly query the input function. Our tester is given in \autoref{alg:low_degree_main_algorithm} and uses the subroutines given in \autoref{alg:subroutines}. The \Call{CharacterizationTest}{} checks properties of $f$ which will be sufficient to guarantee that $g$ --- the \emph{self-corrected} version of $f$ --- is a degree-$d$ polynomial. 
\Call{Query-$g$}{} retrieves the value of $g(\bm p)$ for a given point $\bm p$ by running the subroutine \Call{Query-$g$-InBall}{}, which in turn obtains the values of $g$ on points within the small ball $\ball(\bm 0, \srad)$ by evaluating $f$.

Recall that in \autoref{thm:low_degree_test}, $f$ is assumed to be bounded in $\ball(\bm 0, L)$, for some $L>0$. Throughout this section, we assume $L=2d\sqrt{n}$. This is without loss of generality as we can define $f': \R^n \to \R$ as $f'(\bm x) = f(\bm xL/(2d\sqrt{n}))$ which is bounded in $\ball(\bm{0}, 2d\sqrt{n})$, and the tester can query $f'$ via queries to $f$. If $f$ is a degree-$d$ polynomial, so is $f'$. If $f$ is $\eps$-far from degree-$d$ polynomials over a distribution $\mathcal{D}$, so is $f'$ over the distribution $\mathcal{D}'$, where a sample $\bm y\sim \mathcal{D}'$ is generated as $\bm y = \frac{2d\sqrt{n}}{L}\bm x$ where $\bm x \sim \mathcal{D}$. 

\paragraph{The Self-Corrected Function.}
As outlined in \autoref{sec:proof_technique}, by sampling points from the standard Gaussian, we will construct a self-corrected version $g$ of the input function $f$ such that, if our tests (in particular \Call{CharacterizationTest}{} in \autoref{alg:subroutines}) pass with sufficiently high probability, then we can guarantee that $g$ is a degree-$d$ polynomial. 
Let $\srad =(3d)^{-6}$, and $\ball(\bm 0,\srad)$ be the \emph{open} ball of radius $\srad$, centered at the origin. We will guarantee that $g$ is a degree-$d$ polynomial for points $\bm p \in\ball(\bm 0,r)$ first, and then extended the characterization to points outside of this ball. The advantage of restricting our attention to this small ball is that for any $\bm p \in \ball(\bm 0,\srad)$, $\bm p +\bm x$ is approximately distributed as $\bm x$.
\begin{figure}
    \centering
    \begin{tikzpicture}
        \filldraw[color=white!0, even odd rule,inner color=red!50,outer color=white] (0,0) circle (3);
        \filldraw[color=black!60, fill=green!10, thick](0,0) circle (1);
        

        \draw[very thick, color=black!60](2,2) -- (-2,-2);
        
       \draw [cyan, very thick, xshift=0] plot [smooth, tension=1] coordinates { (-1.95, -2.1)(-1.8, -1.8)(-1.65, -1.5) (-1.5,-1.5) (-1.35, -1.5) (-1.2,-1.2) (-1.05,-0.9) (-0.9,-0.9) (-0.75, -0.9) (-0.6,-0.6) (-0.45, -0.3) (-0.3,-0.3) (-0.15,-0.3) (0,0) (0.15,0.3) (0.3,0.3) (0.45,0.3) (0.6,0.6) (0.75, 0.9) (0.9,0.9) (1.05, 0.9) (1.2,1.2) (1.35, 1.5) (1.5,1.5) (1.65, 1.5) (1.8,1.8) (1.95, 2.1)};
        
        \filldraw[color=black!60, fill=betterYellow!30, thick](-0.3,-0.3) circle (0.07);
        \filldraw[color=black!60, fill=betterYellow!30, thick](0.2,0.2) circle (0.07);
        \filldraw[color=black!60, fill=betterYellow!30, thick](0.65,0.65) circle (0.07);
        \node[] at (0,-1.4) {\small $\ball(\bm 0,\srad)$};
        \node[] at (-0.3,-0.6) {\small $c_1$};
        \node[] at (0.2,-0.1) {\small $c_2$};
        \node[] at (0.7,0.4) {\small $c_3$};
        

        \node[] at (-2.2,-2.2) {$\pline_{\bm 0, \bm p}$};
        \node[] at (2.1,2.3) {$p_{\bm p}$};
    \end{tikzpicture}
    \caption{The definition of $g(\bm p)$ for $\bm p \not \in \ball( \bm 0, \srad)$. First, a degree-$d$ univariate polynomial $p_{\bm p}$ is defined by the value of $g$ on $d+1$ points $c_i \in \ball(\bm 0, \srad)$ on the line $\pline_{\bm 0, \bm p}$, such that $p_{\bm p}(c_i) = g(\bm p c_i)$. Then, the value of $g(\bm p)$ is defined to be $p_{\bm p}(1)$.}\label{fig:extrapolation}
\end{figure}
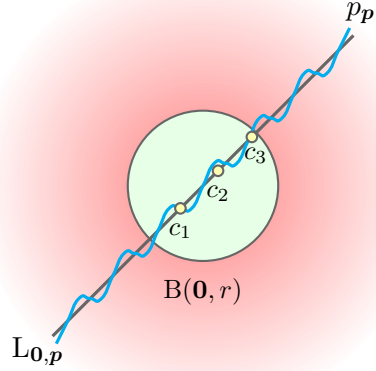

We define $g\colon \mathbb{R}^n \to \mathbb{R}$ formally as follows: let $\alpha_i = {(-1)}^{i+1} {\binom{d+1}{i}}$, and for any $\bm p \in \ball(\bm 0, \srad)$ and $\bm q \in \mathbb{R}^n$,
$g_{\bm q}(\bm p) \triangleq \sum_{i=1}^{d+1} \alpha_i \cdot f(\bm p+i \bm q)$.
The intuition behind $g_{\bm q}(\bm p)$ is that it is the value of the univariate, degree-$d$ polynomial at the point $\bm p$, that is uniquely defined by the $d+1$ evaluations $\{f(\bm p + i \bm q) : i \in [d+1]\}$.
For points $\bm p \in \ball(\bm 0,\srad)$, we define the value of $g$ to be
\[ g(\bm p) \triangleq \maj_{\bm q \sim \mathcal{N}(\bm 0, I)} \left[ g_{\bm q}(\bm p) \right].\]

For points $\bm p \not \in \ball(\bm 0, \srad)$, we define $g(\bm p)$ by interpolating the evaluations of $g$ on points within $\ball(\bm 0,\srad)$ as follows (see \autoref{fig:extrapolation}). Consider the radial line $\pline_{\bm 0, \bm p} = \{x \bm p : x \in \mathbb{R}\}$ and fix $d+1$ (arbitrary) ``distinguished'' points along this line $c_0,\ldots, c_d \in \mathbb{R}$ such that $c_i \bm p \in \ball(\bm 0, \srad)$ for all $i$; in \autoref{alg:subroutines} we choose $c_i = ir/((d+1) \|\bm p|_2)$. Let $p_{\bm p}\colon \mathbb{R}^n \to \mathbb{R}$ be the degree-$d$, univariate polynomial uniquely defined by these $d+1$ points, such that $p_{\bm p}(c_i)= g(c_i \bm p)$, for every $i\in[d+1]$. The value of $g(\bm p)$ is defined as $p_{\bm p}(1)$.
Note that if $g$ was a degree-$d$ polynomial to begin with, then we would indeed have $p_{\bm p}(1) = g(\bm p)$.

\begin{algorithm}[ht]
\caption{Distribution-Free Low-Degree Tester}\label{alg:low_degree_main_algorithm}
\Procedure{\emph{\Call{LowDegreeTester}{$f,d,\mathcal{D},\varepsilon$}}}{
  \Given {Query access to $f\colon \mathbb{R}^n \rightarrow \mathbb{R}$, degree $d\in\mathbb{N}$, sampling access to an unknown distribution $\mathcal{D}$, and farness parameter $\eps > 0$.}
    \textbf{Reject} if \emph{\Call{CharacterizationTest}{}}  \textbf{rejects}\;
    $N_{\ref{alg:low_degree_main_algorithm}} \gets O(\varepsilon^{-1})$\;
    \For{$N_{\ref{alg:low_degree_main_algorithm}}$ times}{
      Sample $\bm{p} \sim \mathcal{D}$\;
      \textbf{Reject} if $f(\bm{p}) \neq$ \emph{\Call{Query-$g$}{$\bm{p}$}} or if \emph{\Call{Query-$g$}{$\bm{p}$}}  \textbf{rejects}.
    }
    \textbf{Accept}.
  }  
\end{algorithm}

\begin{algorithm}[!t]
\caption{Subroutines}\label{alg:subroutines}
[Recall $\alpha_{i}\triangleq (-1)^{i+1}\binom{d+1}{i}$.]\\
\Procedure{\emph{\Call{CharacterizationTest}{}}}{
$N_{\ref{alg:subroutines}} \gets O(d^2)$ \;
\For{$N_{\ref{alg:subroutines}}$ times }{
\For{$j\in \{1,\dots, d+1\}$}{
    \For{$t\in \{0,\dots, d+1\}$}{
      Sample $\bm{p}\sim\mathcal{N}(\bm{0},j^2(t^2+1) I),\bm{q}\sim\mathcal{N}(\bm{0}, I)$;\Comment{[$j^2(t^2+1)$ vs. $1$ Test.]}\\
      \textbf{Reject} if $\sum_{i=0}^{d+1}\alpha_i\cdot f(\bm{p}+i\bm{q})\neq 0$\;
      Sample $\bm{p}\sim\mathcal{N}(\bm{0},j^2 I),\bm{q}\sim\mathcal{N}(\bm{0},(t^2+1) I)$;\Comment{[$j^2$ vs. $t^2+1$ Test.]}\\
      \textbf{Reject} if $\sum_{i=0}^{d+1}\alpha_i\cdot f(\bm{p}+i\bm{q})\neq 0$\;
      }
      Sample $\bm{p},\bm{q}\sim\mathcal{N}(\bm{0},j^2 I)$;\Comment{[$j^2$ vs. $j^2$ Test.]}\\
      \textbf{Reject} if $\sum_{i=0}^{d+1}\alpha_i\cdot f(\bm{p}+i\bm{q})\neq 0$\;
    }}
    \textbf{Accept}\;
    }

\Procedure{\emph{\Call{Query-$g$}{$\bm{p}$}}}{
$\srad\gets (3d)^{-6}$\;
\If{$\bm{p}\in \ball(\bm{0},\srad)$}{
\textbf{return} \emph{\Call{Query-$g$-InBall}{$\bm{p}$}}\;}

\For{$i\in \{1,\dots,d+1\}$}{
 $c_i \gets i\srad/((d+1)\Vert \bm{p}\Vert_{2})$\;
 $v(c_i)\gets$
\emph{\Call{Query-$g$-InBall}{$c_i\bm{p}$}}\;
}
Let $p_{\bm p} \colon \mathbb{R}\to\mathbb{R}$ be the unique degree-$d$ polynomial such that $p_{\bm p}(c_i) = v(c_i)$  for $i \in [d+1]$\;
\textbf{return} $p_{\bm p}(1)$\;
}

\Procedure{\emph{\Call{Query-$g$-InBall}{$\bm{p}$}}}{
$N'_{\ref{alg:subroutines}} \gets O\big(\log\frac{1}{\varepsilon}\big)$\;
Sample $\bm q_1,\dots,\bm q_{N'_{\ref{alg:subroutines}}}\sim \mathcal{N}(\bm 0, I)$\;
\textbf{Reject} if there exists $j\in\set{2,\dots,N'_{\ref{alg:subroutines}}}$ such that $ \sum_{i=1}^{d+1} \alpha_i \cdot f(\bm{p}+i\bm{q}_1)\neq \sum_{i=1}^{d+1} \alpha_i \cdot f(\bm{p}+i\bm{q}_j)$\;
\textbf{return} $ \sum_{i=1}^{d+1} \alpha_i \cdot f(\bm{p}+i\bm{q}_1)$\;
}
\end{algorithm}

The following lemma records the properties of $g$ that will be guaranteed by our tester. 

\begin{restatable}{lem}{exactmain}\label{lem:main_lemma_well_definedness_and_querying_g}
    If \Call{CharacterizationTest}{} fails with probability at most $2/3$, then $g$ is a degree-$d$ polynomial, and furthermore for any $\bm{p}\in\mathbb{R}^n$,  $g(\bm{p})=$ \Call{Query-$g$}{$\bm{p}$} with probability at least $1-\frac{\varepsilon}{2}$.
\end{restatable}

We prove the main theorem of this section assuming  \autoref{lem:main_lemma_well_definedness_and_querying_g} holds; we restate it next for convenience.

\exact*

\begin{proof}[{Proof of \autoref{thm:low_degree_test}}]


First we analyze the query complexity.
\Call{CharacterizationTest}{} performs $O(d^2)$ independent tests, each of which requires $O(d)$ evaluations of $f$, and is repeated  $N_{\ref{alg:subroutines}}=O(d^2)$ times. \Call{Query-$g$-InBall}{} samples $N'_{\ref{alg:subroutines}}=O(\log(1/\eps))$ points, each requiring $O(d)$ evaluations of $f$. \Call{Query-$g$}{} picks $O(d)$ points in $\ball(\bm 0,\srad)$ and calls \Call{Query-$g$-InBall}{} on them. \Call{LowDegreeTester}{} calls \Call{CharacterizationTest}{} once, and then calls \Call{Query-$g$}{}, $N_{\ref{alg:low_degree_main_algorithm}}=O(1/\eps)$ times. Altogether, our algorithm makes 
$O(d^5+\frac{d^2}{\eps}\log(\frac{1}{\eps}))$ queries.

Next, we argue that the tester is correct. If $f$ is a degree-$d$ polynomial, then it accepts with probability $1$. Indeed, in this case $f$ restricted to a line $\bm{p} + i {\bm q}$ is also a degree-$d$ polynomial, $g = f$, and all of the tests pass with probability $1$.

Now, assume that $f$ is $\varepsilon$-far from any degree-$d$ polynomial (according to $\mathcal{D}$). If \Call{CharacterizationTest}{} fails with probability at least $2/3$, then we reject with probability at least $2/3$. Otherwise, 
by \autoref{lem:main_lemma_well_definedness_and_querying_g}, $g$ is a degree-$d$ polynomial and so
$\Pr_{\bm p \sim \mathcal{D}}[f(\bm p) \neq g(\bm p)] > \varepsilon$. 
The probability that we do not reject in any of the $N_{\ref{alg:low_degree_main_algorithm}}$ steps of \autoref{alg:low_degree_main_algorithm} is at most the probability that $f(\bm{p})=g(\bm{p})$ or that  \emph{\Call{Query-$g$}{$\bm{p}$}}, instead of rejecting, returned some value other than $g(\bm{p})$. The latter happens with probability at most $\frac{\varepsilon}{2}$ by \autoref{lem:main_lemma_well_definedness_and_querying_g}, and so
\[\Pr_{\bm{p}\sim \mathcal{D}}[f(\bm{p})=g(\bm{p}) \lor g(\bm{p}) \neq \Call{Query-\mbox{$g$}}{\bm{p}}]\leq 1-\varepsilon +\frac{\varepsilon}{2}\leq 1-\frac{\varepsilon}{2}.\]
Thus, \autoref{alg:low_degree_main_algorithm} accepts with probability at most $(1-\frac{\varepsilon}{2})^{N_{\ref{alg:low_degree_main_algorithm}}}< \frac{1}{3}$, by choosing the constant in  $N_{\ref{alg:low_degree_main_algorithm}}=O(\varepsilon^{-1})$ to be sufficiently large. \qedhere

\end{proof}

In the remainder of this section we will prove \autoref{lem:main_lemma_well_definedness_and_querying_g}. First, in \autoref{sec:poly-rep-in-ball}, we show that $g$ agrees with a degree-$d$, univariate polynomial on every line segment in $\ball(\bm 0, \srad)$. Then, we show that $g$ is consistent with a degree-$d,n$-variate polynomial within $\ball(\bm 0, \srad)$. Finally, by the fact that for points outside $\ball(\bm 0,\srad)$, $g$ is defined by interpolating evaluations out of $\ball(\bm 0,\srad)$, we show that it is a degree-$d,n$-variate polynomial on $\mathbb{R}^n$.

\subsection{Polynomial Representation on Every Line Within the Ball}\label{sec:poly-rep-in-ball}

We will prove that if \Call{CharacterizationTest}{} passes with high probability, then $g$ is consistent with a degree-$d$ polynomial when projected to any line segment that lies within the open ball $\ball(\bm 0, \srad)$ for $\srad = (3d)^{-6}$.

For $\bm{a},\bm{b}\in \ball(\bm{0},\srad)$, we will denote by $\pline_{\bm{a},\bm{b}}^{\ball}$ the line segment obtained by restricting the line $\pline_{\bm{a},\bm{b}}$ to the ball $\ball(\bm{0},\srad)$. 
The main theorem of this section states that evaluations of $g$ on every point on any line segment within the open ball $\ball(\bm{0},\srad)$, are consistent with a unique, univariate, degree-$d$ polynomial.

\begin{thm}{(Polynomial Representation on Lines)}\label{lem:g-on-lines-in-B-is-poly}
If \Call{CharacterizationTest}{} fails with probability at most $2/3$, and $f$ is bounded on $\ball(\bm 0,2d\sqrt{n})$,
then for every $\bm{a},\bm{b}\in \ball(\bm 0,\srad)$, 
the univariate function $g_{\bm a, \bm b}(x) = g(\bm a + x \bm b)$ 
defined on points $x \in \pline_{\bm a,\bm b}^{\ball}$ is a degree-$d$, univariate polynomial. 
\end{thm}

In order to prove this theorem we will need the following auxiliary lemmas.

\begin{lem}\label{thm:characterization_in_small_ball}
    If \Call{CharacterizationTest}{} fails with probability at most $2/3$, then for every
    $\bm{p},\bm{q}\in \ball(\bm{0},\srad)$, for all sufficiently small $h>0$, such that $\bm p+ih\bm q\in\ball(\bm 0,r)$ for every $i\in[d+1]$, $\sum_{i=0}^{d+1}\alpha_i\cdot g(\bm{p}+ih\bm{q})=0$.
\end{lem}

\begin{lem}\label{lem:g-is-bounded}
If $f$ is bounded on $\ball(\bm 0,2d\sqrt{n})$, then $g$ is bounded on $\ball(\bm 0,\srad)$.
\end{lem}

We prove \autoref{lem:g-on-lines-in-B-is-poly} assuming these lemmas, and prove them afterwards. 

\begin{proof}[Proof of \autoref{lem:g-on-lines-in-B-is-poly}] Since $f$ is bounded on $\ball(\bm 0,2d\sqrt{n})$, by \autoref{lem:g-is-bounded}, $g$ is bounded on $\ball(\bm 0,\srad)$. Fix some $\bm{a},\bm{b}\in \ball(\bm{0},\srad)$. We would like to show that $g_{\bm a, \bm b}(x)$ is consistent with a degree-$d$ polynomial on every point $x$ in $\{x \in \mathbb{R} : \bm a + x \bm b \in \ball(\bm 0, \srad) \}$; fix such an $x$. By the \localCharacterization, it suffices to show that for all sufficiently small $h>0$, satisfying $\bm a+(x+ih)\bm b\in\pline_{\bm a,\bm b}^{\ball}$ for every $i\in[d+1]$,
\[  \Delta^{(d+1)}_h[g_{\bm a, \bm b}](x) =\sum_{i=0}^{d+1}\alpha_i\cdot g_{\bm a,\bm b}(x+ih) = \sum_{i=0}^{d+1} \alpha_i \cdot g(\bm a + x \bm b + i h \bm b)= 0. \]
From \autoref{thm:characterization_in_small_ball}, it follows that $\sum_{i=0}^{d+1}\alpha_i\cdot g(\bm{p}+ih\bm{q}) =0$ for every $\bm{p},\bm{q}\in \ball(\bm{0},\srad)$ and all sufficiently small $h>0$, satisfying $\bm p+ih\bm q\in\ball(\bm 0,r)$ for every $i\in[d+1]$. Let $\bm p \triangleq \bm a + x \bm b$ and $\bm q \triangleq \bm b$. Observe that $\bm p,\bm q\in\ball(\bm 0,\srad)$, and therefore since $\ball(\bm 0, \srad)$ is an open ball, $\bm p + ih\bm q \in \ball(\bm 0, \srad)$ for every $i\in[d+1]$.
Thus,
\[  \Delta^{(d+1)}_h[g_{\bm a, \bm b}](x) =  \sum_{i=0}^{d+1} \alpha_i \cdot g(\bm a + x \bm b + i h \bm b) =   \sum_{i=0}^{d+1} g(\bm p + ih\bm q) = 0. \qedhere \]
\end{proof}

In the remainder of this subsection we prove \autoref{thm:characterization_in_small_ball} and \autoref{lem:g-is-bounded}. For this, it will be convenient to let $\rho$ denote the \emph{smallest} upper-bound on the  probability that each of the tests in the  \Call{CharacterizationTest}{} failed.
That is, for every $j\in [d+1]$ and $t\in \{0,\dots,d+1\}$, $\rho$ is the smallest value such that
 \begin{align}
      \Pr_{\substack{\bm{p}\sim\mathcal{N}(\bm{0},j^2(t^2+1) I)\\\bm{q}\sim\mathcal{N}(\bm{0}, I)}}\left[\sum_{i=0}^{d+1}\alpha_i\cdot f(\bm{p}+i\bm{q})\neq 0\right]&\leq \rho, && \text{[$j^2(t^2+1)$ vs. $1$ Test.]} \label{eq:test_bound_1}\\
      \Pr_{\substack{\bm{p}\sim\mathcal{N}(\bm{0},j^2 I)\\\bm{q}\sim\mathcal{N}(\bm{0},(t^2+1) I)}}\left[\sum_{i=0}^{d+1}\alpha_i\cdot f(\bm{p}+i\bm{q})\neq 0\right]&\leq \rho, && \text{[$j^2$ vs. $t^2+1$ Test.]}\label{eq:test_bound_2}\\
      \Pr_{\substack{\bm{p}\sim\mathcal{N}(\bm{0},j^2 I)\\\bm{q}\sim\mathcal{N}(\bm{0},j^2 I)}}\left[\sum_{i=0}^{d+1}\alpha_i\cdot f(\bm{p}+i\bm{q})\neq 0\right]&\leq \rho. && \text{[$j^2$ vs. $j^2$ Test.]}\label{eq:test_bound_3}
  \end{align}

A bound on the rejection probability of \Call{CharacterizationTest}{} implies the following bound on $\rho$.

\begin{claim}\label{claim:rho-is-small}
If \Call{CharacterizationTest}{} fails with probability at most $2/3$, then $\rho$ is at most $(30d)^{-2}$.
\end{claim}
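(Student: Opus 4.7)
The plan is to prove the claim by contrapositive: I will show that if any individual test has failure probability exceeding $(30d)^{-2}$, then \Call{CharacterizationTest}{} rejects with probability greater than $2/3$. The key observation is that each of the three test types, for each fixed pair $(j,t)$, is executed $N_{\ref{alg:subroutines}}$ times independently during \Call{CharacterizationTest}{}, and the overall test rejects as soon as any single one of these executions rejects. Consequently, the rejection probability of \Call{CharacterizationTest}{} is lower bounded by the probability that some single test type for some fixed $(j,t)$ rejects on at least one of its $N_{\ref{alg:subroutines}}$ independent trials.

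Concretely, suppose for contradiction that $\rho > (30d)^{-2}$. By the definition of $\rho$ as the smallest bound making \eqref{eq:test_bound_1}--\eqref{eq:test_bound_3} hold simultaneously, at least one of these inequalities is attained with a value exceeding $(30d)^{-2}$; that is, there exist indices $j \in [d+1]$, $t \in \{0,\ldots,d+1\}$, and a test type such that a single execution of that test rejects with probability $p > (30d)^{-2}$. Since the samples across the $N_{\ref{alg:subroutines}}$ outer iterations are drawn independently, the probability that none of the $N_{\ref{alg:subroutines}}$ repetitions of this specific test rejects is $(1-p)^{N_{\ref{alg:subroutines}}} \le e^{-p N_{\ref{alg:subroutines}}}$.

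Choosing the hidden constant in $N_{\ref{alg:subroutines}} = O(d^2)$ to be sufficiently large (specifically, at least $900 \ln 3$), we obtain $p N_{\ref{alg:subroutines}} > (30d)^{-2} \cdot 900 d^2 \ln 3 = \ln 3$, so $e^{-p N_{\ref{alg:subroutines}}} < 1/3$. Hence the probability that this specific test rejects on at least one of its $N_{\ref{alg:subroutines}}$ trials is strictly greater than $2/3$, and since \Call{CharacterizationTest}{} rejects whenever any trial rejects, its rejection probability is also strictly greater than $2/3$, contradicting the hypothesis. This proves the contrapositive, establishing $\rho \le (30d)^{-2}$.

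The proof is essentially a standard probability-amplification argument, so no step should be a genuine obstacle; the only subtlety is to carefully track that the $N_{\ref{alg:subroutines}}$ repetitions really are independent (which follows since each iteration draws fresh Gaussian samples $\bm p, \bm q$), and to pick the constant inside $N_{\ref{alg:subroutines}} = O(d^2)$ large enough to convert the $(30d)^{-2}$ per-trial gap into a $2/3$ overall gap.
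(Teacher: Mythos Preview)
Your proof is correct and takes essentially the same approach as the paper: both argue by contrapositive that if some individual test rejects with probability exceeding $(30d)^{-2}$, then across $N_{\ref{alg:subroutines}}=O(d^2)$ independent repetitions the overall acceptance probability drops below $1/3$. Your version is more explicit about the constant and the independence of repetitions, but the underlying argument is identical.
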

\begin{proof}
Each of the tests \eqref{eq:test_bound_1},~\eqref{eq:test_bound_2} and \eqref{eq:test_bound_3} are invoked $N_{\ref{alg:subroutines}}=O(d^2)$ in the \Call{CharacterizationTest}{}. If any of these tests fail with probability more than $1/(30d)^{2}$, then \Call{CharacterizationTest}{} passes with probability at most $(1-\frac{1}{(30d)^2})^{O(d^2)}<1/3$, which contradicts our assumption.
\end{proof}

The proof of \autoref{thm:characterization_in_small_ball} will heavily rely on the fact that if $\rho$ is small then  $g_{\bm q_1}$ and $g_{\bm q_2}$ agree on points in $\ball(\bm 0, \srad)$ with high probability. 

\begin{lem}\label{lem:well_defined_of_g}
   For every $\bm{p}\in \ball(\bm{0},\srad)$, and every $t\in  \{0,\ldots, d+1\}$, $$\Pr_{\substack{\bm{q}_1\sim\mathcal{N}(\bm{0},(t^2+1)I)\\ \bm{q}_2\sim\mathcal{N}(\bm{0},I)}}\left[g_{\bm{q}_1}(\bm{p})\neq g_{\bm{q}_2}(\bm{p})\right]\leq 4d\rho+48d^5\srad. $$ 
\end{lem}
\begin{proof}
    Let $t\in \{0,\ldots, d+1\}$ and fix some $\bm{p}\in \ball(\bm{0},\srad)$. We will bound the probability that $g_{\bm{q}_1}(\bm{p})$ and $q_{\bm{q}_2}(\bm{p})$ are different from $\sum_{i=1}^{d+1}\sum_{j=1}^{d+1} \alpha_i \alpha_j \cdot f(\bm{p}+i\bm{q}_1+j\bm{q}_2)$; the lemma will then follow by a union bound.
    
    By definition, $g_{\bm{q}_2}(\bm{p}) = \sum_{i=1}^{d+1} \alpha_i \cdot f(\bm{p}+i\bm{q}_2)$. Fixing $i \in [d+1]$, we have
\begin{align*}
	&\Pr_{\substack{\bm{q}_1\sim\mathcal{N}(\bm{0},(t^2+1)I)\\ \bm{q}_2\sim\mathcal{N}(\bm{0},I)}} [ f(\underbrace{\bm{p}+i \bm{q}_1}_{\triangleq\bm m}) \neq g_{\bm{q}_2}(\bm{p}+i\bm{q}_1) ]
	= \Pr_{\substack{\bm m \sim \mathcal{N}(\bm p,i^2(t^2+1)I)\\
	\bm{q}_2 \sim \mathcal{N}(\bm{0},I)}} \Big[ f(\bm m) \neq \sum_{j=1}^{d+1} \alpha_j \cdot f(\bm{m} +j \bm{q}_2) \Big]  \\
	&\leq \Pr_{\substack{\bm{m}\sim\mathcal{N}(\bm{0},i^2(t^2+1)I)\\ \bm{q}_2\sim\mathcal{N}(\bm{0},I)}} \Big[ \sum_{j=0}^{d+1} \alpha_j \cdot f(\bm{m} +j \bm{q}_2) \neq 0\Big] + 2\dtv(\mathcal{N}(\bm{0},i^2(t^2+1)I), \mathcal{N}(\bm{p},i^2(t^2+1)I))   \\
	&\leq \rho + i^2(t^2+1)\srad  \tag{By \eqref{eq:test_bound_1} and \autoref{lem:usefulDTVBound}}\leq \rho+20d^4\srad.
\end{align*}

By a similar calculation, for every $j \in [d+1]$, we have that
\begin{align*}
	&\Pr_{\substack{\bm{q}_1\sim\mathcal{N}(\bm{0},(t^2+1)I)\\ \bm{q}_2\sim\mathcal{N}(\bm{0},I)}} \big[ f(\underbrace{\bm{p}+j\bm{q}_2}_{\triangleq\bm m}) \neq g_{\bm{q}_1}(\bm{p}+j\bm{q}_2) \big]
	=\Pr_{\substack{ \bm{m}\sim\mathcal{N}(\bm{p},j^2I)\\ \bm{q}_1\sim\mathcal{N}(\bm{0},(t^2+1)I)}}\Big[f(\bm{m})\neq\sum_{i=1}^{d+1}\alpha_i\cdot f(\bm{m}+i\bm{q}_1)\Big]\\
	&\leq\Pr_{\substack{ \bm{m}\sim\mathcal{N}(\bm{0},j^2I)\\ \bm{q}_1\sim\mathcal{N}(\bm{0},(t^2+1)I)}}\left[\sum_{i=0}^{d+1}\alpha_1\cdot f(\bm{m}+i\bm{q}_1)\neq 0\right] + 2\dtv(\mathcal{N}(\bm{0},j^2I),\mathcal{N}(\bm{p},j^2I))\\
	&\leq \rho + j^2\srad \tag{By \eqref{eq:test_bound_2} and \autoref{lem:usefulDTVBound}}\leq \rho+4d^2\srad.
\end{align*}

Taking a union bound over $i \in [d+1]$ and $j \in [d+1]$ respectively, it follows that
\begin{align*}
	\Pr_{\substack{\bm{q}_1\sim\mathcal{N}(\bm{0},(t^2+1)I)\\ \bm{q}_2\sim\mathcal{N}(\bm{0},I)}} \Bigg[ \underbrace{\sum_{i=1}^{d+1} \alpha_i  \cdot f(\bm{p}+i\bm{q}_1)}_{=g_{\bm{q}_1}(\bm{p})} \neq \sum_{i=1}^{d+1} \sum_{j=1}^{d+1} \alpha_i \alpha_j  \cdot f((\bm{p}+i\bm{q}_1) + j\bm{q}_2) \Bigg] &\leq (d+1)(\rho+20d^4\srad),\\
	\Pr_{\substack{\bm{q}_1\sim\mathcal{N}(\bm{0},(t^2+1)I)\\ \bm{q}_2\sim\mathcal{N}(\bm{0},I)}} \Bigg[ \underbrace{\sum_{j=1}^{d+1} \alpha_j \cdot f(\bm{p}+j\bm{q}_2)}_{=g_{\bm{q}_2}(\bm{p})} \neq \sum_{j=1}^{d+1} \sum_{i=1}^{d+1} \alpha_i \alpha_j \cdot f((\bm{p}+j\bm{q}_2) + i\bm{q}_1) \Bigg] &\leq (d+1)(\rho +4d^2\srad).
\end{align*}
The first inequality is at most $2d\rho +40d^5\srad$, while the second is at most $2d\rho+8d^3\srad$.
Thus, by a union bound over the two previous inequalities we can conclude that 
\[ \Pr_{\substack{\bm{q}_1\sim\mathcal{N}(\bm{0},(t^2+1)I)\\ \bm{q}_2\sim\mathcal{N}(\bm{0},I)}} [g_{\bm{q}_1}(\bm{p}) \neq g_{\bm{q}_2}(\bm{p})] \leq 4d\rho+48d^5\srad. \qedhere \]
\end{proof}

The next corollary follows immediately by instantiating the parameters in the previous lemma. 

\begin{cor}\label{cor:replacement-is-good}
    If \Call{CharacterizationTest}{} fails with probability at most $2/3$, then for every $\bm p \in \ball(\bm 0,\srad)$ and every  $t\in \{0,\ldots, d+1\}$,
    \[ \Pr_{\bm{q}\sim\mathcal{N}(\bm{0},(t^2+1)I)}[g(\bm p)\neq g_{\bm{q}}(\bm{p})] \leq \frac{1}{7d}. \]
\end{cor}
\begin{proof}
 Observe that for any $t \in \{0,\ldots, d+1\}$, and any $\bm p\in\ball(\bm 0,\srad)$,
    \begin{align*}
        \Pr_{\bm{q}\sim\mathcal{N}(\bm{0},(t^2+1)I)}[g(\bm p)\neq g_{\bm{q}}(\bm{p})]&\leq\Pr_{\bm{q}_1\sim\mathcal{N}(\bm{0},I)}[g(\bm{p})\neq g_{\bm{q}_1}(\bm{p})]+\Pr_{\substack{\bm{q}\sim\mathcal{N}(\bm{0},(t^2+1)I)\\ \bm{q}_1\sim\mathcal{N}(\bm{0},I)}}[g_{\bm{q}_1}(\bm p)\neq g_{\bm{q}}(\bm{p})].
    \end{align*}
    By \autoref{lem:well_defined_of_g}, this is at most $2(4d\rho+48d^5\srad)$, where for the first term, we have used the fact that $g_{\bm q_1}(\bm p)$ is defined as the majority of $\bm q \sim \mathcal{N}(\bm 0,I)$.
   By \autoref{claim:rho-is-small} and by our choice of $\srad =(3d)^{-6}$, this probability is at most $1/(7d)$.
\end{proof}

We are now ready to prove \autoref{thm:characterization_in_small_ball}.

\begin{proof}[Proof of \autoref{thm:characterization_in_small_ball}]
Fix $\bm{p},\bm{q} \in \ball(\bm{0},\srad)$, and let $h>0$ be such that $\bm p +ih \bm q \in \ball(\bm 0,\srad)$ for every $i\in[d+1]$; note that $h$ exists as $\ball(\bm 0,\srad)$ is an \emph{open} ball containing $\bm p$. We will argue that the following hold simultaneously with non-zero probability over $\bm{q}_1,\bm{q}_2 \sim \mathcal{N}(\bm{0},I)$:
\begin{equation}\label{eq:replacement-is-good}
    \sum_{i=0}^{d+1} \alpha_i \cdot g(\bm{p}+ih\bm{q}) =\sum_{i=0}^{d+1} \alpha_i\cdot g_{\bm{q}_1+i\bm{q}_2}(\bm{p}+ih\bm{q}),
\end{equation}
\begin{equation}\label{eq:post-replacement-is-good}
    \sum_{i=0}^{d+1} \alpha_i \cdot f(\bm{p}+ j\bm{q}_1+ i(h\bm{q}+j\bm{q}_2)) =0\text{, for every }j \in [d+1].
\end{equation}

We will complete the proof assuming that these bounds hold. Fix any $\bm{q}_1,\bm{q}_2$ satisfying both~\eqref{eq:replacement-is-good}, and~\eqref{eq:post-replacement-is-good}. Then,
\begin{align*}
    \sum_{i=0}^{d+1} \alpha_i \cdot g(\bm{p}+ih\bm{q}) &= \sum_{i=0}^{d+1} \alpha_i\cdot g_{\bm{q}_1+i\bm{q}_2}(\bm{p}+ ih\bm{q}) \tag{By~\eqref{eq:replacement-is-good}} \\
    &=\sum_{i=0}^{d+1} \alpha_i\left( \sum_{j=1}^{d+1} \alpha_j \cdot f(\bm{p}+ih\bm{q}+j(\bm{q}_1+i\bm{q}_2))\right) \\
    &=\sum_{j=1}^{d+1} \alpha_j\left( \sum_{i=0}^{d+1} \alpha_i \cdot f(\bm{p}+j\bm{q}_1+i(h\bm{q}+j\bm{q}_2))\right) \\
    &= \sum_{j=1}^{d+1}\alpha_j \cdot 0 = 0.\tag{By~\eqref{eq:post-replacement-is-good}}
\end{align*}

 Next, we argue that~\eqref{eq:replacement-is-good} and~\eqref{eq:post-replacement-is-good} hold separately with sufficiently high probability.
{\allowdisplaybreaks
\begin{align*}
    \Pr_{\bm{q}_1,\bm{q}_2 \sim \mathcal{N}(\bm{0},I)}[\eqref{eq:replacement-is-good}]&=\Pr_{\bm{q}_1,\bm{q}_2 \sim \mathcal{N}(\bm{0},I)} \bigg[\sum_{i=0}^{d+1} \alpha_i \cdot g(\bm{p}+ih\bm{q}) =\sum_{i=0}^{d+1}\alpha_i\cdot g_{\bm{q}_1+i\bm{q}_2}(\bm{p}+ih\bm{q})\bigg]	\\ 
	&\geq \Pr_{\bm{q}_1,\bm{q}_2 \sim \mathcal{N}(\bm{0},I)} \left[g(\bm{p}+ih\bm{q}) = g_{\bm{q}_1+i\bm{q}_2}(\bm{p}+ih\bm{q}),~~ \forall i \in \{0,\ldots, d+1\} \right] \\
	&=1-\Pr_{\bm{q}_1,\bm{q}_2\sim\mathcal{N}(\bm{0},I)} \left[\exists i\in\{0,\ldots,d+1\}:g(\bm{p}+ih\bm{q}) \neq g_{\bm{q}_1+i\bm{q}_2}(\bm{p}+ih\bm{q}) \right] \\
	&\geq 1-\sum_{i=0}^{d+1} \Pr_{\bm{q}_1,\bm{q}_2 \sim \mathcal{N}(\bm{0},I)}\left[g(\bm{p}+ih\bm{q}) \neq g_{\bm{q}_1+i\bm{q}_2}(\bm{p}+ih\bm{q}) \right] \tag{By union bound} \\
	&= 1-\sum_{i=0}^{d+1}\Pr_{\bm{m}\sim\mathcal{N}(\bm{0},(i^2+1)I)}\left[g(\bm{p}+ih\bm{q})\neq g_{\bm{m}}(\bm{p}+ih\bm{q}) \right] \tag{Letting $\bm{m}\triangleq\bm{q}_1+i\bm{q}_2$} \\
	&\geq 1-\frac{d+2}{7d}>\frac{1}{2}. \tag{Applying \autoref{cor:replacement-is-good}, as $\bm{p}+ih\bm{q}\in \ball(\bm{0},\srad)$}
\end{align*}
}

For~\eqref{eq:post-replacement-is-good}, fix some $j \in [d+1]$, then
\begin{align*}
    &\Pr_{\bm{q}_1,\bm{q}_2 \sim \mathcal{N}(\bm{0},I)} \bigg[ \sum_{i=0}^{d+1} \alpha_i \cdot f(\underbrace{\bm{p}+j\bm{q}_1}_{\triangleq\bm{z}_1} + i (\underbrace{h\bm{q}+j\bm{q}_2}_{\triangleq\bm{z}_2} )) \neq 0 \bigg]= \Pr_{\substack{\bm{z}_1 \sim \mathcal{N}(\bm{p},j^2I) \\ \bm{z}_2 \sim \mathcal{N}(h\bm{q},j^2I)}} \bigg[ \sum_{i=0}^{d+1} \alpha_i \cdot f(\bm{z}_1 + i \bm{z}_2) \neq 0\bigg] \\
    &\leq  \Pr_{\substack{\bm{z}_1 \sim \mathcal{N}(\bm{0},j^2I) \\ \bm{z}_2 \sim \mathcal{N}(\bm{0},j^2I)}} \bigg[ \sum_{i=0}^{d+1} \alpha_i f(\bm{z}_1 + i \bm{z}_2) \neq 0\bigg]+ 2 ( \dtv(\mathcal{N}(\bm{0},j^2I), \mathcal{N}(\bm{p},j^2I)) + \dtv(\mathcal{N}(\bm{0},j^2I), \mathcal{N}(h\bm{q},j^2I) ) ) \\
    &\leq  \rho +j^2\srad + j^2 h\srad\leq \rho + 8d^2\srad. \tag{By \eqref{eq:test_bound_3} and \autoref{lem:usefulDTVBound}}
\end{align*}
By a union bound over all $j \in [d+1]$, 
\[ \Pr_{\bm{q}_1,\bm{q}_2 \sim \mathcal{N}(\bm{0},I)}[\eqref{eq:post-replacement-is-good}]=\Pr_{\bm{q}_1,\bm{q}_2 \sim \mathcal{N}(\bm{0},I)} \bigg[ \forall j \in [d+1], \sum_{i=0}^{d+1} \alpha_i \cdot f(\bm{p}+j\bm{q}_1 + i(h\bm{q}+j\bm{q}_2)) =0 \bigg] \geq 1-(2d\rho + 16d^3\srad), \]
which is at least $2/3$ by our choice of $r=(3d)^{-6}$ and \autoref{claim:rho-is-small}.
A final union bound over~\eqref{eq:replacement-is-good}, and~\eqref{eq:post-replacement-is-good} concludes that both hold simultaneously with non-zero probability.
\end{proof}

Finally, in order to conclude that $g$ is indeed a polynomial by using \localCharacterization on lines within $\ball(\bm 0, \srad)$, we will argue that $g$ is bounded in $\ball(\bm 0, \srad)$

\begin{proof}[Proof of \autoref{lem:g-is-bounded}]
It suffices to prove $g_{\bm q}(\bm p)$ is bounded for every $\bm p\in\ball(\bm 0,\srad)$, and every $\bm q\in\R^n$ such that $g_{\bm q}(\bm p) = g(\bm p)$. By \autoref{cor:replacement-is-good}, $g(\bm p) = g_{\bm q}(\bm p)$ with probability at least $1-1/7d$ for $\bm q \sim \mathcal{N}(\bm 0,I)$. By \cite[Theorem 2.9]{blum_hopcroft_kannan_2020}, at least $99\%$ of the mass in $\cN(\bm 0,I)$ lies in the annulus $|r-\sqrt{n}|\leq 20$. Therefore, we can conclude that $g(\bm p)$ agrees with $g_{\bm q}(\bm p)$ for $\bm q$ satisfying $|\|\bm q\|_2-\sqrt{n}|\leq 20$. Note that $g_{\bm q}(\bm p)$ depends only on $\{f(\bm p+i\bm q)\}_{i=0}^{d+1}$, and $\max_i\{\|\bm p+i\bm q\|_2\}\leq (d+2)\max\{\|\bm p\|_2,\|\bm q\|_2\}\leq(d+2)(\sqrt{n}+20)$. Thus, if $f$ is bounded on $\ball(\bm 0,2d\sqrt{n})$, then $g$ is bounded on $\ball(\bm 0,\srad)$. 
\end{proof}

\subsection{Polynomial Representation Within a Hypercube}\label{sec:poly-rep-in-cube}

Let $m \in \mathbb{R}$ be be the largest value, such that the hypercube ${[-m,m]}^n$ is strictly contained within the open ball $\ball(\bm 0,\srad)$; in particular, $\srad/ (2 \sqrt{n})\leq m<\srad/\sqrt{n}$. 
We show that if the conditions of \autoref{lem:g-on-lines-in-B-is-poly} are met, then $g$ is consistent with a degree-$d$ multivariate polynomial on ${[-m,m]}^n$.
This is done in two steps; first, in~\autoref{lem:local_to_global} we show that $g$ is consistent with a finite bounded degree polynomial.
Then, in~\autoref{lem:radial_lines_consistent_force_low_degree}, we show that this degree can be reduced to $d$.

Let $\bm e_i$ denote the $i$th standard basis vector, defined as $\bm e_{i,j} = 0$ if $j \neq i$ and $\bm e_{i,i} =1$.

\begin{lem}{(Local to Global)}
    \label{lem:local_to_global}
    Let $m > 0$, and let $h\colon {[-m,m]}^n \to \mathbb{R}$.
    If for every $i \in [n]$, and $\bm a \in {[-m,m]}^n$ such that $ a_i = 0$, the restriction of $h$ to the line segment $\pline_{\bm a, \bm e_i}$, the univariate function $h_{\bm a, \bm e_i}\colon[-m,m]\to\R$ is consistent with a degree-$d$ univariate polynomial on the interval $[-m,m]$, then $h$ is consistent with an $n$-variate polynomial of degree at most $dn$.

\end{lem}

\begin{proof}
We will show that $h$ is a degree-$dn$ polynomial  by induction on the dimension $n$. For the base case when $n=1$, we have that $h=h_{\bm 0,  1}$ and therefore is of degree $d$ by assumption.

Assume the statement is true for dimension $n-1$. 
Let $c\in[-m,m]$ and define $h^{(c)}\colon[-m,m]^{n-1}\to\mathbb{R}$ as 
\[h^{(c)}(x_1,\dots,x_{n-1}) \triangleq h(x_1,\dots,x_{n-1},c).\]
We will argue that $h^{(c)}$ satisfies the conditions of \autoref{lem:local_to_global}: 
Fix $i\in [n-1]$, and  $\bm{a}\in{[-m,m]}^{n-1}$ with $a_i=0$, and define $\bm a^\uparrow=(a_1,\dots,a_{n-1},c) \in {[-m,m]}^n$ to be an extension of $\bm a$ to dimension $n$.  By assumption, $h_{\bm a^\uparrow, \bm e_i}\colon[-m,m]\to\mathbb{R}$ is a degree-$d$ polynomial. For every $x\in[-m,m]$, we have
\[h_{\bm a^\uparrow, \bm e_i}(x)=h(a_1,\dots,a_{i-1},x,a_{i+1},\dots,a_{n-1},c)=h^{(c)}(\bm{a}+x\bm{e_i})=h^{(c)}_{\bm{a,e_i}}(x),\]
and so $h^{(c)}_{\bm{a,e_i}}(x)$ is a degree-$d$ polynomial on the domain $[-m,m]$. Thus, by the inductive hypothesis we can conclude that $h^{(c)}\colon{[-m,m]}^{n-1}\to\mathbb{R}$ is a degree-$d(n-1)$ multivariate polynomial.
\begin{figure}
    \centering
    \begin{tikzpicture}[scale=0.9]
        \filldraw[color=black!60, fill=green!9, very thick](3,0.25) circle (3.9);
        \filldraw[color=black!60, fill=red!5, very thick](0,-1) -- (2,-3) -- (6,-2) -- (6,1.5) -- (4,3.5) -- (0,2.5);
        \draw[color=black!60, very thick] (4,0)-- (4,3.5) -- (6,1.5) -- (6,-2) -- cycle;
        \draw[color=purple!20, very thick](3.2,-0.2)-- (3.2,3.3) -- (5.2,1.3) -- (5.2,-2.2) -- cycle;
        \filldraw[color=purple!60, fill=betterYellow!25, very thick](2.4,-0.4)-- (2.4,3.1) -- (4.4,1.1) -- (4.4,-2.4) -- cycle;
        \draw[color=purple!20, very thick] (0.8,-0.8)-- (0.8,2.7) -- (2.8,0.7) -- (2.8,-2.8) -- cycle;
        \draw[color=purple!20, very thick] (1.6,-0.6)-- (1.6,2.9) -- (3.6,0.9) -- (3.6,-2.6) -- cycle;
        \draw[color=black!60, very thick] (0,-1)-- (0,2.5) -- (2,0.5) -- (2,-3) -- cycle;
        \node[text width=1cm] at (3.1,-3) {$c_0$};
        \node[text width=1cm] at (4,-2.8) {$c_1$};
        \node[text width=1cm] at (4.8,-2.6) {$c_2$};
        \node[text width=1cm] at (5.6,-2.4) {$c_3$};
        \node[text width=1.2cm] at (3,-4) {$\ball(\bm 0,\srad)$};
        \node[text width=1.3cm] at (1.4,3.2) {\small ${[-m,m]}^n$};
        \draw[color=black!60, very thick, <->] (6.3,1.6) -- (4.3,3.6);
        \filldraw[fill=green!9,  color =green!9] (5.4,2.6) circle (6pt);
		\node[text width=1cm] at (5.78,2.55)  { $2m$};
		\draw [cyan, very thick, xshift=4cm] plot [smooth, tension=1] coordinates { (-1.6,0.7) (-1.1,1.5) (-0.7,-0.5) (0,0.1) (0.4,-2.4)};
		\draw[color=purple!60, very thick](2.4,-0.4)-- (2.4,3.1) -- (4.4,1.1) -- (4.4,-2.4) -- cycle;
		\draw[color=black!60, very thick] (2,-3) -- (6,-2);
        \draw[color=black!60,  very thick] (0,-1) -- (2.4,-.4);
        \draw[color=black!60,  very thick] (0,2.5) -- (4,3.5);
        \draw[color=black!60, very thick] (2,0.5) -- (6,1.5);
		\node[text width=1cm] at (3.1,0) {$h^{(c_2)}$};
    \end{tikzpicture}
    \caption{The construction of the polynomial $h(a_1,\ldots, a_{n-1})$. $d+1$ slices of the cube ${[-m,m]}^n$ are chosen, where the $i$th slice corresponds to setting $x_n = t = c_i$. The picture depicts setting $t = c_2$ and thus $\delta_{c_i}(t) = 0$ for all $i \neq 2$ and $\delta_{c_2}(t) = 1$, selecting the polynomial representation $h^{(c_2)}$ of $h$ on the $2$nd slice.}\label{fig:local_to_global}
\end{figure}
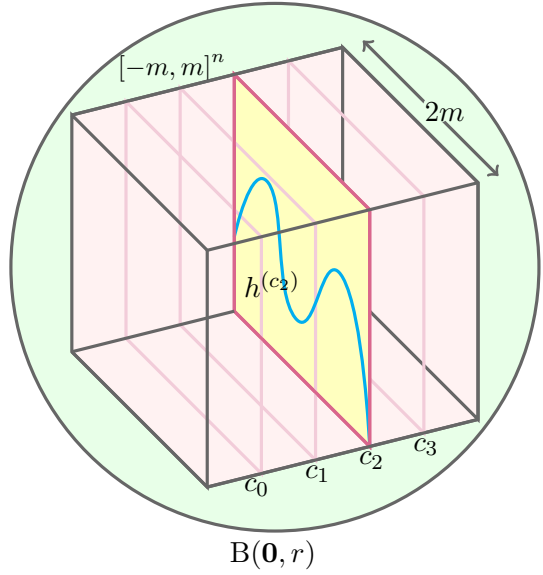

It remains to show that $h$ is a degree-$dn$ polynomial. Let $c_0,c_1,\dots, c_d\in[-m,m]$ be any $d+1$ distinct values. Denote by $\delta_{c_i}$ the unique degree-$d$ polynomial satisfying 

\begin{equation}\label{eq:def-delta}
    \delta_{c_i}(c_j)\triangleq\begin{cases}1 & i=j,\\0 &i\neq j.\end{cases}
\end{equation} 
 Using $\delta_{c_i}$, we will show that $h$ can be written as a polynomial of degree $dn$.

First, we claim that for every fixed $\bm{a}\in {[-m,m]}^{n-1}$ and variable $t$,
  \[h(a_1,\dots,a_{n-1},t) = \sum_{i=0}^{d}\delta_{c_i}(t)h^{(c_i)}(a_1,\dots,a_{n-1}).\]
  To see this, observe that  $h^{(c_i)}(a_1,\ldots, a_{n-1})$ is a constant and therefore $\delta_{c_i}(t) h^{(c_i)}(a_1,\ldots, a_{n-1})$ is a degree-$d$ polynomial. Thus, $\sum_{i=0}^{d}\delta_{c_i}(t)h^{(c_i)}(a_1,\dots,a_{n-1})$ and $h(a_1,\ldots, a_{n-1},t)$ are degree-$d$ polynomials (the latter is by assumption). Furthermore, these degree-$d$ polynomials agree on the $d+1$ distinct points  $c_0,\ldots, c_d$ and therefore they must be equal.
As this equality holds for every $(\bm a,t) \in {[-m,m]}^n$, it follows that for every $\bm x \in {[-m,m]}^n$,
  \[h(x_1,\dots,x_n)= \sum_{i=0}^{d}\underbrace{\delta_{c_i}(x_n)}_{\text{degree } d}\underbrace{h^{(c_i)}(x_1,\dots,x_{n-1})}_{\text{degree } d(n-1)},\]
  which is a degree $dn$ representation of $h$.
 \end{proof}
 
\begin{lem}{(Degree Reduction)}\label{lem:radial_lines_consistent_force_low_degree}
Let $\alpha \in \mathbb{N},m>0$, and $h\colon\R^n\to\R$ be a multivariate polynomial of finite degree $\alpha$. If for every radial line segment in the cube ${[-m,m]}^n$, the restriction of $h$ to that line segment is consistent with a polynomial of degree at most $d$, then $\alpha\leq d$.
\end{lem}

\begin{figure}
    \centering
    \begin{tikzpicture}
        \filldraw[color=black!60, fill=green!9, very thick](0,0) circle (2.2);
        \filldraw[color=black!0, very thick, fill= red!5] (-1.5,-1.5) -- (1.5,-1.5) -- (1.5,1.5) -- (-1.5,1.5) -- cycle;
		\node[text width=1cm] at (0,-2.5)  { $\ball(\bm 0, \srad)$};
		\node[text width=1.3cm] at (0,1.8)  { ${[-m,m]}^n$};
		\node[text width=1cm] at (1.99,0.4)  { $\pline_{\bm 0, \bm b}$};
		\draw[color=red!50, very thick] (-1.5,-1.5) -- (1.5,1.5);
		\draw[color=red!50, very thick] (-1.5,1.5) -- (1.5,-1.5);
		\draw[color=red!50, very thick] (-1.5,0) -- (1.5,0);
		\draw[color=red!50, very thick] (0,1.5) -- (0,-1.5);
		
		\draw[color=red!50, very thick] (0.75,1.5) -- (-0.75,-1.5);
		\draw[color=red!50, very thick] (-0.375,1.5) -- (0.375,-1.5);
		\draw[color=red!50, very thick] (0.375,1.5) -- (-0.375,-1.5);
		
		\draw[color=red!50, very thick] (-0.75,1.5) -- (0.75,-1.5);
		\draw[color=red!50, very thick] (-1.125,1.5) -- (1.125,-1.5);
		\draw[color=red!50, very thick] (1.125,1.5) -- (-1.125,-1.5);
		
		\draw[color=red!50, very thick] (1.5,0.75) -- (-1.5,-0.75);
		\draw[color=red!50, very thick] (1.5,1.125) -- (-1.5,-1.125);
		\draw[color=red!50, very thick] (1.5,-1.125) -- (-1.5,1.125);
		
		\draw[color=red!50, very thick] (1.5,-0.75) -- (-1.5,0.75);
		\draw[color=red!50, very thick] (1.5,-0.375) -- (-1.5,0.375);
		\draw[color=blue!50, very thick] (1.5,0.375) -- (-1.5,-0.375);
		\draw[color=black!60, very thick] (-1.5,-1.5) -- (1.5,-1.5) -- (1.5,1.5) -- (-1.5,1.5) -- cycle;
    \end{tikzpicture}
    \caption{The radial lines $\pline_{\bm 0, \bm b}$ within the hypercube ${[-m,m]}^n$ in two dimensions.}\label{fig:bounded_to_d}
\end{figure}
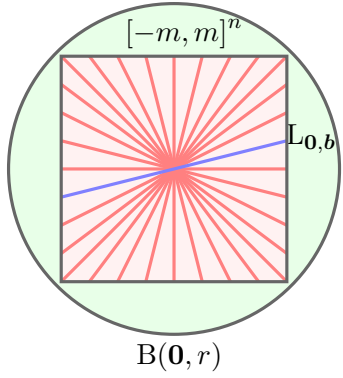

\begin{proof}
Fix some $\bm b \in {[-m,m]}^n$ and consider the radial line $\pline_{\bm 0, \bm b}$. The $n$-variate polynomial $h$, restricted to this line, $h_{\bm 0, \bm b}(x) = h(x \bm b)$ for $x$ such that $x\bm b\in {[-m,m]}^n$, can be written as
\[h(x \bm{b})=\sum_{k=0}^{\alpha}\sum_{i_1+\cdots+i_n=k}c_{i_1,\dots,i_n}\prod_{j=1}^{n}{(x b_j)}^{i_j}=\sum_{k=0}^{\alpha}\bigg(\sum_{i_1+\cdots+i_n=k}c_{i_1,\dots,i_n}\prod_{j=1}^{n}b_j^{i_j}\bigg)x^k,\]
which is a univariate degree-$\alpha$ polynomial in $x$.
Consider the coefficient $c_{\alpha}$ of $x^\alpha$ in $h(x \bm b)$ as a function of $\bm b$,
\[c_\alpha(\bm b) \triangleq \sum_{i_1+\cdots+i_n=\alpha}c_{i_1,\dots,i_n}\prod_{j=1}^{n}b_j^{i_j},\]
this is a $n$-variate polynomial of degree $\alpha$ in the variables $\bm b$. 
Note that $c_\alpha \not\equiv 0$, as otherwise $h$ would have degree less than $\alpha$. 
Fix some $\bm b=\bm b^* \in [-m,m]^n$ such that $c_\alpha(\bm b^*) \neq 0$, such a point exists since $c_{\alpha}$ has finite number of roots, and view $x$ as the only variable; as $c_\alpha(\bm b^*) \neq 0$, $h_{\bm 0, \bm b^*}$ is a univariate polynomial of degree $\alpha$. However, by assumption $h_{\bm 0, \bm b^*}(x)$ has degree at most $d$, and hence $\alpha \leq d$.
\end{proof}

\subsection{Polynomial Representation Everywhere}\label{sec:poly-rep-everywhere}

We are now ready to prove that $g$ is a degree-$d$ polynomial over $\mathbb{R}^n$.

\begin{lem}\label{thm:g-is-mult-d-poly} If \Call{CharacterizationTest}{} fails with probability at most $2/3$, then $g$ is a degree-$d$, $n$-variate polynomial. 
\end{lem}
\begin{proof}
Consider the largest $n$-dimensional hypercube $\cube\triangleq{[-m,m]}^n$ that can be inscribed in the open ball $\ball(\bm{0},\srad)$. 
By \autoref{lem:g-on-lines-in-B-is-poly}, $g$ restricted to any line segment $\pline_{\bm{p},\bm{q}}^{\ball} = \pline_{\bm p, \bm q} \cap \ball(\bm 0,\srad)$ within the ball $\ball(\bm 0, \srad)$ is consistent with a univariate degree-$d$ polynomial, and therefore the same holds for $g$ restricted to any line segment $\pline_{\bm{p},\bm{q}}^{\cube}$, as $\cube \subset \ball(\bm 0,\srad)$. 

By \autoref{lem:local_to_global} and \autoref{lem:radial_lines_consistent_force_low_degree}, we can conclude that $g(\bm{x})\colon {[-m,m]}^n\to\mathbb{R}$ is consistent with a polynomial of degree at most $d$ within $\cube$. Hence, for every $\bm \alpha\in\mathbb{N}^n$ such that $\|\bm\alpha\|_1\leq d$, there exists $c_{\bm\alpha}\in\R$, such that for every $\bm x \in \cube$, we can write
\begin{align}
    g(\bm{x})=\sum_{\bm{\alpha}\in\mathbb{N}^n:\|\bm{\alpha}\|_1\leq d}c_{\bm{\alpha}}\prod_{j=1}^{n}x_j^{\alpha_j}. \label{eq:coefficient_representation}
\end{align}

Next, we argue that $g$ is also consistent with this polynomial representation for every point within $\ball(\bm 0, \srad)$. By \autoref{lem:g-on-lines-in-B-is-poly}, for any $\bm y \in  \ball(\bm{0},\srad) \setminus \cube$ and $x \in \mathbb{R}$, for which $x \bm y \in \ball(\bm 0, \srad)$, it follows that $g(x \bm y)$ has a unique representation as a univariate polynomial. This polynomial must be consistent with (\ref{eq:coefficient_representation}) on any point $x' \bm y \in \cube$, with $x' \in \mathbb{R}$. As these are both polynomials (agreeing on at least $(d+1)$ points), it follows that both polynomial must be consistent on any point on the line segment $\pline_{\bm 0, \bm y}^{\ball}$. As we know that $g$ is consistent with the univariate representation within $\ball(\bm 0, \srad)$, it follows that the representation (\ref{eq:coefficient_representation}) holds for $g(\bm y)$ for any $\bm y \in \ball(\bm 0,\srad)$.

It remains to argue that $g$ is consistent with this degree-$d$ polynomial representation everywhere. Recall that we defined $g(\bm p)$ for $\bm p \not \in \ball(\bm 0, \srad)$, by extrapolating its representation within $\ball(\bm 0,\srad)$ along line $\pline_{\bm 0, \bm p}$, to obtain a representation of $g(x \bm p)$ as a degree-$d$ (univariate) polynomial. 

Thus, $g$ is consistent with a degree-$d$, $n$-variate polynomial over $\mathbb{R}^n$.

\end{proof}

Finally, having established \autoref{thm:g-is-mult-d-poly}, we are ready to prove \autoref{lem:main_lemma_well_definedness_and_querying_g}, restated here for convenience.

\exactmain*

\begin{proof}[Proof of \autoref{lem:main_lemma_well_definedness_and_querying_g}] 
Suppose that \Call{CharacterizationTest}{} fails with probability at most $2/3$, then by \autoref{thm:g-is-mult-d-poly}, $g$ is a degree-$d$ polynomial. 
It remains to bound the probability that $g(\bm p) \neq$ \Call{Query-$g$}{$\bm p$} for $\bm p \in \mathbb{R}^n$. 
To query $g$ on a point $\bm{p} \in \mathbb{R}^n$, \Call{Query-$g$}{$\bm p$} call  \Call{Query-$g$-InBall}{$\bm p$} if $\bm p\in\ball(\bm 0, \srad)$ or otherwise it attempts to obtain $d+1$ distinct points on the line segment $\pline_{\bm 0, \bm p}^{\ball}$ using \Call{Query-$g$-InBall}{$\cdot$} for each  and then interpolate $g$ along this line. For each of these $d+1$ points $\bm s$, \Call{Query-$g$-InBall}{$\bm s$} samples an additional $N'_{\ref{alg:subroutines}}$ points $\bm q_1,\ldots, \bm q_{N'_{\ref{alg:subroutines}}}\sim\cN(\bm 0,I)$, and checks whether  
\[\sum_{i \in [d+1]} \alpha_i \cdot f(\bm s + i \bm q_1) = \sum_{i \in [d+1]} \alpha_i \cdot f(\bm s + i \bm q_j),\]
for all $j \in [N'_{\ref{alg:subroutines}}]$; it fails if any of these checks fail. Note that by the definition of $g_{\bm q}$, this is equivalent to checking whether $g_{\bm q_1}(\bm s) \neq g_{\bm q_j}(\bm s)$. 
By \autoref{cor:replacement-is-good}  the probability that $g_{\bm q_1}(\bm s) \neq g(\bm s)$ is  at most $1/(7d)$, since $\bm s\in \ball(\bm 0,\srad)$. The probability that \Call{Query-$g$-InBall}{$\bm s$} returns an incorrect value is the probability that $g(\bm s) \neq g_{\bm q_1}(\bm s) = g_{\bm q_j}(\bm s)$ for every $\bm q_j$, which is at most ${(7d)}^{-N'_{\ref{alg:subroutines}}}\leq \frac{\varepsilon}{4d}$ by choosing $N'_{\ref{alg:subroutines}} = O(\log (1/ \varepsilon))$. As \Call{Query-$g$}{$\bm p$} evaluate at most $d+1$ points  using \Call{Query-$g$-InBall}{$\cdot$}, the probability that these points are all evaluated correctly, is at least $1-\varepsilon/2$.
\end{proof}

\section{Approximately Testing Polynomials}\label{sec:approximate}

In this section, we generalize our polynomial tester to be robust against noise. Given query access to a function $f\colon \mathbb{R}^n \to \mathbb{R}$ bounded on the ball $\ball(\bm 0, 2d\sqrt{n})$,
and sampling access to an unknown $(\varepsilon/4, R)$-concentrated distribution $\caD$, and constants $\alpha, \varepsilon>0,\beta\geq\alpha$, a \emph{point-wise approximate tester} for degree-$d$ polynomials is an algorithm that distinguishes between the following two cases with probability at least $2/3$:
\begin{itemize}
    \item \textbf{Yes Case:} There exists a degree-$d$ polynomial $h\colon \mathbb{R}^n \to \mathbb{R}$ such that for every $\bm x \in \mathbb{R}^n$,
    \[|f(\bm x)-h(\bm x)| \leq \alpha;\]
    \item \textbf{No Case:} For any degree-$d$ polynomial $h\colon\mathbb{R}^n \to\mathbb{R}$, 
    \[ \Pr_{\bm x \sim \mathcal{D}} \left[|f(\bm x) - h(\bm x)| > \beta\right] > \varepsilon. \]
\end{itemize}

An alternative interpretation of this model is as follows: we would like to design a low-degree tester for a function $f^*\colon \mathbb{R}^n \to \mathbb{R}$; however, on every $\bm p \in \mathbb{R}^n$, we are only able to obtain ``noisy'' evaluations of $f(\bm p)$ within an accuracy of up to $\alpha$.  We represent this by giving the tester query-access to a function $f\colon \mathbb{R}^n \to \mathbb{R}$, such that for every $\bm p \in \mathbb{R}^n$, 
\[ |f^*(\bm p) - f(\bm p)| \leq \alpha .\] 
This setup is quite natural, and captures the setting in which we are only able to observe a small number of bits of precision of the evaluations of $f^*(\bm p)$. The main theorem of this section is the following.\footnote{We note that it is possible to relax the condition on $f$ to be bounded in $\ball(\bm 0, L)$ for some known $L>0$. This then leads to $\beta$ being dependent on $L$ as well. To avoid complicating the parameters, we have chosen to present the less general theorem here.}

\approximateLowDegree*

Our self-corrected function $g$ will be the same as the self-corrected function in the exact case, with one small twist: We use the median rather than the majority, as the median is more robust to errors. 

\paragraph{The Self-Corrected Function.} Let $\srad$ be sufficiently small ($r = (4d)^{-6}$ suffices). We first define our self-corrected function for the points $\bm p \in \ball(\bm 0, \srad)$ as the (weighted) median value of $g_{\bm q}(\bm p) \triangleq \sum_{i=1}^{d+1}\alpha_i\cdot f(\bm p+i\bm q)$, weighted according to the probability of $\bm q \sim \mathcal{N}(\bm 0,I)$. That is, 
\[ g(\bm p) \triangleq \underset{\bm q \sim \mathcal{N}(\bm 0, I)}{\mathsf{med}}[g_{\bm q}(\bm p)]. \]
For points $\bm p \not \in \ball(\bm 0, \srad)$ we define the value of $g$ by extrapolating it from within the ball $\ball(\bm 0, \srad)$ along the radial line $\pline_{\bm 0, \bm p}$. To do so, we will interpolate a univariate polynomial on the line  $\pline_{\bm 0, \bm p}$ using the evaluation of $g$ on $d+1$ points in $\ball(\bm 0, \srad)$. 
For our analysis, it will be convenient to take these points to be  $c_0,\ldots, c_d$, where\footnote{These are the Chebyshev nodes of the $(d+1)$-st Chebyshev polynomial, scaled to lie on $\pline_{\bm 0, \bm p} \cap \ball(\bm 0, \srad)$, as in \autoref{sec:chebyshev}.} $c_i \triangleq (r/\|\bm p\|_2) \cos(\pi(i+1/2)/(d+1))$. Let $p_{\bm p}$ be the unique univariate degree-$d$ polynomial such that $p_{\bm p}(  c_i) = g(\bm p c_i)$ for all $i$. Then, we define $g(\bm p) \triangleq p_{\bm p}(1)$.

Our tester is given in \autoref{alg:low_degree_main_algorithm_approx}, with subroutines in \autoref{alg:subroutines_approx}.

\begin{algorithm}[h]
\caption{Low-Degree Approximate Tester}\label{alg:low_degree_main_algorithm_approx}
\Procedure{\emph{\Call{LowDegreeApproxTester}{$f,d,\mathcal{D},\alpha,\eps,\lrad$}}}{
  \Given {Query access to $f\colon \mathbb{R}^n \to \mathbb{R}$, a degree $d\in\mathbb{N}$, sampling access to an unknown $(\varepsilon/4, R)$-concentrated distribution $\mathcal{D}$, a noise parameter $\alpha > 0$, and a farness parameter $\varepsilon> 0$. 
  }
   $\delta\gets 2^{d+1}\alpha$\;
   $r\gets (4d)^{-6}$\;
    \textbf{Reject} if \emph{\Call{ApproxCharacterizationTest}{}}  \textbf{rejects}\;
    \For{$N_{\ref{alg:low_degree_main_algorithm_approx}} \gets O(\varepsilon^{-1})$ times}{
      Sample $\bm{p} \sim \mathcal{D}$;\\
      \If{$\bm p\in\ball(\bm 0,\lrad)$}{
      \textbf{Reject} if $|f(\bm{p}) -$ \emph{\Call{ApproxQuery-$g$}{$\bm{p}$}}$|>2\cdot 2^{(2n)^{45d}}\lrad^d\delta$, or if \emph{\Call{ApproxQuery-$g$}{$\bm{p}$}}  \textbf{rejects}.}
    }
    \textbf{Accept}.
  }  
\end{algorithm}
\begin{algorithm}[!t]
\caption{Approximate Subroutines}\label{alg:subroutines_approx}
[Recall $\alpha_{i}\triangleq (-1)^{i+1}\binom{d+1}{i}$ and $\delta=2^{d+1}\alpha$.]\\
\Procedure{\emph{\Call{ApproxCharacterizationTest}{}}}{
$N_{\ref{alg:subroutines_approx}} \gets O(d^2)$ \;
\For{$N_{\ref{alg:subroutines_approx}}$ times }{
\For{$j\in \{1,\dots, d+1\}$}{
    \For{$t\in \{0,\dots, d+1\}$}{
      Sample $\bm{p}\sim\mathcal{N}(\bm{0},j^2(t^2+1) I),\bm{q}\sim\mathcal{N}(\bm{0}, I)$;\Comment{[$j^2(t^2+1)$ vs. $1$ Test.]}\\
      \textbf{Reject} if $|\sum_{i=0}^{d+1}\alpha_i\cdot f(\bm{p}+i\bm{q})|>\delta$\;
      Sample $\bm{p}\sim\mathcal{N}(\bm{0},j^2 I),\bm{q}\sim\mathcal{N}(\bm{0},(t^2+1) I)$;\Comment{[$j^2$ vs. $t^2+1$ Test.]}\\
      \textbf{Reject} if $|\sum_{i=0}^{d+1}\alpha_i\cdot f(\bm{p}+i\bm{q})|>\delta$\;
      }
      Sample $\bm{p},\bm{q}\sim\mathcal{N}(\bm{0},j^2 I)$;\Comment{[$j^2$ vs. $j^2$ Test.]}\\
      \textbf{Reject} if $|\sum_{i=0}^{d+1}\alpha_i\cdot f(\bm{p}+i\bm{q})|>\delta$\;
    }}
    \textbf{Accept}\;
    }

\Procedure{\emph{\Call{ApproxQuery-$g$}{$\bm{p}$}}}{
\If{$\bm{p}\in \ball(\bm{0},\srad)$}{
\textbf{return} \emph{\Call{ApproxQuery-$g$-InBall}{$\bm{p}$}}\;}

\For{$i\in \{0,1,\hdots,d\}$}{
 $c_i\gets \frac{\srad}{\Vert \bm{p}\Vert_{2}} \cos\big(\frac{\pi(i+1/2)}{d+1}\big)$\;
 $v({c_i})\gets$ \emph{\Call{ApproxQuery-$g$-InBall}{$c_i\bm p$}}  \;
}
Let $p_{\bm p} \colon \mathbb{R}\to\mathbb{R}$ be the unique degree-$d$ polynomial such that $p_{\bm p}(c_i) = v(c_i)$  for $i \in \{0,\hdots,d\}$\;
\textbf{return} $p_{\bm p}(1)$\;
}

\Procedure{\emph{\Call{ApproxQuery-$g$-InBall}{$\bm{p}$}}}{
$N'_{\ref{alg:subroutines_approx}} \gets O(\log\frac{1}{\varepsilon})$\;
Sample $\bm q_1,\dots,\bm q_{N'_{\ref{alg:subroutines_approx}}}\sim \mathcal{N}(\bm 0, I)$\;
\textbf{Reject} if there exists $j\in\set{2,\dots,N'_{\ref{alg:subroutines_approx}}}$ such that $ |\sum_{i=1}^{d+1} \alpha_i \cdot f(\bm{p}+i\bm{q}_1)- \sum_{i=1}^{d+1} \alpha_i \cdot f(\bm{p}+i\bm{q}_j)|>2^{d+2}\delta$\;
\textbf{return} $ \sum_{i=1}^{d+1} \alpha_i \cdot f(\bm{p}+i\bm{q}_1)$\;
}
\end{algorithm}

\paragraph{Bridging the gap between Median and Majority.} The following lemma will allow us to port the techniques that we used in \autoref{sec:exact}, where $g$ was defined as a \emph{majority} over the standard gaussian, to our setting where $g$ is defined as a median. This lemma gives sufficient conditions for the median of any distribution to be close to a random element.

\begin{restatable}{lem}{medIsClose}
\label{thm:median-is-close}
Let $\Omega$ be a sample space, $g:\Omega\to\R$ and $\caD$ be a distribution over $\Omega$. For any $\eta \in [0, 1/4]$, $\delta \in \mathbb{R}$, if
 $\Pr_{\bm{q}_1,\bm{q}_2\sim\caD}[|g(\bm q_1)-g(\bm q_2)|<\delta]>1-\eta,$ then $\Pr_{\bm{q}_1\sim\caD}[|g_{\med}-g(\bm q_1)|<\delta]>1-4\eta$, where $g_{\med}=\mathsf{med}_{\bm q\sim\caD}\{g(\bm q)\}$.
\end{restatable}
The proof is given in \autoref{sec:appendixB}.

\subsection{Preliminaries on Chebyshev Polynomials}\label{sec:chebyshev}

Our proof will heavily rely on properties of the Chebyshev polynomials (of the first kind), which we recall next; further details on Chebyshev polynomials can be found in \cite{mason2002chebyshev}. Denote by $T_d(x)$, the $d$-th Chebyshev polynomial. $T_d$ is a degree-$d$ polynomial and has $d$ roots $\hat c_i \triangleq \cos(\pi(i+1/2)/d)$ for $i \in \{0,\ldots, d-1\}$ in the interval $[-1,1]$, known as \emph{Chebyshev nodes}. 
On the interval $[-1,1]$, the extrema of the Chebyshev polynomials are either $-1$ or $1$, and thus we have
\begin{equation}\label{eq:chebyshev_polys_bound}
    x\in[-1,1]\implies |T_d(x)|\leq 1.
\end{equation}
Chebyshev polynomials form a basis of polynomials, and in particular satisfy the following orthogonality properties.

\paragraph{Orthogonality.}
The polynomials $T_d$ are orthogonal with respect to the weight function $w(x) \triangleq (1-x^2)^{-1/2}$
on the interval $[-1,1]$. Formally, 
\begin{equation}\label{eq:chebyshev_orthogonality_property}
    \int _{-1}^{1}T_n(x)\,T_m(x)\,{\frac {\mathrm {d} x}{\sqrt {1-x^{2}}}}= \begin{cases} 0 &\mbox{if } n \neq m, \\ \pi &\mbox{if } n = m =0, \\ 
\pi/2 &\mbox{if } n=m \neq 0.
\end{cases}
\end{equation}

\paragraph{Discrete orthogonality.} The polynomials $T_d$ are also discretely orthogonal:
\begin{equation}\label{eq:Chebyshev_poly_discrete_orthogonality}
    \sum _{k=0}^{d}{T_{i}(\hat c_{k})\,T_{j}(\hat c_{k})}={\begin{cases}0&~{\text{ if }}~i\neq j,\\d+1&~{\text{ if }}~i=j=0,\\{\frac {d+1}{2}}&~{\text{ if }}~i=j\neq 0,\end{cases}}
\end{equation}
where $d\geq \max(i, j)$, and the $\hat c_k$ are the $d+1$ Chebyshev nodes of $T_{d+1}$.

The following lemma will be useful throughout our proof, and follows in a straightforward fashion from properties of Chebyshev polynomials. 

\begin{lem}\label{lem:stability_univariate_interpolation}
Let $f\colon\R\to \R$ be a univariate polynomial of degree at most $d$. And let $ \hat c_0,\ldots, \hat c_d$ be the Chebyshev nodes of $T_{d+1}$. If $|f(\hat c_k)|\leq \varepsilon$ for every $k\in \{0,\ldots, d\}$, then for every $x\in[-1,1]$ 
\[|f(x)|\leq \sqrt{2}(d+1)\varepsilon .\]

\end{lem}
\begin{proof}
Let $f(x)=\sum_{i=0}^d\alpha_iT_i(x)$ be the Chebyshev expansion of $f$. Since for every $k$, we have $f(\hat c_k)^2\leq\varepsilon^2$, 
\[\sum_{k=0}^{d}f(\hat c_k)^2\leq (d+1)\varepsilon^2. \]
On the other hand, 
\begin{align*}
    \sum_{k=0}^{d}f(\hat c_k)^2&=\sum_{k=0}^{d}\left(\sum_{i=0}^d\alpha_iT_i(\hat c_k)\right)^2\\
    &=\sum_{k=0}^{d}\sum_{i,j=0}^d\alpha_i\alpha_jT_i(\hat c_k)T_j(\hat c_k)\\
    &=\sum_{i,j=0}^d\alpha_i\alpha_j\sum_{k=0}^{d}T_i(\hat c_k)T_j(\hat c_k)\\
    &\geq \frac{d+1}{2}\sum_{i=0}^d\alpha_i^2. \tag{By (\ref{eq:Chebyshev_poly_discrete_orthogonality})}
\end{align*}

Combining the above bounds, we have that $|\alpha_i|\leq \sqrt{2}\varepsilon$ for every $i\in\{0,\dots,d\}$. Thus, for every $x\in[-1,1]$, 
\begin{align*}
    |f(x)|=\bigg|\sum_{i=0}^d\alpha_iT_i(x)\bigg|
    \leq \sum_{i=0}^d|\alpha_i\|T_i(x)|
    \leq (d+1)\sqrt{2}\varepsilon. \tag{by (\ref{eq:chebyshev_polys_bound})}
\end{align*}

\end{proof}

By scaling the Chebyshev nodes, we can obtain the following corollary, which is a scaled version of \autoref{lem:stability_univariate_interpolation} to any given interval, rather than $[-1,1]$. 

\begin{cor}\label{lem:stability_univariate_interpolation_scaled}
Let $f$ be a univariate degree-$d$ polynomial, let $m\in \mathbb{R}_{>0}$, and  $c_{k} \triangleq m\cos\big({\frac{\pi }{d+1}(k+1/2)}\big)$ for $k\in \{0,\ldots, d\}$ be the Chebyshev nodes of $T_{d+1}$ scaled to the interval $[-m,m]$.
If $|f(c_k)|\leq \varepsilon$ for every $k\in \{0,\ldots, d\}$, then for any $x\in[-m,m]$,
\[|f(x)|\leq \sqrt{2}(d+1)\varepsilon .\]
\end{cor}
\begin{proof}
In the proof of \autoref{lem:stability_univariate_interpolation} we represent $f(x)=\sum_{i=0}^d\alpha_i T_i(x/m)$ as a linear combination of the Chebyshev polynomials with the back-scaled variable. The other parts of the proof are the same.
\end{proof}

\subsection{Correctness of the Approximate Polynomial Tester}

In the remainder of this section we will argue the correctness of our tester (\autoref{thm:approximate_low_degree_tester}). The next lemma records the properties of $g$ that it guarantees. 

\begin{lem}\label{lem:main_approx_poly}
    Let $\srad=(4d)^{-6}$, $\delta=2^{d+1}\alpha$, as  set in \autoref{alg:subroutines_approx}, and $\lrad>\srad$. 
    If \Call{ApproxCharacterizationTest}{} fails with probability at most $2/3$, then $g$ is pointwise $2^{(2n)^{45d}}\lrad^d\delta$-close to a degree-$d$ polynomial in $\ball(\bm 0,\lrad)$.
    Furthermore, for every point $\bm p \in \ball(\bm 0,\lrad)$ \Call{ApproxQuery-$g$}{$\bm{p}$} well approximates $g(\bm p)$ with high probability, that is, 
    \[\Pr\left[|g(\bm{p})-\text{\Call{ApproxQuery-$g$}{$\bm{p}$}}|\leq(12\lrad/\srad)^d2^{d+4}\delta\right]\geq 1-\frac{\eps}{4}.\]
\end{lem}

We prove the main theorem of this section assuming that \autoref{lem:main_approx_poly} holds. 

\begin{proof}[Proof of \autoref{thm:approximate_low_degree_tester}]
  If $f$ is point-wise $\alpha$-close to a degree-$d$ polynomial $h$, then for any $\bm p,\bm q\in\R^n$,
\[\left|\sum_{i=0}^{d+1}\alpha_i\cdot f(\bm p+i\bm q)\right|=\left|\sum_{i=0}^{d+1}\alpha_i\cdot( f(\bm p+i\bm q)-h(\bm p+i\bm q))\right|\leq\sum_{i=0}^{d+1}|\alpha_i|\cdot\alpha\leq 2^{d+1}\alpha=\delta.\]
Thus, \Call{ApproxCharacterizationTest}{} always passes, and \Call{ApproxQuery-$g$}{$\bm p$} returns a value that is $2^{d+3}\delta$-close to $g(\bm p)$, without rejecting  with probability $1$, and  \autoref{alg:low_degree_main_algorithm_approx} always accepts. To see this observe, for any $\bm p,\bm q_1,\bm q_j\in\R^n$,
\begin{align*}
    |g_{\bm q_1}(\bm p)-g_{\bm q_j}(\bm p)|&=\Big|\sum_{i=1}^{d+1}\alpha_i f(\bm p+i\bm q_1)-\sum_{i=1}^{d+1}\alpha_i f(\bm p+i\bm q_j)\Big|=\Big|\sum_{i=0}^{d+1}\alpha_i f(\bm p+i\bm q_1)-\sum_{i=0}^{d+1}\alpha_i f(\bm p+i\bm q_j)\Big|\\
    &=\Big|\sum_{i=0}^{d+1}\alpha_i(f(\bm p+i\bm q_1)-h(\bm p+i\bm q_1))-\sum_{i=0}^{d+1}\alpha_i(f(\bm p+i\bm q_j)-h(\bm p+i\bm q_j))\Big|\\
    &\leq 2\sum_{i=0}^{d+1}|\alpha_i(f(\bm p+i\bm q_1)-h(\bm p+i\bm q_1))|\leq 2\sum_{i=0}^{d+1}|\alpha_i|\cdot\alpha=2^{d+2}\alpha<2^{d+2}\delta.
\end{align*}
So, by \autoref{thm:median-is-close}, we may claim $\Pr_{\bm q_1\sim\cN(\bm 0,I)}[|g(\bm p)-\text{\Call{ApproxQuery-$g$}{}}(\bm p)|<2^{d+2}\delta]=1$, where\\ \Call{ApproxQuery-$g$}{}$(\bm p)\triangleq g_{\bm q_1}(\bm p)$, by \Call{ApproxQuery-$g$-InBall}{$\bm p$}, and $g(\bm p)=\mathsf{med}_{\bm q\sim\cN(\bm 0,I)}\{g_{\bm q}(\bm p)\}$.

  Next, we show that if $f$ is $\beta$-far from all degree-$d$ polynomials, for $\beta\triangleq 2\cdot 2^{(2n)^{45d}}\lrad^d\delta$, then Algorithm~\ref{alg:low_degree_main_algorithm_approx} rejects with probability at least $2/3$. Let $\delta_1\triangleq 2^{(2n)^{45d}}\lrad^d\delta$, and $\delta_2\triangleq(12\lrad/\srad)^d2^{d+4}\delta$.
  If \Call{ApproxCharacterizationTest}{} fails with probability at least $2/3$, then we reject $f$ with probability at least $2/3$. 
  Otherwise, by \autoref{lem:main_approx_poly}, $g$ is  pointwise $\delta_1$-close in $\ball(\bm 0,\lrad)$ to some degree-$d$ polynomial $H$, 
  and for every $\bm p\in\ball(\bm 0,\lrad),\Pr[|g(\bm{p})-\text{\Call{ApproxQuery-$g$}{$\bm{p}$}}|>\delta_2]<\frac{\eps}{4}$. Hence,
$\Pr_{\bm p \sim \mathcal{D}}[|f(\bm p) - g(\bm p)|>\beta-\delta_1] > \varepsilon$, noting $\beta-\delta_1\geq\delta_1$. 

The probability that we do not reject in any of the $N_{\ref{alg:low_degree_main_algorithm_approx}}$ steps of \autoref{alg:low_degree_main_algorithm_approx} is at most the probability that either $\bm p\not\in\ball(\bm 0,\srad)$, for every sampled point $\bm p$, or $|f(\bm{p})-g(\bm{p})|\leq\delta_1$, 
or that \emph{\Call{ApproxQuery-$g$}{$\bm{p}$}} returned a value that is $\delta_2$-far from $g(\bm{p})$ (instead of rejecting). The first event happens with probability at most $\frac{\eps}{4}$, while the last happens with probability at most $\frac{\varepsilon}{4}$ by \autoref{lem:main_approx_poly}. Thus,
\[\Pr_{\bm{p}\sim \mathcal{D}}[\bm p\not\in\ball(\bm 0,\lrad)\lor |f(\bm{p})-g(\bm{p})|\leq\delta_1 \lor |g(\bm{p}) - \Call{Query-\mbox{$g$}}{\bm{p}}|>\delta_2]\leq\frac{\eps}{4} +1-\varepsilon +\frac{\varepsilon}{4}\leq 1-\frac{\varepsilon}{2},\]
and  \autoref{alg:low_degree_main_algorithm_approx} accepts with probability at most ${\left(1-\frac{\varepsilon}{2}\right)}^{N_{\ref{alg:low_degree_main_algorithm_approx}}}< \frac{1}{3}$ for sufficiently large $N_{\ref{alg:low_degree_main_algorithm_approx}}=O(1/\eps)$.

Finally, the bound on the query complexity of the tester follows the same argument, as in the exact case, for \autoref{alg:low_degree_main_algorithm}, noting that for sampled points $\bm p\sim\caD$ that don't fall in $\ball(\bm 0,\lrad)$, \Call{LowDegreeApproxTester}{} makes no queries to $f$, and thus matches the $O(d^5+\frac{d^2}{\eps}\log(\frac{1}{\eps}))$ query complexity of the \Call{LowDegreeTester}{}. 
\end{proof}

In the remainder of this section, we will prove \autoref{lem:main_approx_poly}. This will be done in three steps, similar to the proof outline for \autoref{lem:main_lemma_well_definedness_and_querying_g}. First, we show that $g$ is pointwise close a univariate polynomial of degree $d$ on every line segment in $\ball(\bm 0, \srad)$. 
Then, we show that $g$ is pointwise close to a degree-$d,n$-variate polynomial within $\ball(\bm 0, \srad)$. 
Finally, by the fact that $g$ is defined by interpolating evaluations out of $\ball(\bm 0,\srad)$, we show that it is pointwise close to a degree-$d,n$-variate polynomial on $\mathbb{R}^n$.

\subsection{Polynomial Approximation on Every Line Within the Ball}\label{sec:poly-on-lines-ball-approx}

First, we will argue that $g$ is approximately consistent with a degree $d$ polynomial on every line within the ball $\ball(\bm 0, r)$. The following is an approximate analogue of \autoref{lem:g-on-lines-in-B-is-poly}.  

\begin{thm}\label{lem:g-on-lines-in-B-is-poly_approx}
If $\Call{ApproxCharacterizationTest}{}$ fails with probability at most $2/3$, and $f$ is bounded on $\ball(\bm 0,2d\sqrt{n})$,
then for every $\bm{a},\bm{b}\in \ball(\bm 0,\srad)$, 
the univariate function $g_{\bm a, \bm b}(x) = g(\bm a + x \bm b)$ 
defined on points $x \in \pline_{\bm a,\bm b}^{\ball}$ is pointwise $2^{15d^2}\cdot\delta$-close to a degree-$d$, univariate polynomial. 
\end{thm}

The main technical tool in the proof of this theorem will be the following corollary of a result\footnote{Stated in \autoref{sec:pushed-proofs} as \autoref{thm:approx-characterization}.} from \cite{Gaj91}, which guarantees that any bounded function $f$ defined on a line segment, which has small $(d+1)$-st order finite forward differences, is point-wise close to a degree-$d$ polynomial, on that line segment.


\begin{thm}\label{cor:gajda-cor}
Let $x_0\in\R,d\in\mathbb N,\phi,a\in(0,\infty)$, and a bounded function $f:(x_0-a,x_0+a)\to\R$, such that for all $x\in(x_0-a,x_0+a)$, and $h\in(-a,a)$, with $x+(d+1)h\in(x_0-a,x_0+a),|\Delta_h^{(d+1)}[f](x)|\leq\phi$. Then, there exists a degree-$d$ polynomial $g:\R\to\R$, such that for every $x\in(x_0-a,x_0+a),|f(x)-g(x)|\leq 2^{8d^2}\phi$.
\end{thm}

Thus, in order to prove  an approximate analogue of \autoref{lem:g-on-lines-in-B-is-poly},
it suffices to show that the self-corrected function $g$ satisfies the conditions of \autoref{cor:gajda-cor}; i.e., along every line the $(d+1)$st order finite differences of the restriction of $g$ to these lines are small, which will occupy the remainder of this subsection.

Let $\rho$ denote the bound of the  probability that each of the tests in the  \Call{ApproxCharacterizationTest}{} fails.
That is, for every $j\in \{1,\dots,d+1\}$ and $t\in \{0,\dots,d+1\}$: 

  \begin{align}
      \Pr_{\substack{\bm{p}\sim\mathcal{N}(\bm{0},j^2(t^2+1) I)\\\bm{q}\sim\mathcal{N}(\bm{0}, I)}}\bigg[\Big|\sum_{i=0}^{d+1}\alpha_i\cdot f(\bm{p}+i\bm{q})\Big|>\delta\bigg]&\leq \rho. && \text{[$j^2(t^2+1)$ vs. $1$ Test.]} \label{eq:test_bound_1_approx}\\
      \Pr_{\substack{\bm{p}\sim\mathcal{N}(\bm{0},j^2 I)\\\bm{q}\sim\mathcal{N}(\bm{0},(t^2+1) I)}}\bigg[\Big|\sum_{i=0}^{d+1}\alpha_i\cdot f(\bm{p}+i\bm{q})\Big|>\delta\bigg]&\leq \rho. && \text{[$j^2$ vs. $t^2+1$ Test.]}\label{eq:test_bound_2_approx}\\
      \Pr_{\substack{\bm{p}\sim\mathcal{N}(\bm{0},j^2 I)\\\bm{q}\sim\mathcal{N}(\bm{0},j^2 I)}}\bigg[\Big|\sum_{i=0}^{d+1}\alpha_i\cdot f(\bm{p}+i\bm{q})\Big|>\delta\bigg]&\leq \rho. && \text{[$j^2$ vs. $j^2$ Test.]}\label{eq:test_bound_3_approx}
  \end{align}
Following the same argument as in \autoref{claim:rho-is-small}, we first bound $\rho$:
\begin{claim}\label{claim:rho-is-small-approx}
If \Call{ApproxChacterizationTest}{} fails with probability at most $2/3$, then $\rho\text{ is at most }(30d)^{-2}$.
\end{claim}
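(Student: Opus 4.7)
The plan is to mirror the argument used in the exact case for \autoref{claim:rho-is-small}, only now the ``bad'' event at each repetition is the approximate one $|\sum_{i=0}^{d+1}\alpha_i f(\bm p+i\bm q)|>\delta$ rather than the exact inequality $\sum_{i=0}^{d+1}\alpha_i f(\bm p+i\bm q)\neq 0$. Since each individual trial of \Call{ApproxCharacterizationTest}{} uses the same fresh Gaussian samples $(\bm p,\bm q)$ as in the exact case, the contrapositive argument carries over verbatim.

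First I would set up the contrapositive: suppose for contradiction that $\rho>(30d)^{-2}$, so that at least one of the three tests \eqref{eq:test_bound_1_approx}, \eqref{eq:test_bound_2_approx}, \eqref{eq:test_bound_3_approx}---for some fixed choice of indices $j\in[d+1]$ and $t\in\{0,\dots,d+1\}$---has individual failure probability exceeding $(30d)^{-2}$. Each of these tests is executed as an inner step of the outer loop in \Call{ApproxCharacterizationTest}{}, and the outer loop runs $N_{\ref{alg:subroutines_approx}}=O(d^2)$ times with independent samples. Hence the probability that \emph{none} of the $N_{\ref{alg:subroutines_approx}}$ trials of this particular bad test triggers a rejection is at most $(1-(30d)^{-2})^{N_{\ref{alg:subroutines_approx}}}$.

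Next I would choose the hidden constant in $N_{\ref{alg:subroutines_approx}}=O(d^2)$ large enough (say, $N_{\ref{alg:subroutines_approx}}\ge 2\cdot(30d)^{2}\cdot\ln 3$) so that $(1-(30d)^{-2})^{N_{\ref{alg:subroutines_approx}}}\le e^{-N_{\ref{alg:subroutines_approx}}/(30d)^2}<1/3$. Since \Call{ApproxCharacterizationTest}{} rejects the moment \emph{any} single inner test rejects, this shows that \Call{ApproxCharacterizationTest}{} would reject with probability strictly greater than $2/3$, contradicting the hypothesis that it fails with probability at most $2/3$. Therefore every individual test's failure probability must satisfy the required bound, which is exactly $\rho\le(30d)^{-2}$.

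I do not anticipate any real obstacle here: this is a clean union-bound / amplification argument identical in structure to \autoref{claim:rho-is-small}, and the only change is that the relevant ``failure'' event is now approximate, which does not affect the combinatorial counting. The only mild bookkeeping is to be explicit that $\rho$ is defined as the supremum over the choices of $j$ and $t$, so that a single bad $(j,t)$ pair is enough to drive the contradiction.
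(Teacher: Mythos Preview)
Your proposal is correct and follows exactly the paper's approach: the paper simply states that the claim follows ``by the same argument as in \autoref{claim:rho-is-small},'' which is precisely the contrapositive amplification you spell out. There is nothing to add.
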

Then, we prove an approximate version of \autoref{lem:well_defined_of_g} (which lower bounded collision probabilities), via an identical argument, the proof of which can be found in \autoref{sec:pushed-proofs}:

\begin{restatable}{lem}{gwelldefapprox}\label{lem:well_defined_of_g_approx}
   For every $\bm{p}\in \ball(\bm{0},\srad)$, and every $t\in  \{0,\ldots, d+1\}$, $$\Pr_{\substack{\bm{q}_1\sim\mathcal{N}(\bm{0},(t^2+1)I)\\ \bm{q}_2\sim\mathcal{N}(\bm{0},I)}}\left[|g_{\bm{q}_1}(\bm{p})- g_{\bm{q}_2}(\bm{p})|>2^{d+2}\delta\right]\leq 4d\rho+48d^5\srad. $$ 
\end{restatable}

An immediate corollary is the following. 

\begin{cor}\label{cor:replacement-is-good_approx}
    If $\Call{ApproxCharacterizationTest}{}$ fails with probability at most $2/3$, then for every $\bm p \in \ball(\bm 0,\srad)$ and every  $t\in \{0,\ldots, d+1\}$,
    \[ \Pr_{\bm{q}\sim\mathcal{N}(\bm{0},(t^2+1)I)}[|g(\bm p)- g_{\bm{q}}(\bm{p})|>2^{d+3}\delta] < \frac{1}{7d}. \]
\end{cor}
\begin{proof}
By \autoref{claim:rho-is-small-approx}, $\rho$ at at most $(30d)^{-2}$. Observe that for any $t \in \{0,\ldots, d+1\}$,
    \begin{align*}
        &\Pr_{\bm{q}\sim\mathcal{N}(\bm{0},(t^2+1)I)}[|g(\bm p)- g_{\bm{q}}(\bm{p})|>2^{d+3}\delta]\\
        &\leq\Pr_{\bm{q}_1\sim\mathcal{N}(\bm{0},I)}[|g(\bm{p})- g_{\bm{q}_1}(\bm{p})|>2^{d+2}\delta]+\Pr_{\substack{\bm{q}\sim\mathcal{N}(\bm{0},(t^2+1)I)\\ \bm{q}_1\sim\mathcal{N}(\bm{0},I)}}[|g_{\bm{q}_1}(\bm p)- g_{\bm{q}}(\bm{p})|>2^{d+2}\delta]\\
        &\leq 5(4d\rho+48d^5\srad)\tag{By \autoref{thm:median-is-close}, $\because g(\bm p)=\mathsf{med}_{\bm q\sim\cN(\bm 0,I)}\{g_{\bm q}(\bm p)\}$, and \autoref{lem:well_defined_of_g_approx}}.
    \end{align*}
    By choosing $\srad =(4d)^{-6}$ and with $\rho \leq (30d)^{-2}$, we get $5(4d\rho+48d^5\srad)\leq 1/(7d)$.
\end{proof}

Next, we prove the approximate analogue of \autoref{thm:characterization_in_small_ball}, (which showed that the $(d+1)$st order finite differences of $g$'s restrictions to all lines in $\ball(\bm 0,\srad)$ vanish) via an identical argument, and the proof of which can also be found in \autoref{sec:pushed-proofs}. 

\begin{restatable}{lem}{smallballcharapprox}\label{thm:characterization_in_small_ball_approx}
    If $\Call{ApproxCharacterizationTest}{}$ fails with probability at most $2/3$, then for every
    $\bm{p},\bm{q}\in \ball(\bm{0},\srad)$ and sufficiently small $h\in\mathbb{R}$, such that $\bm p+ih\bm q\in\ball(\bm 0,\srad)$ for every $i\in[d+1]$, we have $|\sum_{i=0}^{d+1}\alpha_i\cdot g(\bm{p}+ih\bm{q})|\leq 2^{2d+5}\delta$.
\end{restatable}

We are now ready to prove \autoref{lem:g-on-lines-in-B-is-poly_approx}. 

\begin{proof}[Proof of \autoref{lem:g-on-lines-in-B-is-poly_approx}] First note that since $f$ is bounded on $\ball(\bm 0,2d\sqrt{n})$, by the same argument as in \autoref{lem:g-is-bounded}, $g$ is bounded on $\ball(\bm 0,\srad)$.
Next, fix some $\bm{a},\bm{b}\in \ball(\bm{0},\srad)$;
we would like to show that $g_{\bm a, \bm b}(x)$ is pointwise close a unique degree-$d$ polynomial for every point $x$ in $\{x \in \mathbb{R} : \bm a + x \bm b \in \ball(\bm 0, \srad) \}$; fix such an $x$. By \autoref{cor:gajda-cor}, it suffices to show that for all sufficiently small $h\in\R$, such that $\bm a+(x+ih)\bm b\in\pline_{\bm a,\bm b}^{\ball}$ for every $i\in[d+1]$,
\[  |\Delta^{(d+1)}_h[g_{\bm a, \bm b}](x)| = \Big|\sum_{i=0}^{d+1} \alpha_i \cdot g(\bm a + x \bm b + i h \bm b)\Big|\leq 2^{7d^2}\delta. \]
By \autoref{thm:characterization_in_small_ball_approx}, we have that for every $\bm{p},\bm{q}\in \ball(\bm{0},\srad)$ and sufficiently small $h\in\mathbb{R}$, such that $\bm p+ih\bm q\in\ball(\bm 0,\srad)$ for every $i\in[d+1]$, $|\sum_{i=0}^{d+1}\alpha_i\cdot g(\bm{p}+ih\bm{q})|\leq 2^{2d+5}\delta$. Let $\bm p \triangleq \bm a + x \bm b$ and $\bm q \triangleq \bm b$. Since $\ball(\bm 0, r)$ is an open ball containing $\bm p\text{, and }\bm q$, we have $\bm p + ih\bm q \in \ball(\bm 0, \srad)$ for every $i\in[d+1]$. Thus,
\[  |\Delta^{(d+1)}_h[g_{\bm a, \bm b}](x)| =  \Big|\sum_{i=0}^{d+1} \alpha_i \cdot g(\bm a + x \bm b + i h \bm b)\Big| =  \Big| \sum_{i=0}^{d+1} g(\bm p + ih\bm q)\Big| \leq 2^{2d+5}\delta\leq 2^{7d^2}\delta. \qedhere \]
\end{proof}

\subsection{Polynomial Approximation Within the Hypercube} 
Let $0<m\leq 1$ be a large value such that the hypercube $[-m,m]^n$ is contained within $\ball(\bm 0, \srad)$; setting $m=r/(2\sqrt{n})$ suffices. We will prove that the self-corrected function $g$ is close to a degree-$d$ polynomial on $[-m,m]^n$. 
The following lemma is the approximate analogue of \autoref{lem:local_to_global}, and \autoref{lem:radial_lines_consistent_force_low_degree} combined into one. 

\begin{lem}\label{lem:local_to_global_approx}
    Let $m\in(0,1],\delta>0$, and let $h\colon [-m,m]^n \to \mathbb{R}$. 
    If for every line $\pline$, the restriction of $h$ to this line  $h_{\pline}$, is pointwise $\delta$-close to a degree-$d$ polynomial $\hat{h}_{\pline}$,  
    then $h$ is pointwise $((2/m)^{n^{40d}} \delta )$-close to a  degree-$d,n$-variate polynomial. 
\end{lem}
The  proof of \autoref{lem:local_to_global_approx} is by induction. At each inductive step we build a degree-$2d$ polynomial and then reduce it to degree $d$ using the following Lemma, the proof of which is in the of which is deferred until the following subsection. 

\multivariateDegreeReduction*

\begin{proof}[Proof of \autoref{lem:local_to_global_approx}]
We will show that $h$ is pointwise close to an $n$-variate  degree-$d$ polynomial $H_n$ by induction on the dimension $n$.
Set $\delta_n \triangleq (2/m)^{n^{40d}} \delta$.
For the base case, when $n=1$, we have that $h=h_{ 0,  1}$  is  pointwise $\delta$-close to a univariate polynomial $\hat{h}_{ 0,  1}$ of degree $d$ by assumption, so we let $H_1=\hat{h}_{  0,  1}$ and $\delta_1=(2/m)\delta\geq \delta$.

Assume that the statement is true for $n-1$, with $\delta_{n-1} =(2/m)^{(n-1)^{40d}} \delta $. 
For any $c\in[-m,m]$, define $h^{(c)}\colon[-m,m]^{n-1}\to\mathbb{R}$ as 
\[h^{(c)}(x_1,\dots,x_{n-1}) \triangleq h(x_1,\dots,x_{n-1},c).\]
We will argue that $h^{(c)}$ is pointwise $\delta_{n-1}$-close to an $(n-1)$-variate polynomial of total degree at most $d$.
Fix $i\in [n-1]$, and  $\bm{a}\in[-m,m]^{n-1}$ with $a_i=0$, and let $\bm a^\uparrow=(a_1,\dots,a_{n-1},c) \in [-m,m]^n$ to be an extension of $\bm a$ to dimension $n$. As well, let $\bm e_i$ denote the $i$th standard basis vector. 
By assumption, $h_{\bm a^\uparrow, \bm e_i}\colon[-m,m]\to\mathbb{R}$ is  pointwise $\delta$-close to some univariate degree-$d$ polynomial, which we will denote by $\hat{h}_{\bm a^\uparrow, \bm e_i}$. For every $x\in[-m,m]$, we have
\[h_{\bm a^\uparrow, \bm e_i}(x)=h(a_1,\dots,a_{i-1},x,a_{i+1},\dots,a_{n-1},c)=h^{(c)}(\bm{a}+x\bm{e_i})=h^{(c)}_{\bm{a,e_i}}(x),\]
and so $h^{(c)}_{\bm{a,e_i}}(x)$ is $\delta$-close to 
$\hat{h}_{\bm a^\uparrow, \bm e_i}$ on $[-m,m]$. 
Thus, by the induction hypothesis, $h^{(c)}\colon[-m,m]^{n-1}\to\mathbb{R}$ is pointwise $\delta_{n-1}$-close to an $(n-1)$-variate polynomial of total degree at most $d$, which we will denote by $H^{(c)}_{n-1}$.

It remains to show that $h$ is  pointwise $\delta_n$-close to an $n$-variate polynomial $H_{n}$ of total degree at most $d$ on $[-m,m]^n$. Let $c_0,\dots, c_{d} \in[-m,m]$ be the scaled Chebyshev nodes $c_i \triangleq m\cos{\left(\frac{\pi}{2}(i+1/2)/(d+1) \right)}$. 
Let $\delta_{c_i}\colon\R\to\R$ be the unique degree-$d$ polynomial which satisfies \[\delta_{c_i}(c_j)=\begin{cases}1 & i=j,\\0 &i\neq j.\end{cases}\]
Using $\delta_{c_i}$, we build a degree at most $2d$ polynomial
\[H(x_1,\dots,x_n) \triangleq \sum_{i=0}^{d}\underbrace{\delta_{c_i}(x_n)}_{\text{degree } d}\underbrace{H^{(c_i)}_{n-1}(x_1,\dots,x_{n-1})}_{\text{degree  }\leq d}.\]
Next, we argue that  $H$ is pointwise close to $h$. Fix some $\bm{b}=(b_1,\ldots, b_{n-1}) \in [-m,m]^{n-1}$ and let $\bm b^\uparrow=(b_1,\dots,b_{n-1},0)$ be an extension of $\bm b$ to dimension $n$. Consider the following two univariate functions in the variable $t$. The first function is 
  \[h_{\bm b^{\uparrow},\bm{e_n}}(t)=h(b_1,\dots,b_{n-1},t),\]
  which by assumption is pointwise $\delta$-close to a univariate degree-$d$  polynomial $\hat{h}_{\bm b^{\uparrow},\bm{e_n}}(t)$.
  The second is the polynomial $H$ with the first $n-1$ variables fixed to $\bm b$, 
  \[H(b_1,\dots,b_{n-1},t)= \sum_{i=0}^{d}\delta_{c_i}(t)H^{(c_i)}_{n-1}(b_1,\dots,b_{n-1}).\]
  Since the $H^{(c_i)}_{n-1}(b_1,\dots,b_{n-1})$ are constants in $t$,  $H_n(b_1,\dots,b_{n-1},t)$ is a univariate polynomial of degree $d$. 
  
  Observe that for $c_0,\dots c_d$,
  \begin{align*}
      |\hat{h}_{\bm b^{\uparrow},\bm{e_n} }(c_i)-H(b_1,\dots,b_{n-1},c_i)| &\leq |\hat{h}_{\bm b^{\uparrow},\bm{e_n} }(c_i)-h_{\bm b^{\uparrow},\bm{e_n} }(c_i)|+|h^{(c_i)}(\bm b)-H(b_1,\dots,b_{n-1},c_i)|\\
      &\leq \delta+\delta_{n-1},
  \end{align*}
  where the first inequality follows because $h_{\bm b^{\uparrow},\bm{e_n} }(c_i)=h^{(c_i)}(\bm b)$ and the second follows by the inductive hypothesis, since $H(b_1,\dots,b_{n-1},c_i)=H^{(c_i)}_{n-1}(b_1,\dots,b_n)$ by definition.
  
  Applying \autoref{lem:stability_univariate_interpolation_scaled} to the error function $e(t)=\hat{h}_{\bm b^{\uparrow},\bm{e_n} }(t)-H(b_1,\dots,b_{n-1},t)$,  we have that for every $t\in[-m,m]$, the difference between the two degree-$d$ polynomials is at most
  \[|\hat{h}_{\bm b^{\uparrow},\bm{e_n} }(t)-H(b_1,\dots,b_{n-1},t)|\leq \sqrt{2}(d+1)(\delta+\delta_{n-1}).\]
  Since this is true for every $\bm b\in[-m,m]^{n-1}$ and $t\in[-m,m]$, we have that for every $\bm x\in[-m,m]^n$ 
  \[|H(\bm x)-h(\bm x)|\leq |H(\bm x)-\hat{h}_{(x_1,\dots,x_{n-1}),\bm e_n}(x_n)|+|\hat{h}_{(x_1,\dots,x_{n-1}),\bm e_n}(x_n)-h(\bm x)|\leq  \sqrt{2}(d+1)(\delta+\delta_{n-1})+\delta.\]
  Note that for every  $\bm a\in[-m,m]^n$,  the restriction $H_{\bm 0, \bm a}$ on the radial line $\pline_{\bm 0,\bm a}$ is a univariate polynomial which is pointwise $(10d\delta_{n-1})$-close to the degree $d$ univariate polynomial $\hat{h}_{\bm 0,\bm a}$ on points in the cube $[-m,m]^n$, since $|H_{\bm 0, \bm a}(t)-\hat{h}_{\bm 0,\bm a}(t)|\leq |H_{\bm 0, \bm a}(t)-{h}_{\bm 0,\bm a}(t)|+|h_{\bm 0, \bm a}(t)-\hat{h}_{\bm 0,\bm a}(t)| \leq \sqrt{2}(d+1)(\delta+\delta_{n-1})+2\delta$ for every $t\in [-1,1]$.
  
  Applying \autoref{thm:multivarite_degree_reduction} on $H$, we have that $H_n\triangleq H^{\leq d}$ is pointwise $(20d (2/m)^{2n^{36d}}\delta_{n-1})$-close to $H$ on the cube $[-m,m]^n$. Thus, for every $\bm x\in[-m,m]^n$,  we have
  \begin{align*}
      |h(\bm x)-H_n(\bm x)|&\leq |h(\bm x)-H(\bm x)|+|H(\bm x)-H_n(\bm x)|\\
      &\leq (9d\delta_{n-1})+(20d (2/m)^{2n^{36d}}\delta_{n-1})\\
      &\leq 30d (2/m)^{2n^{36d}}\delta_{n-1}\\
      &\leq \delta_n. \qedhere
  \end{align*}
 \end{proof}

\subsubsection{Proof of \autoref{thm:multivarite_degree_reduction}}
Consider the monic Chebyshev polynomials $\widetilde{T}_n(x)\triangleq 2^{1-n}T_n(x)$, with $\|\widetilde{T}_n(x)\|_\infty=2^{1-n}$ on the interval $x\in[-1,1]$. Then, by the extremal property that Chebyshev polynomials have the minimum maximal absolute value among all monic polynomials of the same degree on the interval $[-1,1]$, we have the following fact and the subsequent lemma.

\begin{fact}\label{fact:monic-lower-bound}
For every monic polynomial $p(t)$ of degree $d\geq 1$ there exists $x\in[-1,1]$ such that $|p(x)|\geq 2^{1-d}$.
\end{fact}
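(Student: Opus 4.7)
The plan is to prove this by contradiction, invoking the classical equioscillation property of the Chebyshev polynomials. Suppose, for the sake of contradiction, that there exists a monic polynomial $p$ of degree $d \geq 1$ with $|p(x)| < 2^{1-d}$ for every $x \in [-1,1]$. We compare $p$ with the monic Chebyshev polynomial $\widetilde{T}_d(x) = 2^{1-d} T_d(x)$, which is the obvious candidate for the extremal polynomial and which takes values of magnitude exactly $2^{1-d}$ at the Chebyshev extrema.

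The key step is to evaluate at the $d+1$ extrema $x_k = \cos(k\pi/d)$ for $k = 0, 1, \dots, d$. At these points, $T_d(x_k) = (-1)^k$, so $\widetilde{T}_d(x_k) = 2^{1-d}(-1)^k$, alternating in sign with magnitude precisely $2^{1-d}$. Form the difference $r(x) \triangleq \widetilde{T}_d(x) - p(x)$. Because we assumed $|p(x_k)| < 2^{1-d}$ strictly, the sign of $r(x_k)$ is governed entirely by $\widetilde{T}_d(x_k)$, so $\mathrm{sign}(r(x_k)) = (-1)^k$ for each $k \in \{0,\dots,d\}$. By the intermediate value theorem, $r$ has at least $d$ distinct roots in $[-1,1]$.

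Now the contradiction: both $\widetilde{T}_d$ and $p$ are monic of degree $d$, so the leading terms cancel and $\deg(r) \leq d-1$. A polynomial of degree at most $d-1$ with $d$ distinct roots must be identically zero, forcing $p \equiv \widetilde{T}_d$. But then $|p(x_0)| = 2^{1-d}$, which violates the strict inequality $|p(x_0)| < 2^{1-d}$. This contradiction establishes the claim.

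The only subtlety I anticipate is bookkeeping around strict versus non-strict inequalities: the argument crucially needs the strict bound $|p(x_k)| < 2^{1-d}$ to guarantee that $r(x_k)$ inherits the sign of $\widetilde{T}_d(x_k)$ (rather than potentially vanishing). This is why the conclusion of the fact is the non-strict bound $|p(x)| \geq 2^{1-d}$ at some $x \in [-1,1]$: the extremal polynomial $\widetilde{T}_d$ itself attains this bound, and the argument shows any other monic polynomial cannot do strictly better on the whole interval.
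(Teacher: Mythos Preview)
Your proof is correct; it is the classical equioscillation argument establishing the extremal property of monic Chebyshev polynomials. The paper does not give a proof at all --- it simply states the fact as an immediate consequence of the (cited as known) extremal property that $\widetilde{T}_d$ has the minimal sup-norm on $[-1,1]$ among all monic degree-$d$ polynomials --- so your write-up is a self-contained proof of exactly the result the paper invokes as a black box.
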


\begin{cor}\label{cor:monic-scaled-lower-bound}
    Let $m \leq 1$ 
      and $p(t)$ is a monic polynomial of degree $d\geq 1$. Then, there exists $x\in[-m,m]$ such that $|p(x)|\geq m^d2^{1-d}$.
\end{cor}
\begin{proof}
    Let $p(x)=x^d+\sum_{i=0}^{d-1}\alpha_ix^i$, and note that $p(xm) = (xm)^d + +\sum_{i=0}^{d-1}\alpha_i(xm)^{i}$. Then $p(xm)/m^d$ is a degree-$d$ \emph{monic} polynomial.  Thus, by \autoref{fact:monic-lower-bound}, there exists $x \in [-1,1]$ such that $|p(xm)/m^d| \geq 2^{1-d}$. That is, there is $x \in [-m,m]$ such that $|p(x)| \geq m^d2^{1-d}$.
\end{proof}

\nonmoniclowerbound*
\begin{proof}
Let $\ell$ be the largest index such that $|\alpha_{\ell}| \geq 2^{(-\sum_{k=1}^\ell k)}\eta$; $\ell$ exists since $|\alpha_i|\geq \eta\geq 2^{(-\sum_{k=1}^i k)}\eta$.
Note that by the maximality of $\ell$, for every $j > \ell$, $|\alpha_j| < 2^{(-\sum_{k=1}^j k)} \eta$. Therefore,
\begin{align*}
    \frac{p(x)}{\alpha_{\ell}}=\sum_{j=0}^{\ell}\frac{\alpha_j}{\alpha_{\ell}}x^j +\sum_{j=\ell+1}^d\frac{\alpha_j}{c_{\alpha}}x^j=\frac{p^{\leq \ell}(x)}{\alpha_{\ell}} + \frac{p^{>\ell}(x)}{\alpha_{\ell}}.
\end{align*}
Observe that $p^{\leq \ell}(x)/\alpha_{\ell}$ is a monic polynomial of degree at most $\ell$, and thus by  \autoref{fact:monic-lower-bound} there exists $x\in[-1,1]$ such that $|p^{\leq \ell}(x)|/|\alpha_{\ell}|\geq 2^{1-\ell}$. On the other hand, for every $x \in [-1,1]$ we have that
\[\frac{|p^{>\ell}(x)|}{|\alpha_{\ell}|} \leq \sum_{j=\ell+1}^d \frac{|\alpha_j|}{|\alpha_{\ell}|}\leq\sum_{j=\ell+1}^d 2^{-\left(\sum_{k=\ell+1}^j k\right)} \leq \sum_{j>\ell} 2^{-j} \leq 2^{-\ell}.\]

Altogether this implies that there is some $x\in[-1,1]$ such that
\[\frac{|p(x)|}{|\alpha_{\ell}|}\geq\frac{|p^{\leq \ell}(x)|}{|\alpha_{\ell}|}-\frac{|p^{>\ell}(x)|}{|\alpha_{\ell}|}\geq 2^{1-\ell}-2^{-\ell}=2^{-\ell},\]
and it follows that $|p(x)|\geq 2^{-\ell}|\alpha_{\ell}|\geq 2^{-\ell}(2^{-\sum_{k=1}^\ell k})\eta=(2^{-\ell(\ell+3)/2})\eta\geq \eta 2^{-2d^2}$. 
\end{proof}

\begin{cor}\label{cor:non-monic-lower-bound_in_small_cube}
Fix $\eta >0$, $m<1$, and let  $p(x)=\sum_{i=0}^d\alpha_i x^i$ be a degree-$d$ polynomial. If $|\alpha_i|\geq\eta$ for some $i\in [d]$, then there exists $x\in[-m,m]$ such that $|p(x)|\geq 2^{-2d^2}m^d\eta$.
\end{cor}
\begin{proof}
    Writing $p(x)=p^{\leq\ell}(x)+p^{>\ell}(x)$ as in \autoref{lem:non-monic-lower-bound}, and noticing $p^{\leq\ell}(x)/\alpha_\ell$ is a monic, degree $\ell$ polynomial, we invoke \autoref{cor:monic-scaled-lower-bound} to claim, there exists $x\in[-m,m]$ such that $|p^{\leq\ell}(x)|/|\alpha_\ell|\geq 2^{1-\ell}m^\ell$. While simultaneously, we have for every $x\in[-m,m]$:
    \[\frac{|p^{>\ell}(x)|}{|\alpha_\ell|}\leq\sum_{j=\ell+1}^d\frac{|\alpha_j||m|^j}{|\alpha_\ell|}\leq\sum_{j=\ell+1}^d 2^{-\left(\sum_{k=\ell+1}^d k\right)}m^j\leq\sum_{j>\ell}2^{-j}m^j\leq m^\ell\sum_{j>\ell}2^{-j}\leq 2^{-\ell}m^\ell.\]
    Therefore, there exists $x\in[-m,m]$ such that $|p(x)|/|\alpha_\ell|\geq 2^{-\ell}m^\ell$, and we have $|p(x)|\geq 2^{-\ell}m^\ell|\alpha_\ell|\geq 2^{-2d^2}m^d\eta$.
\end{proof}

For vectors $\bm y, \bm z\in \R^n$, denote by $\langle \bm y, \bm z \rangle=\sum_{j\in[n]}y_jz_j$ the standard inner product between them.

\begin{lem}\label{lem:isolation}
    For $k\geq n^{8d}$ there exists $\bm y \in \{0,\ldots,k\}^n$ such that for any $\bm z^{(1)} \neq \bm z^{(2)} \in \{0,\ldots, d\}^n$ satisfying $\sum_{j=1}^n  z^{(i)}_j \leq 2d$ for $i \in \{1,2\}$, it holds that 
    $\langle \bm y, \bm z^{(1)} \rangle \neq \langle \bm y, \bm z^{(2)} \rangle$.
\end{lem}

\begin{proof}
Let ${\cal Z}=\{\bm z\in \{-d,\dots,0,\dots,d\}^n\colon \|\bm z\|_0\leq 4d\}$, where $\| \cdot \|_0$ gives the number of non-zero coordinates. 
    Note that $\bm z^{(1)} - \bm z^{(2)} \in {\cal Z}$. 
    Thus, it suffices to show that there exists $\bm y$ such that for any $\bm z \in {\cal Z}$, if $\langle \bm y, \bm z \rangle = 0$ then $\bm z = \bm 0$. 
    Suppose $\bm z \neq \bm 0$, and let $\ell$ be such that $z_\ell \neq 0$. 
    Sample $\bm y$ uniformly from $\{0,\ldots, k\}^n$. Then,
    \[ \Pr_{\bm y}[ \langle \bm y, \bm z \rangle = 0] = \Pr \bigg[ y_\ell = - \frac{1}{z_\ell} \sum_{j \neq \ell} y_j z_j  \bigg]\leq \frac{1}{k}.\]
    By a union bound over all $\bm z\neq \bm 0$,
    \[ \Pr_{\bm y}[ \exists \bm z: \langle \bm y, \bm z \rangle=0] \leq \frac{|{\cal Z}|-1}{k}<\frac{{\binom{n}{4d}}(2d+1)^{4d}}{k} \leq \frac{n^{8d}}{k}. \]
    Thus, choosing $k\geq n^{8d}$, there exists $\bm y$ such that for every $\bm 0\neq \bm z\in{\cal Z}$ it holds $\langle \bm y, \bm z \rangle\neq 0$.
\end{proof}
Let us introduce some notation. For an $n$-variate polynomial $p(\bm x)=\sum_{I\in \mathbb{N}^n}\alpha_{I}\prod_{j\in [n]}x_j^{I_j}$, let 
\[p^{\leq d}(\bm x)\triangleq\sum_{I\in\mathbb{N}^n : \|I\|_1\leq d} \alpha_{I}\prod_{j\in [n]}x_j^{I_j}\] 
be the truncation of $p$ to degree $d$.

\begin{fact}
Let $p(\bm x)=\sum_{I\in \mathbb{N}^n}\alpha_{I}\prod_{j\in [n]}x_j^{I_j}$ be an $n$-variate polynomial. Then, for every point $\bm x \in[-1,1]^n$, 
\[|p(\bm x)-p^{\leq d}(\bm x)|\leq\sum_{I\in\mathbb{N}^n : \|I\|_1> d}|\alpha_{I}|.\]
\end{fact}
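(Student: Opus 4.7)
The plan is to observe that this follows directly from the definition of $p^{\leq d}$, the triangle inequality, and the fact that each coordinate $x_j$ lies in $[-1,1]$. There is essentially no combinatorial or analytic obstacle here; the statement is a one-line consequence of bounding each monomial by $1$ on the hypercube.

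First, I would write out the difference explicitly using the definitions:
\[
p(\bm x) - p^{\leq d}(\bm x) \;=\; \sum_{I \in \mathbb{N}^n : \|I\|_1 > d} \alpha_I \prod_{j \in [n]} x_j^{I_j},
\]
which is immediate since the degree-$\leq d$ monomials cancel. Then I would apply the triangle inequality on the right-hand side to obtain
\[
\bigl|p(\bm x) - p^{\leq d}(\bm x)\bigr| \;\leq\; \sum_{I : \|I\|_1 > d} |\alpha_I| \cdot \prod_{j \in [n]} |x_j|^{I_j}.
\]

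Finally, I would use the hypothesis $\bm x \in [-1,1]^n$, which gives $|x_j| \leq 1$ for every $j$, and hence $\prod_{j \in [n]} |x_j|^{I_j} \leq 1$ for every multi-index $I \in \mathbb{N}^n$. Substituting this bound yields
\[
\bigl|p(\bm x) - p^{\leq d}(\bm x)\bigr| \;\leq\; \sum_{I : \|I\|_1 > d} |\alpha_I|,
\]
as claimed. The only thing to note is that the sum on the right is assumed to be finite (otherwise the statement is vacuous); this is implicit in $p$ being a polynomial, so only finitely many $\alpha_I$ are nonzero. No further steps are needed, and there is no genuine difficulty in the argument.
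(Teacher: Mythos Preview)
Your proof is correct and is essentially identical to the paper's own argument: write the difference as the sum over high-degree monomials, apply the triangle inequality, and bound each monomial factor by $1$ since $|x_j|\leq 1$.
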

\begin{proof} Observe, that for every $\bm x\in[-1,1]^n$, for every $j$, $|x_j|\leq 1$, and hence
\[ |p(\bm x)-p^{\leq d}(\bm x)|=\Big|\sum_{I\in\mathbb{N}^n : \|I\|_1> d} \alpha_{I}\prod_{j\in [n]}x_j^{I_j}\Big| \leq \sum_{I\in\mathbb{N}^n : \|I\|_1> d} |\alpha_{I}|\prod_{j\in [n]} |x_j|^{I_j}\leq \sum_{I\in\mathbb{N}^n : \|I\|_1> d} |\alpha_{I}|.
\qedhere
\]
\end{proof}

\begin{cor}\label{cor:multivariate_close}
Let $p(\bm x)=\sum_{I\in \mathbb{N}^n}\alpha_{I}\prod_{j\in [n]}x_j^{I_j}$ be a polynomial of total degree $\ell \geq d$. If for every $I\in \mathbb{N}^n$ such that $\|I\|_1>d$, we have $|\alpha_{I}|\leq\eta$, then $p$ is pointwise $\tilde{\eta}$-close to $p^{\leq d}$ on $[-1,1]^n$, where $\tilde{\eta}=\eta |\{I\ \mid\ d<\|I\|_1\leq \ell\}| \leq \eta (n+1)^{\ell}$. 
\end{cor}


We are now ready to prove \autoref{thm:multivarite_degree_reduction}. 

\begin{proof}[Proof of \autoref{thm:multivarite_degree_reduction}]
Let 
\[p(\bm x) = \sum_{I\in\mathbb{N}^n:\|I\|_1\leq \ell} \alpha_{I}\prod_{j\in [n]}x_j^{I_j}.\]
Assume by contradiction that $p$ is not pointwise $\eta$-close to $p^{\leq d}$ on $[-m,m]^n$. Then, by \autoref{cor:multivariate_close} there exists $\tilde{I}$ such that $\|\tilde I\|_1>d$ and $|\alpha_{\tilde{I}}|> (n+1)^{-\ell}\eta$.
Fix $\bm a = (a_1,\dots,a_n) \in [-m,m]^n$, then the restriction of $p$ to the line $L_{(\bm 0, \bm a)}$ is
\[ p_{\bm 0, \bm a}(t)=\sum_{I\in\mathbb{N}^n : \|I\|_1\leq \ell} \alpha_{I}\prod_{j\in [n]}a_j^{I_j}t^{\|I\|_1}=\sum_{r=0}^{\ell}\Bigg(\sum_{I\in\mathbb{N}^n : \|I\|_1=r}\alpha_{I}\prod_{j\in [n]}a_j^{I_j}\Bigg)t^r.\]
By the Fourier-Chebyshev expansion, we can write each monomial $t^r=\sum_{k=0}^r\beta_{k,r}T_{k}(t)$, 
where 
\[
    \beta_{k,r}={\begin{cases}0&\text{ if }k\not\equiv r\bmod{2},\\ 2^{1-r}\binom{r}{(r-k)/2}&\text{ if }k\equiv r\bmod{2}\text{, and }k\neq 0,\\ 2^{-r}\binom{r}{(r-k)/2}&\text{ if }k=0\text{, and }r\equiv 0\bmod{2}. \end{cases}}
\]
which gives
\begin{align*}
    p_{\bm 0, \bm a}(t)&=\sum_{r=0}^{\ell}\Bigg(\sum_{I\in\mathbb{N}^n : \|I\|_1=r}\alpha_{I}\prod_{j\in [n]}a_j^{I_j}\Bigg)\sum_{k=0}^r\beta_{k,r}T_{k}(t) \\
    &= \sum_{r=0}^{\ell}\sum_{k=0}^r\Bigg(\sum_{I\in\mathbb{N}^n : \|I\|_1=r}\beta_{k,r}\alpha_{I}\prod_{j\in [n]}a_j^{I_j}\Bigg)T_{k}(t) \\
    &= \sum_{k=0}^{\ell}\Bigg(\underbrace{\sum_{r=k}^{\ell}\sum_{I\in\mathbb{N}^n : \|I\|_1=r}\beta_{k,r}\alpha_{I}\prod_{j\in [n]}a_j^{I_j}}_{\triangleq q_k(\bm a)}\Bigg)T_{k}(t).
\end{align*}
Let $q_k(\bm a)$ be the coefficient of $T_k(t)$ in the previous expansion and let $\tilde{r}=\|\tilde I\|_1$. Note that by the values of the coefficients $\beta_{k,r}$, we have $\alpha_{\tilde{I}}$ appears either in $q_{d+1}$ or in $q_{d+2}$ depending on the parity of $\tilde{r}$; let $i=d+1$ or $i=d+2$ be such that $i\equiv \tilde{r} \bmod{2}$.
Thus, the coefficient of the monomial $\prod_{j\in [n]}a_j^{\tilde{I}_j}$ in
$q_i$ is
\[ \beta_{i,\tilde{r}}a_{\tilde{I}}=2^{1-\tilde{r}}\binom{\tilde r}{(\tilde r-i)/2}\alpha_{\tilde{I}}\geq 2^{-\tilde{r}}\alpha_{\tilde{I}}\geq (2n+2)^{-\ell}  \eta.\] 
Using this, we will derive a contradiction to the following claim. 
\begin{claim}\label{clm:high_order_coefficients_are_small}
For all $\bm a\in[-m,m]^n$, $|q_i(\bm a)| \leq \sqrt{2}\varepsilon$.
\end{claim}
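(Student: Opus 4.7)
The plan is to bound $q_i(\bm a)$ by exploiting the orthogonality of the Chebyshev polynomials (property~\eqref{eq:chebyshev_orthogonality_property}). The idea is that the Chebyshev coefficients of $p_{\bm 0, \bm a}(t)$ at indices $k > d$ can be computed from the error between $p_{\bm 0, \bm a}$ and its degree-$d$ approximation, which is uniformly small by assumption. Since $i \in \{d+1, d+2\}$, in particular $i > d$, so this approach directly applies to $q_i$.

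In detail, for a fixed $\bm a \in [-m,m]^n$, let $\hat{p}_{\bm a}(t)$ be a degree-$d$ univariate polynomial that is pointwise $\varepsilon$-close to $p_{\bm 0, \bm a}(t)$ on $t \in [-1,1]$, guaranteed to exist by hypothesis. Define $e_{\bm a}(t) \triangleq p_{\bm 0, \bm a}(t) - \hat{p}_{\bm a}(t)$, so that $|e_{\bm a}(t)| \leq \varepsilon$ for every $t \in [-1,1]$. Expanding $\hat{p}_{\bm a}(t) = \sum_{k=0}^d \hat{q}_k T_k(t)$ in the Chebyshev basis and comparing with the expansion $p_{\bm 0, \bm a}(t) = \sum_{k=0}^\ell q_k(\bm a) T_k(t)$, the coefficient of $T_i(t)$ in $e_{\bm a}(t)$ equals $q_i(\bm a)$, because $i > d$ and thus $\hat{p}_{\bm a}$ contributes nothing to index $i$.

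Next, by the orthogonality relation~\eqref{eq:chebyshev_orthogonality_property}, for $i \geq 1$ we can extract this coefficient via
\begin{equation*}
q_i(\bm a) \;=\; \frac{2}{\pi}\int_{-1}^{1} e_{\bm a}(t)\, T_i(t)\, w(t)\, dt, \qquad w(t) = (1-t^2)^{-1/2}.
\end{equation*}
Applying the Cauchy--Schwarz inequality with respect to the weight $w$, together with $\int_{-1}^1 w(t)\,dt = \pi$ and $\int_{-1}^1 T_i(t)^2 w(t)\,dt = \pi/2$, yields
\begin{equation*}
|q_i(\bm a)|^2 \;\leq\; \frac{4}{\pi^2}\!\left(\int_{-1}^1 e_{\bm a}(t)^2\, w(t)\, dt\right)\!\left(\int_{-1}^1 T_i(t)^2\, w(t)\, dt\right) \;\leq\; \frac{4}{\pi^2}\cdot (\pi\varepsilon^2)\cdot \frac{\pi}{2} \;=\; 2\varepsilon^2,
\end{equation*}
so $|q_i(\bm a)| \leq \sqrt{2}\,\varepsilon$, as claimed.

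There is no real obstacle here; the only subtlety is noting that $\hat{p}_{\bm a}$ need not be the degree-$d$ truncation of $p_{\bm 0, \bm a}$, but this does not matter since all we use is that $\hat{p}_{\bm a}$ has degree $\leq d$ and hence its Chebyshev expansion has no $T_k$ term for $k \geq d+1$. This is exactly the input anti-concentration bound that the remainder of the proof of \autoref{thm:multivarite_degree_reduction} combines with \autoref{lem:isolation} and \autoref{lem:non-monic-lower-bound} (via the substitution $\bm a = (z^{y_1},\ldots,z^{y_n})$) to contradict the assumed lower bound $|\beta_{i,\tilde r}\alpha_{\tilde I}| \geq (2n+2)^{-\ell}\eta$.
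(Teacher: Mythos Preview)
Your proof is correct and takes essentially the same approach as the paper: both introduce the error polynomial $e(t)=p_{\bm 0,\bm a}(t)-\hat p_{\bm a}(t)$, note that its $T_i$-coefficient equals $q_i(\bm a)$ since $i>d$, and then bound this coefficient using Chebyshev orthogonality together with $\int_{-1}^1 e(t)^2 w(t)\,dt \leq \pi\varepsilon^2$. The only cosmetic difference is that the paper applies Parseval directly to get $\frac{\pi}{2}q_i(\bm a)^2 \leq \int e^2 w \leq \pi\varepsilon^2$, whereas you extract the single coefficient via the integral formula and apply Cauchy--Schwarz; these are equivalent.
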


We defer the proof of \autoref{clm:high_order_coefficients_are_small} until later and complete the proof first. As $\|I\|_1 \leq \ell$ for every $I$, let $\bm y\in \{0,\ldots, n^{8\ell}\}^n$ be given by \autoref{lem:isolation} and consider the univariate polynomial 
$\widetilde{q_i}(z) \triangleq q_i(z^{\bm y_1},z^{\bm y_2}\dots, z^{\bm y_n})$
in $z$. 
By the guarantee of \autoref{lem:isolation}, for any $I \neq I'$ with $\|I\|_1,\|I'\|_1\leq \ell$, it holds that $\langle \bm y, I \rangle \neq \langle \bm y, I' \rangle$, and thus the coefficients of $\widetilde q_i(z)$ are exactly the same as coefficients of $q_i$ (that is, no two monomials become the same after the substitution of $z^{\bm y_i}$).
Therefore, there exists a coefficient in $\widetilde{q_i}$ which is at least $ (2n+2)^{-\ell}  \eta$. 
On the other hand, since $\langle \bm y, I \rangle \leq \|I\|_1 n^{8\ell}\leq \ell n^{8\ell}$, the degree of $\widetilde q_i$ is at most $ \ell n^{8\ell}\leq n^{9\ell}$, and thus \autoref{cor:non-monic-lower-bound_in_small_cube} implies that there is some $z\in[-m,m]$, such that
$|\widetilde q_i(z)| \geq 2^{-2 n^{18\ell}}m^{ n^{9\ell}}  \eta \geq (\frac{2}{m})^{-2n^{18\ell}} \eta > \sqrt{2}\varepsilon$.
This contradicts \autoref{clm:high_order_coefficients_are_small}.
\end{proof}

\begin{proof}[Proof of \autoref{clm:high_order_coefficients_are_small}]
Let $\widehat p_{\bm 0, \bm a}(t)$ be the univariate degree-$d$ polynomial which is pointwise $\varepsilon$-close to $p_{\bm 0, \bm a}(t)$ on $t\in[-1,1]$, and let its Fourier-Chebyshev expansion be $\widehat p_{\bm 0, \bm a}(t) = \sum_{k=0}^d \gamma_k T_k(t)$. Consider the error polynomial 
\[ e(t) \triangleq  p_{\bm 0, \bm a}(t)-\widehat p_{\bm 0, \bm a}(t)=\sum_{k=0}^{\ell}(q_k(\bm a) - \gamma_k)T_k(t),\]
where we define $\gamma_k\triangleq 0$ for $k>d$. Note that since $p_{\bm 0, \bm a}$ and $\widehat p_{\bm 0, \bm a}$ are $\varepsilon$-close on $[-1,1]$, $|e(t)| \leq \varepsilon$ for all $t \in [-1,1]$. Letting $w(t) \triangleq (1-t^2)^{-1/2}$ be the Chebyshev weight function, and noting that $\int_{-1}^1 w(t) dt = \pi$, we have,
\begin{align*}
 \varepsilon^2 \pi &\geq \int_{-1}^1 e^2(t)w(t) dt \\
 &= \int_{-1}^1 \bigg( \sum_{k=0}^{\ell} (q_k(\bm a) - \gamma_k) T_k(t) \bigg)^2 w(t) dt\\
 &=\sum_{k=0}^{\ell} (q_k(\bm a) - \gamma_k)^2 \int_{-1}^1  T_k(t)T_k(t)  w(t) dt\\
 &\geq  \frac{\pi}{2}\sum_{k=0}^{\ell} (q_k(\bm a) - \gamma_k)^2 \\
 &\geq \frac{\pi}{2} \left(q_i(\bm a)\right)^2,   
\end{align*}
where the first steps by the orthogonality of Chebyshev polynomials~\eqref{eq:chebyshev_orthogonality_property}, and the final inequality follows because $i >d$ and so $\gamma_i =0$. Rearranging, we conclude that $|q_i(\bm a)| \leq \sqrt{2}\varepsilon $. 
\end{proof}


\subsubsection{Extrapolation}

In this section we show that if $g$ is pointwise close to a degree $d$ polynomial then within $\ball(\bm 0,r)$, then it must be pointwise close to a degree-$d$ polynomial within a bigger ball $\ball(\bm 0,R)$. 

\begin{lem}\label{lem:extrapolation_approx}
    Let $R>r'>0$ be any real numbers. If $g$ is pointwise $\eta$-close to a degree-$d$ polynomial in $\ball( \bm 0, r')$, then $g$ is pointwise $(12R/r')^d\eta$-close to a degree-$d$ polynomial on all points in $\ball(\bm 0,R)$.
\end{lem}
\begin{proof}
    Let $H \colon \mathbb{R}^n \to \mathbb{R}$ be the degree-$d$ polynomial which is $\eta$-close to $g$ on $\ball(\bm 0, r')$. We will argue that for any $\bm x \in \ball(\bm 0, R)$,
    \[ |g(\bm x) - H(\bm x)| \leq (12R/r')^d\eta.\]

    If $\bm x \in \ball(\bm 0, r')$, then this holds by assumption, so we consider the case when $\bm x \not \in \ball(\bm 0,r')$. Recall that we define the value of $g$ on points $\bm x \not \in \ball(\bm 0,r')$ by pretending that it \emph{is} a degree-$d$ polynomial and using $d+1$ points in $\ball(\bm 0,r')$ to extrapolate its value along radial lines from within the ball. In particular, let $c_0,\ldots, c_{d}$ be the Chebyshev nodes $c_i \triangleq (r'/\| \bm x\|_2)\cos\left({\frac{\pi }{d+1}(i+1/2)}\right)$, scaled so that they lie within $\pline_{\bm 0, \bm x} \cap \ball(\bm 0, r')$.
    Then, the value of $g(\bm x)$ for $\bm x \not \in \ball(\bm 0, r')$ is defined by interpolating a degree-$d$ univariate polynomial $p_{\bm x}$ such that $p_{\bm x}( c_i) = g(\bm x c_i)$ for $i$, and then the value of $g(\bm x)$ is defined as $p_{\bm x}(1)$. 

   Thus, in order to bound the distance between $g(\bm x)$ and $H(\bm x)$, it suffices to bound the distance between $p_{\bm x}(t)$ and  $H_{(\bm 0, \bm x)}(t)$ for $t=1$. Consider the error polynomial $e(t) \triangleq p_{\bm x}(t)  - H_{(\bm 0, \bm x)}(t)$. As $e$ is a polynomial of degree at most $d$, we can consider its Fourier-Chebyshev expansion,
 \[e(t)=\sum_{i=0}^d \alpha_iT_i(t \| \bm x\|_2/r'),\]
where $T_i(t \|\bm x\|_2/r')$ is the $i$th Chebyshev polynomial with the Chebyshev nodes back-scaled to the interval $[-1,1]$. By assumption, $e( c_i) \leq \eta$ for each $i$, which allows us (by the same argument as in \autoref{lem:stability_univariate_interpolation} and \autoref{lem:stability_univariate_interpolation_scaled}) to bound the coefficients $|\alpha_i| \leq \sqrt{2} \eta$. The $i$th Chebyshev polynomial involves at most $i+1$ terms, each of which are of degree at most $i$ and has coefficients of value at most $2^i$, and therefore $|T_i( \|\bm x\|_2/r')| \leq (i+1)2^i ( \|\bm x\|_2/r')^i$. Altogether, this allows us to bound the value of the error polynomial on $t=1$ by
\[ |e(1)| \leq \sqrt{2} \eta (d+1)^2 (2  \|\bm x\|_2 /r')^d\leq (12  R /r')^d\eta  , \]
where the second inequality is by $\sqrt{2}(d+1)^2\leq 6^d$ for every $d\geq 1$, and the last holds as $\bm x \in \ball(\bm 0,R)$.
Since $g(\bm x) = p_{\bm x}(1)$, we have that the distance between $g(\bm x)$ and $H(\bm x)$ is at most $(12R/r')^d\eta$.
\end{proof}

\subsection{Approximate Polynomial Representation in a Large Ball} 

We now prove the approximate analogue of \autoref{thm:g-is-mult-d-poly} which showed $g$ is a degree-$d$ polynomial over $\mathbb{R}^n$.

\begin{lem}\label{thm:g-is-mult-d-poly_approx}Let $\srad=(4d)^{-6}$ and $\lrad>\srad$. If \Call{ApproxCharacterizationTest}{} fails with probability at most $2/3$, then $g$ is point-wise $2^{(2n)^{45d}}\lrad^d\delta$-close to a degree-$d$, $n$-variate polynomial on all points in $\ball(\bm 0,\lrad)$. 
\end{lem}
\begin{proof}
By \autoref{lem:g-on-lines-in-B-is-poly_approx}, $g$ restricted to any line segment $\pline_{\bm{p},\bm{q}}^{\ball} = \pline_{\bm p, \bm q} \cap \ball(\bm 0,\srad)$ is point-wise $2^{15d^2}\delta$-close to a unique univariate degree-$d$ polynomial. Applying \autoref{lem:local_to_global_approx} (with $m=r/(2\sqrt{n})$), we have that $g$ is pointwise $2^{15d^2}(4\sqrt{n}/\srad)^{n^{40d}}\delta$-close to a degree $d,n$-variate polynomial on every point in the hypercube $\cube=[-m,m]^n$, contained within $\ball(\bm 0,\srad)$. We then consider a smaller ball $\ball(\bm 0,\srad')$ of radius $\srad'=m$, contained within $\cube$. By \autoref{lem:extrapolation_approx}, it follows that $g$ is point-wise $2^{15d^2}(4\sqrt{n}/\srad)^{n^{40d}}(24\sqrt{n}R/r)^d\delta\leq 2^{(2n)^{45d}}\lrad^d\delta$-close to a degree-$d,n$-variate polynomial in $\ball(\bm 0,\lrad)$.
\end{proof}

Finally, we are ready to prove the main lemma of this section.

\begin{proof}[Proof of \autoref{lem:main_approx_poly}] 
Suppose that \Call{ApproxCharacterizationTest}{} fails with probability at most $2/3$, then by \autoref{thm:g-is-mult-d-poly_approx}, $g$ is pointwise $2^{(2n)^{45d}}\lrad^d\delta$-close to a degree-$d$ polynomial in $\ball(\bm 0,\lrad)$. It remains to bound $\Pr[|g(\bm p) -\text{\Call{ApproxQuery-$g$}{$\bm p$}}|>(12\lrad/\srad)^d2^{d+4}\delta]$, where $\bm p \in \ball(\bm 0,\lrad)$. In the \textbf{YES} case, $f$ is point-wise $\alpha$-close to  a degree-$d$ polynomial $h$, and so for any $\bm p,\bm q$

\[|\sum_{i=0}^{d+1}\alpha_i\cdot f(\bm p+i\bm q)|=|\sum_{i=0}^{d+1}\alpha_i\cdot( f(\bm p+i\bm q)-h(\bm p+i\bm q))|\leq\sum_{i=0}^{d+1}|\alpha_i|\cdot\alpha\leq 2^{d+1}\alpha=\delta.\]
Therefore, \Call{ApproxCharacterizationTest}{} always passes, and  \Call{ApproxQuery-$g$}{$\bm p$} returns a value that is $2^{d+3}\delta$-close to $g(\bm p)$. Assume that $f$ is not a degree-$d$ polynomial. To query $g$ on a point $\bm{p} \in \ball(\bm 0,\lrad)$, \Call{ApproxQuery-$g$}{$\bm p$} attempts to obtain $d+1$ points on the line segment $\pline_{\bm 0, \bm p}^{\ball}$ and then interpolate $g$ along this line. For these points $\bm s\in\{c_k\bm p\}_{k=0}^{d}$, \Call{ApproxQuery-$g$-InBall}{$\bm s$} samples an additional $N'_{\ref{alg:subroutines_approx}} = O(\log (1/\varepsilon))$ points $\bm q_1,\ldots, \bm q_{N'_{\ref{alg:subroutines_approx}}}\sim\cN(\bm 0,I)$, and checks whether
\[\Big|\sum_{i \in [d+1]} \alpha_i \cdot f(\bm s + i \bm q_1) - \sum_{i \in [d+1]} \alpha_i \cdot f(\bm s + i \bm q_j)\Big|\leq 2^{d+2}\delta,\]
for all $j \in [N'_{\ref{alg:subroutines_approx}}]$; it rejects if any of these checks fail. By the definition of $g_{\bm q}$, this is equivalent to checking whether $|g_{\bm q_1}(\bm s) - g_{\bm q_j}(\bm s)|>2^{d+2}\delta$; by \autoref{lem:well_defined_of_g_approx}, this occurs with probability at most $1/(7d)$, since $\bm s\in \ball(\bm 0,\srad)$. The probability that \Call{ApproxQuery-$g$-InBall}{$\bm s$} doesn't reject, 
yet $|g(\bm s)-$
\Call{ApproxQuery-$g$-InBall}{$\bm  s$} $|>2^{d+4}\delta$, 
is the probability that: $|g(\bm s)-g_{\bm q_1}(\bm s)|>2^{d+4}\delta$, and $|g_{\bm q_1}(\bm s)-g_{\bm q_j}(\bm s)|\leq 2^{d+3}\delta$, (and therefore $|g(\bm s)-g_{\bm q_j}(\bm s)|> 2^{d+3}\delta$)
for every $\bm q_j$. By \autoref{cor:replacement-is-good_approx}, this probability is at most ${(7d)}^{-N'_{\ref{alg:subroutines_approx}}} < 2^{-N'_{\ref{alg:subroutines_approx}}}/{d}^{N'_{\ref{alg:subroutines_approx}}} \leq \varepsilon/(4(d+1))$, where the final inequality follows by choosing $N'_{\ref{alg:subroutines_approx}} = O(\log (1/ \varepsilon))$. As \Call{ApproxQuery-$g$}{$\bm p$} approximately recovers the value of $g$ on points $\{c_i \bm p\}_{i=0}^d$, we have that for every $i \in \{0,\ldots, d\}$,
\[\Pr[|g_{\bm 0,\bm p}(c_i)-\text{\Call{ApproxQuery-$g$-InBall}{$c_i\bm p$}}|> 2^{d+4}\delta]\leq\frac{\eps}{4(d+1)}.\]
Thus, by \autoref{lem:extrapolation_approx} and a union bound over $i$, 
\begin{align*}
    &\Pr[|g(\bm p)-\text{\Call{ApproxQuery-$g$}{$\bm p$}}|>(12\lrad/\srad)^d2^{d+4}\delta]=\Pr[|g(\bm p)-g_{\bm 0,\bm p}(1)|>(12\lrad/\srad)^d2^{d+4}\delta]\\
    \leq &\sum_{i=0}^{d}\Pr[|g_{\bm 0,\bm p}(c_i)-\text{\Call{ApproxQuery-$g$-InBall}{$c_i\bm p$}}|> 2^{d+4}\delta]\leq\frac{\eps}{4}.
    \qedhere
\end{align*}
\end{proof}


\section{Exact Testing over Discrete Domains}\label{sec:discrete}

In this section we show that the test for degree-$d$ polynomials from \autoref{sec:exact} can be modified to work for (sufficiently dense) discrete domains. The main theorem of this section is as follows: 

\discretesampling*

The key idea behind our tester is to define the self-corrected function $g$ relative to a \emph{discretized} Gaussian distribution defined over $\mathcal{L}$.


\begin{defn}\label{defn:discrete_gaussian}
Given a lattice $\caL$, and any $s>0$, the \emph{discrete Gaussian} $\caG(\caL, s)$ is the probability distribution over $\caL$ 
such that the probability of drawing $\bm x \in \caL$ is $\propto\tau_{s}(\bm x)\triangleq\exp(-\pi\|\bm x\|^2/s^2)$. (If unspecified, $s=1$.)
\end{defn}

That we are able to efficiently sample from a discrete Gaussian is guaranteed by the following lemma.

\begin{lem}[Lemma 2.3 in~\cite{BLPRS13}]\label{lem:latsamp}
There is a probabilistic polynomial time algorithm that given a positive integer $B$ and parameter $r \geq \Omega(\sqrt{\log n}/B)$, outputs 
a sample distributed according to $\caG (\frac1B \Z^n, r )$. 
\end{lem}

At a high-level, the design of our tester will follow the same strategy as the design of our exact tester from \autoref{sec:exact}, with several modifications to handle the lattice $\caL$. From our unknown function $f$, we will define a self-corrected function $g$ such that we have query access to $g$, and such that if our tests pass with sufficiently high probability then $g$ is a degree-$d$ polynomial on $\caL$, and equals $f$ on $\caL$ if $f$ is itself a degree-$d$ polynomial.

As before, we define $g$ on points within a small ball $\ball(\bm 0,r)$; for points $\bm p \in \caL \setminus \ball(\bm 0,r)$ we will define their value by extrapolating the value of $g$ within $\ball(\bm 0,r)$ by choosing $d+1$ points within $\ball(\bm 0,r) \cap \caL$ along the line $\pline_{\bm 0, \bm p}$, using them to interpolating a degree-$d$ univariate polynomial $p_{\bm p}$, and then using $p_{\bm p}$ to define the value of $g(\bm p)$ (see \autoref{fig:extrapolation}). In order to certify that $g$ is indeed a degree-$d$ polynomial, we will use the following variant of the \localCharacterization ; a proof of which is given in \autoref{sec:appendix}.

\begin{restatable}{discrete_local_characterization}{localchardiscrete}\label{thm:localchar_discrete}
Fix $a>0$, $M \geq d+1$, and let $S\triangleq\{\frac{ia}{M}:i\in\mathbb N\}$. If $f : [0,a] \rightarrow \mathbb{R}$ is a univariate function such that $\Delta_{\frac{a}{M}}^{(d+1)}[f](x)=0$, for every $x\in S\cap[0,a]$ satisfying $x+(d+1)\frac{a}{M}\in[0,a]$, then $f$ agrees with a degree-$d$ polynomial over the points in $S \cap [0,a]$.
\end{restatable}

However, for an arbitrary point $\bm p \in \caL \setminus \ball(\bm 0,r)$ there may not be $d+1$ points on the line segment $\pline_{\bm 0, \bm p} \cap \caL \cap \ball(\bm 0,r)$ --- this can occur if $\bm p$ is sufficiently far away from $\bm 0$ --- and thus we cannot define $p_{\bm p}$.
To remedy this, we make two modifications. First, we assume that our distribution is $(\varepsilon/4,R)$-concentrated --- that $1-\varepsilon/4$ fraction of the mass of the unknown distribution $\caD$ is in $\ball(\bm0, R)$. We define $g$ only on points within $\ball(\bm0, R)$ and we will not test whether $f$ differs from a degree-$d$ polynomial outside of $\ball(\bm 0, R)$; as these point constitute only a small $\varepsilon/4$ portion of $\caD$, which we can simply fold into the error of our tester. Second, in order to ensure that for any point $\bm p \in \ball(\bm 0,R)$, $g$ is defined on at least $d+1$ points on the line $\pline_{\bm 0, \bm p}$ within $\ball(\bm 0,r)$, we define $g$ on a finer lattice $\caL' \triangleq \frac{r}{(d+1)R} \caL = \frac{r}{BR(d+1)} \mathbb{Z}^n$ within $\ball(\bm 0, r)$.

\paragraph{The Self-Corrected Function.} Let $R>r>0$, and let our (unknown) $(\varepsilon/4,R)$-concentrated distribution $\caD$ be supported over a given lattice $\caL = \frac{1}{B}\mathbb{Z}^n$. Let $\caL' \triangleq \frac{r}{(d+1)R} \caL$ be a refinement of $\caL$. We define the self-corrected function $g$, whose domain is $\caL \cup (\caL' \cap B(\bm 0, r) )$, as follows. 
Let $\alpha_i \triangleq {(-1)}^{i+1}\binom{d+1}{i}$, and for any $\bm p \in \ball(\bm 0,r)\cap \caL'$, and $\bm q \in \mathcal{L}$, let $g_{\bm q}(\bm p) \triangleq \sum_{i=1}^{d+1} \alpha_i \cdot f(\bm p+i \bm q).$
For any $\bm p \in \ball(\bm 0,r)\cap  \caL'$, we define 
\[g(\bm p)\triangleq\maj_{\bm q\sim\caG(\caL,1)}[g_{\bm q}(\bm p)].\] 

For points $\bm p \in (B(\bm 0, R) \setminus \ball(\bm 0, \srad)) \cap \caL$ we define the value of $g(\bm p)$ by interpolating a degree-$d$ univariate polynomial along the line $\pline_{\bm 0,\bm p}$ as follows: Let $c_0, \ldots, c_d \in \mathbb{R}$ be $d+1$ ``distinguished'' points (arbitrary, but fixed) on the line $\pline_{\bm 0, \bm p}$ within $\ball(\bm 0, \srad) \cap \frac{r}{R(d+1)} \caL$; in \autoref{alg:subroutines-discrete} we choose $c_i = ir/((d+1)\|\bm p\|_2)$ and note that these points lie within $\caL'$. Let $p_{\bm p}$ be the unique univariate polynomial such that $p_{\bm p}(c_i) = g(c_i \bm p)$ for every $i \in [d+1]$. We define $g(\bm p) \triangleq p_{\bm p}(1)$.

Our tester is given in \autoref{alg:low_degree_main_algorithm_discrete}, with corresponding subroutines in \autoref{alg:subroutines-discrete}.

\begin{algorithm}[h]
\caption{Low-Degree Discrete Tester}\label{alg:low_degree_main_algorithm_discrete}
\Procedure{\emph{\Call{DiscreteLowDegreeTester}{$f,d,\mathcal{D},\eps,\lrad,B$}}}{
  \Given{Query access to $f$, a degree $d\in\mathbb{N}$, sampling access to an $(\varepsilon/4,\lrad)$-concentrated unknown distribution $\mathcal{D}$ supported over the lattice $\caL\triangleq\frac{1}{B}\Z^n$, where $\eps$ is the farness parameter, and $B$ is the density parameter.}
    \textbf{Reject} if \emph{\Call{DiscreteCharacterizationTest}{}}  \textbf{rejects}\;
    \For{$N_{\ref{alg:low_degree_main_algorithm_discrete}} \gets O(\varepsilon^{-1})$ times}{
      Sample $\bm{p} \sim \mathcal{D}$;\\
      \If{$\bm p\in\ball(\bm 0,\lrad)$}{
      \textbf{Reject} if $f(\bm{p}) \neq$ \emph{\Call{DiscreteQuery-$g$}{$\bm{p}$}} or if \emph{\Call{DiscreteQuery-$g$}{$\bm{p}$}}  \textbf{rejects}.
    }}
    \textbf{Accept}. 
  }  
\end{algorithm}
\begin{algorithm}[!t]
\caption{Discrete Subroutines}\label{alg:subroutines-discrete}

\Procedure{\emph{\Call{DiscreteCharacterizationTest}{}}}{
$N_{\ref{alg:subroutines-discrete}} \gets O(d^2)$ \;
\For{$N_{\ref{alg:subroutines-discrete}}$ times }{
\For{$j\in \{1,\dots, d+1\}$}{
    \For{$t\in \{0,\dots, d+1\}$}{
      Sample $\bm{p}\sim\caG(j\caL,j\sqrt{t^2+1}),\bm{q}\sim\caG(\caL,1)$;\Comment{[$j^2(t^2+1)$ vs. $1$ Test.]}\\
      \textbf{Reject} if $\sum_{i=0}^{d+1}\alpha_i\cdot f(\bm{p}+i\bm{q})\neq 0$\;
      Sample $\bm{p}\sim\caG(j\caL,j),\bm{q}\sim\caG(\caL,\sqrt{t^2+1})$;\Comment{[$j^2$ vs. $(t^2+1)$ Test.]}\\
      \textbf{Reject} if $\sum_{i=0}^{d+1}\alpha_i\cdot f(\bm{p}+i\bm{q})\neq 0$\;
      }
      Sample $\bm{p},\bm{q}\sim\caG(j\caL,j)$;\Comment{[$j^2$ vs. $j^2$ Test.]}\\
      \textbf{Reject} if $\sum_{i=0}^{d+1}\alpha_i\cdot f(\bm{p}+i\bm{q})\neq 0$\;
    }}
    \textbf{Accept}\;
    }

\Procedure{\emph{\Call{DiscreteQuery-$g$}{$\bm{p}$}}}{
$r \gets d\sqrt{n}/(2B)$ \;
\If{$\bm{p}\in \ball(\bm{0},\srad)$}{
\textbf{return} \emph{\Call{DiscreteQuery-$g$-InBall}{$\bm{p}$}}\;}

\For{$i\in [d+1]$}{
$c_i \gets \frac{i\srad}{(d+1)\Vert \bm{p}\Vert_{2}}$\;
 $v(c_i)\gets$ \emph{\Call{DiscreteQuery-$g$-InBall}{$c_i\bm{p}$}}  \;
}
Let $p_{\bm p} \colon \mathbb{R}\to\mathbb{R}$ be the unique degree-$d$ polynomial such that $p_{\bm p}(i) = v(c_i)$  for $i \in [d+1]$\;
\textbf{return} $p_{\bm p}(1)$\;
}

\Procedure{\emph{\Call{DiscreteQuery-$g$-InBall}{$\bm{p}$}}}{
$N'_{\ref{alg:subroutines-discrete}}  \gets O(\log\frac{1}{\varepsilon})$\;
Sample $\bm q_1,\dots,\bm q_{N'_{\ref{alg:subroutines-discrete}}}\sim \caG \big(\frac{r}{R(d+1)}\caL,1 \big)$\;
\textbf{Reject} if there exists $j\in\set{2,\dots,N'_{\ref{alg:subroutines-discrete}}}$ such that $ \sum_{i=1}^{d+1} \alpha_i \cdot f(\bm{p}+i\bm{q}_1)\neq \sum_{i=1}^{d+1} \alpha_i \cdot f(\bm{p}+i\bm{q}_j)$\;
\textbf{return} $ \sum_{i=1}^{d+1} \alpha_i \cdot f(\bm{p}+i\bm{q}_1)$\;
}
\end{algorithm}

In the remainder of this section we will prove \autoref{thm:low_degree_test_discrete}. However, before we are able to do so, we require several structural results about Lattices and discrete Gaussians, which will occupy the next subsection.

\subsection{Preliminaries on Lattices and  Discrete Gaussians}\label{sec:prelim_lattices}
First, we recall that many of the properties of Gaussian distributions translate over to their discrete variants. 

\begin{fact}[Fact 2 in~\cite{AGHS13}]\label{fact:discrete-int-multiple}
Suppose $\caL$ is a lattice, and $s>0$ is a parameter. If $\bm x$ is distributed as $\caG(\caL, s)$, then for any integer $t$, $t\bm x$ is distributed as $\caG(t\caL, ts)$.  
\end{fact}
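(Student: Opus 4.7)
The plan is to verify the claim directly from the definition of the discrete Gaussian in \autoref{defn:discrete_gaussian}, since this is a pushforward computation rather than a deep structural fact. I would assume $t$ is a nonzero integer throughout (the case $t=0$ being degenerate); then the map $\bm x \mapsto t\bm x$ is a bijection from $\caL$ onto $t\caL$, so the pushforward of any probability distribution on $\caL$ is a well-defined distribution on $t\caL$.

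First I would check that the supports agree: if $\bm x \in \caL$ then $t\bm x \in t\caL$, and conversely every $\bm y \in t\caL$ is of the form $t\bm x$ for a unique $\bm x \in \caL$. Second, for any such $\bm y = t\bm x$, I would compute the unnormalized density of $t\bm x$ under the pushforward of $\caG(\caL,s)$, which equals
\[
\tau_s(\bm x) = \exp\!\left(-\pi\|\bm x\|^2/s^2\right) = \exp\!\left(-\pi\|t\bm x\|^2/(ts)^2\right) = \tau_{ts}(\bm y),
\]
matching (up to normalization) the density of $\caG(t\caL, ts)$ at $\bm y$.

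Finally I would verify that the normalization constants agree, by reindexing the partition function of $\caG(t\caL, ts)$ via $\bm y = t\bm x$:
\[
\sum_{\bm y \in t\caL} \tau_{ts}(\bm y) = \sum_{\bm x \in \caL} \tau_{ts}(t\bm x) = \sum_{\bm x \in \caL} \tau_s(\bm x).
\]
Combined with the pointwise equality of unnormalized densities and the bijectivity of $\bm x \mapsto t\bm x$, this yields that the pushforward of $\caG(\caL, s)$ under multiplication by $t$ is exactly $\caG(t\caL, ts)$.

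There is no real obstacle here; the only subtlety is remembering that $t$ must be nonzero (so that $t\caL$ is a genuine lattice and the map is a bijection) and being careful to match normalizations on both sides. Since the excerpt invokes this fact later with $t$ a positive integer arising from the scaling in \Call{DiscreteCharacterizationTest}{}, the stated hypothesis is already sufficient.
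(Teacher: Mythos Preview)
Your proposal is correct. The paper does not give its own proof of this statement---it is quoted as a fact from~\cite{AGHS13}---so there is nothing to compare against; your direct pushforward computation from \autoref{defn:discrete_gaussian} is exactly the natural verification, and your caveat about $t\neq 0$ is appropriate.
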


We implicitly use this fact to sample random vectors from scaled discrete gaussians in \autoref{alg:subroutines-discrete}. Next, we record a bound on the total variation distance between two Gaussians which is analogous to \autoref{lem:bound-TV-different-means}. To state the theorem, we need the following smoothing parameter defined in~\cite{MR07} as a parameter of a lattice with the following property: if one picks a noise vector from a Gaussian distribution with radius at least as
large as the smoothing parameter, and reduces the noise vector modulo the fundamental parallelopiped of the lattice, then the resulting distribution is very close to uniform.

\begin{defn}
For a lattice $\caL \subseteq \R^n$ and a parameter $\vartheta > 0$, the \emph{smoothing parameter} $\eta_\vartheta(\caL)$ is the
smallest $s>0$ such that:
\[
\tau_{1/s}\left(\caL^*\setminus\{\bm 0\}\right)\triangleq\sum_{\bm x\in\caL^*\setminus\{\bm 0\}}\tau_{1/s}(\bm x)=\sum_{\bm x \in \caL^*\setminus \{\bm 0\}} \exp\left(-{\pi \|\bm x\|^2}{s^2}\right) \leq \vartheta,
\]
where $\tau$ is defined in \autoref{defn:discrete_gaussian}, and $\caL^* \triangleq \{{\bm x} \in \mathrm{span}(\caL): \forall \bm y \in \caL, \langle \bm x, \bm y\rangle \in \Z\}$ is the \emph{dual lattice} of $\caL$.
\end{defn}

\begin{obs}\label{obs:relating-eta-of-int-refinements}
For any lattice $\caL$, parameter $\vartheta>0$, and $i\in\mathbb{Z}_{\geq 2},{(i\caL)}^*=\frac{1}{i}\caL^*$, and
\begin{align*}
    \eta_\vartheta(i\caL)=\min_{s>0}\bigg\{\sum_{\bm x\in{(i\caL)}^*\setminus\{\bm 0\}}\exp\left(-\pi\|\bm x\|^2s^2\right)\bigg\}=\min_{s>0}\bigg\{\sum_{\bm x\in\caL^*\setminus\{\bm 0\}}\exp\left(-\pi\|\bm x\|^2s^2/i^2\right)\bigg\}\leq i^2\eta_\vartheta(\caL).
\end{align*}
\end{obs}
The next theorem follows from (a quantified version of) Theorem 3.3 of~\cite{MP13}, which we prove in \autoref{sec:AppendixC}.
Since $\vartheta$ is taken as a negligible function of $n$, so for small $k$, $4k\vartheta\ll 1$. Later, while invoking it, we will set $k\leq d+2$, which satisfies the restriction on $k\text{, and }\vartheta$ contained therein.

\begin{thm}\label{lem:indsum}
Let $k\in\Z_{>0}$, $\vartheta\in\R_{>0}$ be such that $k\leq 1/(4\vartheta)$. Let $\caL \subseteq \R^n$ be a lattice, and let $\bm s \in\R_{>0}^k$ and $\bm z \in\Z^k$ be such that $s_i\geq\sqrt{2}\|\bm z\|_{\infty}\eta_{\vartheta}(\caL)$ for every $i\in[k]$. If ${\bm y}_1, \dots, {\bm y}_k$ are sampled independently from $\bm y_i\sim\caG(\caL, s_i)$ then the distribution of $\sum_{i=1}^k z_i{\bm y}_i$ is $4k\vartheta$-close in total variation distance to
$\caG(\gcd(\bm z)\caL, \sqrt{\sum_{i=1}^k{(z_i s_i)}^2})$.
\end{thm}

Next, we recall a simple bound on the total variation distance incurred by shifting the center of a discrete Gaussian. Denote by $\mathrm{erf}(\cdot)$, the Gaussian error function.

\begin{lem}[Remark from Lemma 6 of~\cite{AGHS13}]\label{lem:shift}
For any full-rank lattice $\caL \subseteq \R^n$, $\vartheta \in (0,1/2)$, $c>1$ and a parameter $s$ such that $s > (1+2c)\eta_\vartheta(\caL)$, the following holds:
The total variation distance between $\caG(\caL, s)$ and $\caG({\bm v} + \caL, s)$  for any ${\bm v} \in \caL$ is at most 
\[
\frac{\mathrm{erf}(q)}{\mathrm{erf}(qc)}\cdot \frac{1+\vartheta}{1-\vartheta},
\]
where $q= \|\bm v\|^2_2\sqrt{\pi}/s$.
\end{lem}

By combining \autoref{lem:shift} and \autoref{obs:relating-eta-of-int-refinements} we obtain the following corollary, which bounds the distance between two discrete Gaussians.

\begin{cor}\label{cor:discrete_gaussian_tvd_bound}
    Let $i \in [d+1]$, $t$ be a non-negative integer,  $\vartheta \in (0,1/2)$, $\srad \leq d \sqrt{\eta_\vartheta(\caL)}$, and let $\caL = \frac{1}{B}\mathbb{Z}^n$. Then, for any $\bm p \in \ball(\bm 0, \srad) \cap (d+1)!\caL$,
    \[ \dtv\left(\caG(\bm p + i \caL,it), \caG(i \caL, it)\right) \leq 98d^2 \eta_\vartheta(\caL).  \] 
\end{cor}
\begin{proof} 
    To bound the total variation distance, we aim to apply \autoref{lem:shift}. To do so, we need to choose $c$ such that $it > (1+2c) \eta_\vartheta(i\mathcal{L})$. Observe that
    \[ (1+2c) \eta_\vartheta(i \caL) \leq (1+2c)i^2 \eta_\vartheta(\caL) \leq (1+2c)(d+1)^2\eta_\vartheta(\caL) <12cd^2 \eta_\vartheta(\caL), \]
    where the first inequality follows by \autoref{obs:relating-eta-of-int-refinements}. Thus, letting $c \triangleq t(12d^2 \eta_\vartheta(\caL))^{-1}$ we have $(1+2c) \eta_\vartheta(i \caL) <t \leq it$. Applying \autoref{lem:shift}, 
    \[ \dtv( \caG(\bm p+i\caL, it), \caG(i\caL, it)) \leq  \frac{(1+\vartheta)\mathrm{erf}(\|\bm p\|^2_2\sqrt{\pi}/it)}{(1-\vartheta)\mathrm{erf}(c\|\bm p\|^2_2\sqrt{\pi}/it)}.\]
Note that because $\mathrm{erf}(x)\triangleq\Pr_{y\sim \cN(0,1/2)} [y\in[-x,x] ]$ and the PDF of $\cN(0,1/2)$ is $\frac{1}{\sqrt{\pi}}e^{-x^2}$, we have $\frac{2x}{\sqrt\pi}e^{-x^2}\leq\mathrm{erf}(x)\leq \frac{2x}{\sqrt\pi}$.
This means that for any $q>0$, $\frac{e^{-q^2}}{c}\leq\frac{\mathrm{erf}(q)}{\mathrm{erf}(qc)}\leq  \frac{e^{c^2q^2}}{c}$, 
and so if $q \leq e/c$, it holds that $\frac{\mathrm{erf}(q)}{\mathrm{erf}(qc)}\leq \frac{e}{c}$.
Because $\bm p \in B(\bm 0, r)$, $q \triangleq \|\bm p\|_2^2 \sqrt{\pi}/it \leq r^2 \sqrt{\pi}/it \leq e/c$ by our choice of $c$ and $r$. Thus, 
\[ \frac{(1+\vartheta)\mathrm{erf}(\|\bm p\|^2_2\sqrt{\pi}/it)}{(1-\vartheta)\mathrm{erf}(c\|\bm p\|^2_2\sqrt{\pi}/it)} \leq \frac{1+\vartheta}{1-\vartheta} (e/c) \leq 3e/c = 98d^2 \eta_\vartheta(\caL), \] 
where the second inequality follows because $\vartheta \in (0,1/2)$.
\end{proof}

\subsubsection{Correctness of the Discrete Low Degree Tester}

We are now ready to prove \autoref{thm:low_degree_test_discrete}.
In fact, we prove a more general result, which handles lattices parameterized by their smoothness parameter. We state this generalization next. 

\begin{thm}\label{thm:discrete_precise}
     Let $R>r>0$ and $B',d>0$ satisfy  $\srad B'> (d+1)!n^{3/2+d}$. Let  $\vartheta \leq (192d)^{-1}$ be such that $\eta_\vartheta(\frac{1}{B'}\mathbb{Z}^n) \leq (6d)^{-4}$, and $\srad \leq d \sqrt{\eta_\vartheta(\frac{1}{B'}\mathbb{Z}^n)}$. Let $\caL = \frac{R (d+1)}{\srad B'} \mathbb{Z}^n$, $\caL' = \frac{1}{B'}\mathbb{Z}^n$, $\varepsilon>0$, and $f:\R^n\to\R$.
     There is a one-sided error,  $O(d^5+\frac{d^2}{\varepsilon}\log\frac{1}{\varepsilon})$-query tester for testing whether $f$ agrees with a degree-$d$ polynomial on $\caL$ with respect to an $(\varepsilon/4,R)$-concentrated distribution $\caD$ supported over $\caL$. 
\end{thm}
  \autoref{thm:low_degree_test_discrete} follows by letting $B = \frac{r}{R(d+1)} B'$, setting $\vartheta = 2^{-n}$, $r=\frac{(d+1)n^{1/4}}{4\sqrt{B'}}$, and $B' > 16 n^{5/2+2d}\cdot d^{2d}$. To see that the inequalities in the statement of \autoref{thm:discrete_precise} are satisfied, we use the following bound which can be found in \cite{MR07}:
 \[ \frac{\sqrt{n/\pi}}{B} \leq \eta_{2^{-n}} \left(\frac{1}{B} \mathbb{Z}^n \right) \leq\frac{\sqrt{n}}{B}. \]

The following lemma records the properties of $g$ which are guaranteed by our tester.

\begin{lem}\label{lem:discrete_polynomial_lem}
    Assume that the conditions of \autoref{thm:discrete_precise} hold. If \Call{DiscreteCharacterizationTest}{} fails with probability at most $2/3$, then $g$ consistent with a degree-$d$ polynomial within $\ball(\bm 0, R) \cap \caL$, and furthermore for every $\bm p\in\ball(\bm 0,\lrad)\cap\caL$, $g(\bm{p})=$ \Call{DiscreteQuery-$g$}{$\bm{p}$} with probability at least $1-\frac{\varepsilon}{4}$.
\end{lem}

We prove the main theorem assuming that this lemma holds.

\begin{proof}[Proof of \autoref{thm:discrete_precise}]
The proof follows the same argument as the proof of \autoref{thm:low_degree_test}, with two small changes. First, we use \autoref{lem:discrete_polynomial_lem} in place of \autoref{lem:main_lemma_well_definedness_and_querying_g}. Second, since we only test points within $\ball(\bm 0, R)$, we err on those points which are not. However, since $\caD$ is $(\varepsilon/4,R)$-concentrated, the probability mass of these points is at most $\varepsilon/4$, and this is folded into the error probability of our tester. 

\end{proof}

\subsection{Polynomial Representation on Lines Within a Small Ball}
We turn now to proving \autoref{lem:discrete_polynomial_lem}; this will be done over the following three subsections.
First, we prove that $g$ is consistent with a degree-$d$ polynomial on every line within $\ball(\bm 0, \srad)$.

\begin{thm}{(Polynomial representation on lines)}\label{lem:g-on-lines-in-B-is-poly-discrete}
Let $B,r>0$ be such that $(rB/(d+1)!n^{3/2+d})^n > 1$.
Let $\vartheta \leq (192d)^{-1}$ and $\caL = \frac{1}{B}\mathbb{Z}^n$ be a lattice such that $\eta_\vartheta(\caL) \leq (6d)^{-4}$.
If \Call{DiscreteCharacterizationTest}{} fails with probability at most $2/3$,
then for any $\bm{a},\bm{b}\in \ball(\bm 0,\srad)\cap (d+1)!\caL$, there is a degree-$d$ univariate polynomial which is consistent with $g_{\bm a, \bm b}(x)=g(\bm a+x\bm b)$ on every point $x$ such that $\bm a + x \bm b \in \ball(\bm 0,\srad) \cap (d+1)!\caL$. 
\end{thm}

The proof of \autoref{lem:g-on-lines-in-B-is-poly-discrete}  follows by exactly the same argument as the proof of \autoref{lem:g-on-lines-in-B-is-poly}, using the \discreteLocalCharacterization, and \autoref{thm:characterization_in_small_ball_discrete} in place of \autoref{thm:characterization_in_small_ball}. 

\begin{lem}\label{thm:characterization_in_small_ball_discrete}
    Let $B,r>0$ be such that $(rB/(d+1)!n^{3/2+d})^n > 1$.
    Let $\vartheta \leq (64(d+2))^{-1}$ and $\caL = \frac{1}{B}\mathbb{Z}^n$ be a lattice such that $\eta_\vartheta(\caL) \leq (6d)^{-4}$.
    If \Call{DiscreteCharacterizationTest}{} fails with probability at most $2/3$, then for every 
    $\bm{p},\bm{q}\in\ball(\bm 0,\srad)\cap (d+1)!\caL$ such that $\bm p+i\bm q \in \ball(\bm 0, \srad) \cap (d+1)!\caL$ for every $i\in[d+1]$, we have 
     $\sum_{i=0}^{d+1}\alpha_i\cdot g(\bm{p}+i\bm{q})=0$.
\end{lem}
In the remainder of this subsection, we will prove \autoref{thm:characterization_in_small_ball_discrete}.



Let $\rho$ denote the bound of the  probability that each of the tests in the \Call{DiscreteCharacterizationTest}{} fails.
That is, for every $j\in \{1,\dots,d+1\}$ and $t\in \{0,\dots,d+1\}$, the following are bounded: 

  \begin{align}
      \Pr_{\substack{\bm{p}\sim\caG(j\caL,j\sqrt{t^2+1})\\ \bm{q}\sim\caG(\caL,1)}}\left[\sum_{i=0}^{d+1}\alpha_i\cdot f(\bm{p}+i\bm{q})\neq 0\right]&\leq \rho. && \text{[$j^2(t^2+1)$ vs. $1$ Test.]} \label{eq:test_bound_1_discrete}\\
      \Pr_{\substack{\bm{p}\sim\caG(j\caL,j)\\ \bm{q}\sim\caG(\caL,\sqrt{t^2+1})}}\left[\sum_{i=0}^{d+1}\alpha_i\cdot f(\bm{p}+i\bm{q})\neq 0\right]&\leq \rho. && \text{[$j^2$ vs. $(t^2+1)$ Test.]}\label{eq:test_bound_2_discrete}\\
      \Pr_{\bm p,\bm q\sim\caG(j\caL,j)}\left[\sum_{i=0}^{d+1}\alpha_i\cdot f(\bm{p}+i\bm{q})\neq 0\right]&\leq \rho. && \text{[$j^2$ vs. $j^2$ Test.]}\label{eq:test_bound_3_discrete}
  \end{align}

We first bound $\rho$, by an identical argument, used earlier in \autoref{claim:rho-is-small} to bound $\rho$ in the exact case:
\begin{claim}\label{claim:rho-is-small-discrete}
If \Call{DiscreteCharacterizationTest}{} fails with probability at most $2/3$, then $\rho\leq(4d)^{-2}$.
\end{claim}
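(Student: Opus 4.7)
The plan is to mirror the proof of \autoref{claim:rho-is-small} from the exact case, since the structure of \Call{DiscreteCharacterizationTest}{} is an exact parallel of \Call{CharacterizationTest}{}: each of the three test types \eqref{eq:test_bound_1_discrete}, \eqref{eq:test_bound_2_discrete}, \eqref{eq:test_bound_3_discrete} is independently repeated $N_{\ref{alg:subroutines-discrete}} = O(d^2)$ times (across all choices of $j \in \{1,\dots,d+1\}$ and $t \in \{0,\dots,d+1\}$), and rejection of any single invocation causes overall rejection. So this is a routine amplification argument.

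I would proceed by contraposition. Suppose that at least one of the three per-iteration tests, for some fixed $j$ and $t$, had failure probability strictly larger than $(4d)^{-2}$. Since the $N_{\ref{alg:subroutines-discrete}}$ iterations in the outer \textbf{for}-loop draw fresh independent samples of $(\bm p, \bm q)$ from the same pair of discrete Gaussians, the probability that every repetition of this particular test passes would be at most
\[
\bigl(1 - (4d)^{-2}\bigr)^{N_{\ref{alg:subroutines-discrete}}} \leq \exp\bigl(-N_{\ref{alg:subroutines-discrete}}/(4d)^2\bigr).
\]
By choosing the hidden constant in $N_{\ref{alg:subroutines-discrete}} = O(d^2)$ sufficiently large (e.g.\ $N_{\ref{alg:subroutines-discrete}} \geq 16 d^2 \ln 3$), this probability is strictly less than $1/3$. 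But \Call{DiscreteCharacterizationTest}{} rejects whenever even one invocation of any sub-test rejects, so its overall acceptance probability would then also be less than $1/3$, i.e.\ its failure probability would exceed $2/3$, contradicting the hypothesis.

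Hence each of the three tests \eqref{eq:test_bound_1_discrete}, \eqref{eq:test_bound_2_discrete}, \eqref{eq:test_bound_3_discrete} must individually have failure probability at most $(4d)^{-2}$, for every $j$ and $t$; by definition of $\rho$ as the smallest common upper bound, $\rho \leq (4d)^{-2}$. The main (mild) obstacle is just to verify that the constant in $N_{\ref{alg:subroutines-discrete}} = O(d^2)$ is large enough to drive $(1-(4d)^{-2})^{N_{\ref{alg:subroutines-discrete}}}$ below $1/3$; this is a one-line calculation. No property specific to the discrete Gaussian is used at this stage—the argument treats each test as a black-box Bernoulli trial—so the proof is essentially a direct transcription of the one given for \autoref{claim:rho-is-small}, with the bound $(30d)^{-2}$ replaced by $(4d)^{-2}$.
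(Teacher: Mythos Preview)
Your proposal is correct and follows essentially the same amplification argument as the paper, which explicitly states that the claim is proved ``by an identical argument, used earlier in \autoref{claim:rho-is-small}.'' The only difference from the exact case is the target bound $(4d)^{-2}$ in place of $(30d)^{-2}$, and your contrapositive computation handles this exactly as the paper's one-line proof of \autoref{claim:rho-is-small} does.
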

Then we bound the probability that $g_{\bm q_1}$ and $g_{\bm q_2}$ differ, in the intersection of $\ball(\bm 0,\srad)$ with  $(d+1)!\caL$:

\begin{lem}\label{lem:well_defined_of_g_discrete}
    Let $\caL$ denote the lattice $\frac1B \Z^n$. 
    Let $\eta_\vartheta \in (0,1/2)$, and $r \leq d \sqrt{\eta_\vartheta(\caL)}$. Then, for every $\bm{p}\in \ball(\bm{0},\srad)\cap (d+1)!\caL$ and every $t\in S\triangleq\{\sqrt{i^2+1}:i\in \{0,\ldots, d+1\}\}$,
   \[
   \Pr_{\substack{\bm{q}_1\sim \caG(\caL, t)\\ \bm{q}_2\sim\caG(\caL, 1)}}\left[g_{\bm{q}_1}(\bm{p})\neq g_{\bm{q}_2}(\bm{p})\right]\leq 2(d+1) \left(\rho + 196d^2 \eta_\vartheta(\caL) \right). 
   \] 
\end{lem}
\begin{proof}
We follow the proof of \autoref{lem:well_defined_of_g}. Fix $\bm p$, $t$ as in the statement of the lemma. For $i \in [d+1]$, will bound the following probability
\begin{align*}
	&\Pr_{\substack{\bm{q}_1\sim\caG(\caL, t)\\ \bm{q}_2\sim\caG(\caL, 1)}} \Big[ f(\underbrace{\bm{p}+i \bm{q}_1}_{\triangleq\bm m}) \neq g_{\bm{q}_2}(\bm{p}+i\bm{q}_1) \Big]=\Pr_{\substack{\bm{m}\sim \caG(\bm p+i\caL, it) \\ \bm{q}_2\sim\caG(\caL, 1)}} \Big[ f(\bm m) \neq g_{\bm{q}_2}(\bm m) \Big], \\
	\leq & \Pr_{\substack{\bm{m}\sim \caG(i\caL, it) \\ \bm{q}_2\sim\caG(\caL, 1)}} \Bigg[ \sum_{j=0}^{d+1}\alpha_i\cdot f(\bm m+j\bm q_2)\neq 0\Bigg] + 2\dtv( \caG(\bm p+i\caL, it), \caG(i\caL, it)),\tag{By definition of $g_{\bm q_2}(\bm m)$} \\
	\leq &\; \rho + 196d^2 \eta_\vartheta(\caL),
\end{align*}
where the bound on the second term follows from \autoref{cor:discrete_gaussian_tvd_bound},
and the first term is bounded by the rejection probability $\rho$, by \eqref{eq:test_bound_1_discrete}.
Note that we can indeed apply \autoref{cor:discrete_gaussian_tvd_bound}, since $\bm p \in (d+1)!\caL$ guarantees that $\bm p \in i\caL$ for every $i \in [d+1]$.

By the same argument as above, with \eqref{eq:test_bound_1_discrete} replaced by \eqref{eq:test_bound_2_discrete}, for any $j \in [d+1]$, we can bound 
\begin{align*}
&\Pr_{\substack{\bm{q}_1\sim\caG(\caL,t)\\ \bm{q}_2\sim\caG(\caL,1)}} \Big[ f(\underbrace{\bm{p}+j\bm{q}_2}_{\triangleq\bm m}) \neq g_{\bm{q}_1}(\bm{p}+j\bm{q}_2) \Big]=\Pr_{\substack{\bm m \sim \caG(\bm p+j\caL, j)\\ \bm{q}_1\sim\caG(\caL,t) }} \Big[ f(\bm m) \neq g_{\bm{q}_1}(\bm m) \Big], \\
\leq &\Pr_{\substack{\bm m \sim \caG(j\caL, j)\\ \bm{q}_1\sim\caG(\caL,t) }} \left[ \sum_{i=0}^{d+1}\alpha_i\cdot f(\bm m+i\bm q_1)\neq 0 \right] + 2\dtv(\caG(\bm p+j\caL, j), \caG(j\caL, j)), \tag{By definition of $g_{\bm q_1}(\bm m)$} \\
\leq &\; \rho + 196d^2 \eta_\vartheta(\caL).
\end{align*}

Taking a union bound over all $i,j \in [d+1]$ gives 
\begin{align*}
	\Pr_{\substack{\bm{q}_1\sim\caG(\caL,t)\\ \bm{q}_2\sim\caG(\caL,1)}} \Bigg[ \underbrace{\sum_{i=1}^{d+1} \alpha_i  \cdot f(\bm{p}+i\bm{q}_1)}_{=g_{\bm{q}_1}(\bm{p})} \neq \sum_{i=1}^{d+1} \sum_{j=1}^{d+1} \alpha_i \alpha_j  \cdot f((\bm{p}+i\bm{q}_1) + j\bm{q}_2) \Bigg] \leq (d+1)\left(\rho + 196d^2 \eta_\vartheta(\caL) \right), \\
	\Pr_{\substack{\bm{q}_1\sim\caG(\caL,t)\\ \bm{q}_2\sim\caG(\caL,1)}} \Bigg[ \underbrace{\sum_{j=1}^{d+1} \alpha_j \cdot f(\bm{p}+j\bm{q}_2)}_{=g_{\bm{q}_2}(\bm{p})} \neq \sum_{j=1}^{d+1} \sum_{i=1}^{d+1} \alpha_i \alpha_j \cdot f((\bm{p}+j\bm{q}_2) + i\bm{q}_1) \Bigg] \leq (d+1) \left(\rho + 196d^2 \eta_\vartheta(\caL) \right).
\end{align*}
Thus, by a final union bound, we can conclude that
\[ \Pr_{\substack{\bm{q}_1\sim\caG(\caL,t)\\ \bm{q}_2\sim\caG(\caL,1)}} [g_{\bm{q}_1}(\bm{p}) \neq g_{\bm{q}_2}(\bm{p})] \leq 2(d+1) \left(\rho + 196d^2 \eta_\vartheta(\caL) \right). \qedhere \]
\end{proof}

An immediate corollary is the following.

\begin{cor}\label{cor:replacement-is-good-discrete}
    If \Call{DiscreteCharacterizationTest}{} fails with probability at most  $2/3$, and $\eta_\vartheta(\caL) \leq (6d)^{-4}$, then for every $\bm p \in \ball(\bm 0,\srad) \cap \frac{(d+1)!}{B}\Z^n$ and every  $t\in \{0,\ldots, d+1\}$,
    \[ \Pr_{\bm{q}\sim\caG\left(\caL,\sqrt{t^2+1}\right)}[g(\bm p)\neq g_{\bm{q}}(\bm{p})] < \frac{1}{4(d+2)}. \]
\end{cor}
\begin{proof}By \autoref{claim:rho-is-small-discrete}, $\rho$ is at most $(4d)^{-2}$. Observe, for any $t \in \{0,\ldots, d+1\}$,
     \begin{align*} \Pr_{\bm{q}\sim\caG\left(\caL,\sqrt{t^2+1}\right)}[g(\bm p)\neq g_{\bm{q}}(\bm{p})] &\leq \Pr_{\bm q_1 \sim \caG(\caL, 1)}[g(\bm p) \neq g_{\bm q_1}(\bm p)] + \Pr_{\substack{\bm{q}\sim\caG\left(\caL,\sqrt{t^2+1}\right)\\ \bm q_1 \sim \caG(\caL, 1)}}[g_{\bm q}(\bm p) \neq g_{\bm q_1}(\bm p)] \\
     &\leq 4(d+1) \left(\rho + 196d^2 \eta_\vartheta(\caL) \right), \tag{By \autoref{lem:well_defined_of_g_discrete}}
     \end{align*}
     which is at most $(4(d+2))^{-1}$, since $\rho\leq(4d)^{-2}$, and by our assumptions on  $\eta_\vartheta(\caL)$.
\end{proof}

Finally, we are ready to prove \autoref{thm:characterization_in_small_ball_discrete}, the discrete analog of \autoref{thm:characterization_in_small_ball}.

\begin{proof}[Proof of \autoref{thm:characterization_in_small_ball_discrete}]By \autoref{claim:rho-is-small-discrete}, $\rho$ is at most $(4d)^{-2}$.
By the same argument as in the proof of \autoref{thm:characterization_in_small_ball}, it is sufficient to show that for $\bm q_1,\bm q_2\sim\caG(\caL,1)$, the following two events hold simultaneously with non-zero probability, for every $\bm{p},\bm{q}\in\ball(\bm 0,\srad)\cap (d+1)!\caL$ such that $\bm p+i\bm q \in \ball(\bm 0, \srad) \cap (d+1)!\caL$ for every $i\in[d+1]$: 
\begin{equation}\label{eq:replacement-is-good-disc}
    \sum_{i=0}^{d+1} \alpha_i \cdot g(\bm{p}+i\bm{q}) =\sum_{i=0}^{d+1} \alpha_i\cdot g_{\bm{q}_1+i\bm{q}_2}(\bm{p}+i\bm{q})
\end{equation}
\begin{equation}\label{eq:post-replacement-is-good-disc}
    \sum_{i=0}^{d+1} \alpha_i \cdot f(\bm{p}+ j\bm{q}_1+ i(\bm{q}+j\bm{q}_2)) =0\text{, for every }j \in [d+1].
\end{equation}

We begin with~\eqref{eq:replacement-is-good-disc}:
{\allowdisplaybreaks
\begin{align}
    &\Pr_{\bm{q}_1,\bm{q}_2 \sim \caG(\caL,1)} \left[\sum_{i=0}^{d+1} \alpha_i \cdot g(\bm{p}+i\bm{q}) =\sum_{i=0}^{d+1}\alpha_i\cdot g_{\bm{q}_1+i\bm{q}_2}(\bm{p}+i\bm{q})\right] \nonumber	\\ 
	\geq & \Pr_{\bm{q}_1,\bm{q}_2 \sim \caG(\caL,1)} \left[g(\bm{p}+i\bm{q}) = g_{\bm{q}_1+i\bm{q}_2}(\bm{p}+i\bm{q}),~~ \forall i \in \{0,\ldots, d+1\} \right] \nonumber \\
	= &1-\Pr_{\bm{q}_1,\bm{q}_2\sim\caG(\caL,1)} \left[\exists i\in\{0,\ldots,d+1\}:g(\bm{p}+i\bm{q}) \neq g_{\bm{q}_1+i\bm{q}_2}(\bm{p}+i\bm{q}) \right] \nonumber \\
	\geq & 1-\sum_{i=0}^{d+1} \Pr_{\bm{q}_1,\bm{q}_2 \sim \caG(\caL,1)}\left[g(\bm{p}+i\bm{q}) \neq g_{\bm{q}_1+i\bm{q}_2}(\bm{p}+i\bm{q}) \right] \nonumber  \\
	= & 1-\sum_{i=0}^{d+1}\bigg(\Pr_{\bm{m}\sim\caG\left(\caL,\sqrt{i^2+1}\right)}\left[g(\bm{p}+i\bm{q})\neq g_{\bm{m}}(\bm{p}+i\bm{q}) \right] +2\dtv\left(\bm q_1 + i \bm q_2,\caG\left(\caL,\sqrt{i^2+1}\right)\right)\bigg). \label{eq:last_g_equals_0} 
\end{align}}
Since $\bm{p}+i\bm{q}\in \ball(\bm{0},\srad)\cap (d+1)!\caL$, the probability that $g(\bm{p}+i\bm{q})\neq g_{\bm{m}}(\bm{p}+i\bm{q})$ is at most  $(4(d+2))^{-1}$, by \autoref{cor:replacement-is-good-discrete}. As well, we can bound the total variation distance by $8 \vartheta$, by applying \autoref{lem:indsum} with parameters $k=2$, $\bm z= (1,i), \bm s = (1,1)$, and noting that $s_i =1  \geq \sqrt{2}i \eta_\vartheta(\caL)$. Thus,~\eqref{eq:last_g_equals_0} is at least 
\begin{align*}
     1-\sum_{i=0}^{d+1} \bigg( \frac{1}{4(d+2)} + 16\vartheta \bigg) \geq 1- \bigg(\frac{1}{4} + 16(d+2)\vartheta \bigg) \geq \frac{1}{2}.
\end{align*}

Next, we bound~\eqref{eq:post-replacement-is-good-disc}. Fix some $j \in [d+1]$, then 
\begin{align*}
    &\Pr_{\bm{q}_1,\bm{q}_2 \sim \caG(\caL,1)} \bigg[ \sum_{i=0}^{d+1} \alpha_i \cdot f(\bm{p}+j\bm{q}_1+i (\bm{q}+j\bm{q}_2 )) \neq 0 \bigg] \tag{Let $\bm z_1\triangleq\bm p+j\bm q_1,\bm z_2\triangleq\bm q+j\bm q_2$}\\
    \leq &\Pr_{\substack{\bm{z}_1 \sim \caG(\bm p+j\caL,j) \\ \bm{z}_2 \sim \caG(\bm q+j\caL,j)}} \bigg[ \sum_{i=0}^{d+1} \alpha_i \cdot f(\bm{z}_1 + i \bm{z}_2) \neq 0\bigg] +2\Big(\underbrace{\dtv (j\bm q_1,\caG (j\caL,j ) )}_{= 0\text{, by \autoref{fact:discrete-int-multiple}}} + \underbrace{\dtv (j\bm q_2,\caG (j\caL,j ) )}_{= 0\text{, by \autoref{fact:discrete-int-multiple} }}\Big) \\
    \leq & \Pr_{\substack{\bm{z}_1 \sim \caG(j\caL,j) \\ \bm{z}_2 \sim \caG(j\caL,j)}} \bigg[ \sum_{i=0}^{d+1} \alpha_i f(\bm{z}_1 + i \bm{z}_2) \neq 0\bigg] + 2 (\dtv(\caG(j\caL,j),\caG(\bm p+j\caL,j)) + \dtv(\caG(j\caL,j),\caG(\bm q+j\caL,j)) )  \\
    \leq & \Pr_{\substack{\bm{z}_1 \sim \caG(j\caL,j) \\ \bm{z}_2 \sim \caG(j\caL,j)}} \bigg[ \sum_{i=0}^{d+1} \alpha_i \cdot f(\bm{z}_1 + i \bm{z}_2) \neq 0\bigg] + 392d^2 \eta_\vartheta(\caL) \tag{By \autoref{cor:discrete_gaussian_tvd_bound}}\\
    \leq &\; \rho +392d^2 \eta_\vartheta(\caL). \tag{By \eqref{eq:test_bound_3_discrete}}
\end{align*}
Finally, by a union bound over all $j \in [d+1]$, the probability of event~\eqref{eq:post-replacement-is-good-disc} can be lower bounded,
\[ \Pr_{\bm{q}_1,\bm{q}_2 \sim \caG(\caL,1)} \Big[ \forall j \in [d+1], \sum_{i=0}^{d+1} \alpha_i \cdot f(\bm{p}+j\bm{q}_1 + i(\bm{q}+j\bm{q}_2)) =0 \Big] \geq 1-(d+1)(\rho +392d^2 \eta_\vartheta(\caL)) \geq 1- o(1),\]
where the last inequality follows by our assumptions on $\rho$ and $\eta_\vartheta(\caL)$.

A final union bound shows that~\eqref{eq:replacement-is-good-disc} and~\eqref{eq:post-replacement-is-good-disc} hold simultaneously with non-zero probability, and the theorem follows. 
\end{proof}

\subsection{Polynomial Within a Hypercube}

Next, we obtain the discrete analog of \autoref{lem:g-on-lines-in-B-is-poly}. We argue that if the conditions of \autoref{lem:g-on-lines-in-B-is-poly-discrete} are met, then $g$ is consistent with a degree-$d$ polynomial on a hypercube $[-r',r']^n \subseteq \ball(\bm 0, \srad)$; we take $r'= r/\sqrt{n}$ as this is a large hypercube which can be inscribed within the cube such that no point in $[-r',r']^n$ is extremal on the cube, however other values of $r'$ work as well. 
This is done in two steps: in \autoref{lem:g-on-lines-in-B-is-poly-discrete} we argue that $g$ is consistent with a polynomial of degree at most $dn$, and in \autoref{lem:radial_lines_consistent_force_low_degree_discrete} we reduce the degree to $d$. As before, let $\bm e_i$ denote the $i$th standard basis vector. 

\begin{lem}\label{lem:local_to_global_discrete} Assume that the assumptions of \autoref{lem:g-on-lines-in-B-is-poly-discrete} hold.
    Let $r'= r/\sqrt{n}$, let $d,B > 0$ satisfy $2rB/\sqrt{n}>d!$ and let $h \colon {[-r',r']}^n \to \mathbb{R}$. Then, the following holds:
     If for every $i \in [n]$ and $\bm a \in {[-r',r']}^n$ such that $\bm a_i = 0$, the restriction of $h$ to the line segment $L_{\bm a, \bm e_i}$, the univariate function $h_{\bm a, \bm e_i}$ is consistent with a degree-$d$ univariate polynomial on every input $x$ for which $\bm a+ x \bm e_i \in {[-r',r']}^n \cap \frac{(d+1)!}{B}\Z^n$, then $h$ agrees with an $n$-variate polynomial of degree at most $dn$
    on ${[-r',r']}^n \cap \frac{(d+1)!}{B}\Z^n$.
\end{lem}
\begin{proof}
Let $\cube\triangleq {[-r',r']}^n \cap \frac{(d+1)!}{B}\Z^n$. We will prove the lemma by induction on the dimension $n$; the case $n=1$ is immediate. 
Now, assume that the statement is true for $n-1$. 
Let $c_1,\dots,c_{d+1}\in  [-r',r']\cap\frac{(d+1)!}{B}\Z$, be $d+1$ distinct values; note that these exist since  $\frac{2r'B}{(d+1)!}>d+1$. For each $i\in \{1,\dots,d+1\}$
 consider the   sub-cubes $\cube_1,\dots, \cube_{d+1}$ of dimension $n-1$, defined as $\cube_i\triangleq {[-r',r']}^{n-1}\cap \frac{(d+1)!}{B}\Z^{n-1} \times c_i$. Note that all the line segments in $\cube_i$ are also contained in $\cube$ and therefore, by assumption, $h$ is consistent with degree-$d$ \emph{univariate} polynomials on them.
 Thus, we can apply the induction hypothesis to argue that $h$ is consistent with degree-$d(n-1)$ polynomials $h_i$ on each of the sub-cubes $\cube_i$. We will combine these polynomials to form an $n$-variate polynomial using the following degree-$d$ polynomials: For every $i\in[d+1],\delta_i$'s are defined as,
\[\delta_i(c_j) \triangleq \begin{cases}1 & i=j,\\0 &i\neq j.\end{cases}\] 
We argue that 
\[h(\bm x) = \sum_{i=1}^{d+1}\underbrace{\delta_i(x_n)}_{\deg=d}\cdot\underbrace{h_i(x_1,\dots,x_{n-1})}_{\deg=d(n-1)}\] on the lattice points inside the cube.
Fix $\bm a\in {[-r',r']}^{n-1}\cap \frac{(d+1)!}{B}\Z^{n-1}$
and let $t$ be a formal variable. We claim that the following two \emph{univariate} polynomials are equal, 
\[h(a_1,\dots,a_{n-1},t)=\sum_{i=1}^{d+1}\delta_i(t)h_i(\bm a).\]
The left polynomial is of degree-$d$ by assumption, while the right polynomial is of degree-$d$ by definition of $\delta_i$ and the fact that $\bm a$ is fixed. Moreover, these two polynomial agree on $d+1$ points ${(a_1,\dots,a_{n-1},i)}_{i\in [d+1]}$, and therefore they are equal.

As equality holds for every such $(\bm a, t) \in [-r',r']^n$, $h$ is a polynomial of degree at most $dn$ within $[-r',r']^n$. 
\end{proof}

Next, we argue that the degree of $h$ is in fact at most $d$.

\begin{lem}\label{lem:radial_lines_consistent_force_low_degree_discrete}
Assume that the assumptions of \autoref{lem:g-on-lines-in-B-is-poly-discrete} hold. Let $r' = r/\sqrt{n}$, let $d,B,m > 0$ satisfy  ${\left(\frac{2r'B}{(d+1)!(m+1)}\right)}^n > n^m$, and let $h:{[-r',r']}^n\to\mathbb{R}$ be a polynomial of finite degree $m$. 
If for every radial line in the cube ${[-r',r']}^n$, the restriction of $h$ to that line agrees with a univariate polynomial of degree at most $d$ on points in $\frac{(d+1)!}{B}\Z^n$, then $m\leq d$.
\end{lem}
\begin{proof}
Consider the coefficient representation of the  polynomial $h(x\bm b)$ in the formal variables $x\in \R$ and $\bm b\in \R^n$. 
This representation is a polynomial of degree $m$ in both $x$ and $\bm b$, as $h$ is degree $m$. 
Consider $\alpha_m$, the  coefficient of the monomial with the highest degree (in $x$) in $h$ as a polynomial in the formal variables $\bm b$.
\[\alpha_m(\bm b) \triangleq \sum_{i_1+\cdots+i_n=m}\alpha_{i_1,\dots,i_n}\prod_{j=1}^{n}b_j^{i_j},\]
and note that $\alpha_m \neq 0$, as otherwise $h$ would have degree less than $m$.

Now, fix $\bm b$ to some value $\bm b^* \in {[-\frac{r'}{m+1},\frac{r'}{m+1}]}^n\cap \frac{(d+1)!}{B}\Z^n $ such that $\alpha_m(\bm b^*) \neq 0$;
 such a point exists since there are at most $n^m$ roots of the polynomial $\alpha_m$ and, by assumption, there are at least \[{\left(\frac{2r'B}{(d+1)!(m+1)}\right)}^n > n^m\]
 many lattice points in the cube ${[-\frac{r'}{m+1},\frac{r'}{m+1}]}^n$, and at least $m+1$ lattice points on the line segment $\pline_{\bm 0,\bm b}\cap {[-r',r']}^n$.
The  univariate polynomial $h(x\bm b^*)$ in the formal variable $x$ is of degree exactly $m$. By assumption, it is consistent with a  univariate polynomial of degree at most $d$ on all points in $\pline_{\bm 0, \bm b^*}\cap \frac{(d+1)!}{B}\Z^n$. Since there are more than $m+1$ points on the line segment, these two polynomials are identical and therefore $m\leq d$. 
\end{proof}

\subsection{Global Polynomial Representation}

Finally, we argue that if the conditions of \autoref{lem:g-on-lines-in-B-is-poly-discrete} are met, then $g$ is a degree-$d$ polynomial.

\begin{thm}\label{thm:g-is-mult-d-poly-discrete} Suppose that the assumptions of \autoref{lem:g-on-lines-in-B-is-poly-discrete} hold, and let $R>r>0$, and let $d,B > 0$ satisfy  $(rB/(d+1)!n^{3/2+d})^n > 1$. If \Call{DiscreteCharacterizationTest}{} fails with probability at most $2/3$, then $g$ is a degree-$d,n$-variate polynomial on the lattice $\caL' \triangleq \frac{(d+1)R}{rB} \mathbb{Z}^n$ within $\ball(\bm 0, R)$.
\end{thm}

\begin{proof}
The proof is identical to the proof of  \autoref{thm:g-is-mult-d-poly}, using \autoref{lem:local_to_global_discrete} and \autoref{lem:g-on-lines-in-B-is-poly-discrete}, noting that our choice of $r,B,d$ satisfies the hypothesis of \autoref{lem:radial_lines_consistent_force_low_degree_discrete}. Choosing $\caL'$ to be a coarser lattice than $\caL$ guarantees that for every $\bm p \in \ball(\bm 0,R)$, there are at least $d+1$ points on the line $\pline_{\bm 0, \bm p} \cap B(\bm 0,R)$ on which to define the value of $g(\bm p)$, and therefore $g$ is well defined on $\caL'$ within $\ball (\bm 0,R)$.
\end{proof}

Finally, we are ready to prove the main lemma, which concludes the proof of correctness for our tester. 

\begin{proof}[Proof of \autoref{lem:discrete_polynomial_lem}]
Suppose that \Call{DiscreteCharacterizationTest}{} fails with probability at most $2/3$. Then, by \autoref{thm:g-is-mult-d-poly-discrete}, $g$ is a degree-$d,n$-variate polynomial. It remains to bound the probability that $g(\bm p) \neq$ \Call{DiscreteQuery-$g$}{$\bm p$} for $\bm p \in \mathbb{R}^n\cap\caL \setminus \ball(0,r)$. If $f$ is itself a degree-$d$ polynomial, then \Call{DiscreteQuery-$g$}{$\bm p$} returns $g(\bm p)$ with probability $1$, so assume otherwise. To query $g$ on a point $\bm{p} \in \mathbb{R}^n\cap\caL$, \Call{DiscreteQuery-$g$}{$\bm p$} obtains $d+1$ points on the line segment $\pline_{\bm 0, \bm p}^{\ball}$ and then interpolate $g$ along this line. For each of these $d+1$ points $\bm s$, \Call{DiscreteQuery-$g$-InBall}{$\bm s$} samples an additional $N'_{\ref{alg:subroutines-discrete}} = O(\log (1/\varepsilon))$ points $\bm q_1,\ldots, \bm q_{N'_{\ref{alg:subroutines-discrete}}}\sim\caG(\caL,1)$, and checks whether  
\[\sum_{i \in [d+1]} \alpha_i \cdot f(\bm s + i \bm q_1) = \sum_{i \in [d+1]} \alpha_i \cdot f(\bm s + i \bm q_j),\]
for all $j \in [N'_{\ref{alg:subroutines-discrete}}]$; it rejects if any of these checks fail. This is equivalent to checking whether $g_{\bm q_1}(\bm s) \neq g_{\bm q_j}(\bm s)$; by \autoref{cor:replacement-is-good-discrete}, this occurs with probability at most $1/(4(d+2))$, since $\bm s\in \ball(\bm 0,\srad)\cap\caL$. The probability that this test returns an incorrect value is the probability that $g(\bm s) \neq g_{\bm q_1}(\bm s) = g_{\bm q_j}(\bm s)$ for every $\bm q_j$, which is at most ${(4(d+2))}^{-N'_{\ref{alg:subroutines-discrete}}} = 4^{-N'_{\ref{alg:subroutines-discrete}}}/{(d+2)}^{N'_{\ref{alg:subroutines-discrete}}} \leq \varepsilon/2(d+1)$, where the final inequality follows by choosing $N'_{\ref{alg:subroutines-discrete}} = O(\log (1/ \varepsilon))$. As \Call{DiscreteQuery-$g$}{$\bm p$} samples $d+1$ points, the probability that these points are all recovered successfully is at least $1-\varepsilon/2$. 

A similar argument holds for points $\bm p \in \ball(\bm 0, r) \cap \frac{1}{B}\mathbb{Z}^n$, and bounds the probability by  $1-\varepsilon/2$ as well. 
\end{proof}

\bibliographystyle{alpha}
\bibliography{biblio}

\newcommand{\etalchar}[1]{$^{#1}$}
\begin{thebibliography}{MORS10b}

\bibitem[ABCG93]{ABCG93}
Sigal Ar, Manuel Blum, Bruno Codenotti, and Peter Gemmell.
\newblock Checking approximate computations over the reals.
\newblock In {\em Proceedings of the 25th Annual ACM Symposium on Theory of
  Computing (STOC)}, pages 786--795, 1993.

\bibitem[AGHS13]{AGHS13}
Shweta Agrawal, Craig Gentry, Shai Halevi, and Amit Sahai.
\newblock Discrete gaussian leftover hash lemma over infinite domains.
\newblock In {\em International Conference on the Theory and Application of
  Cryptology and Information Security (Asiacrypt)}, pages 97--116, 2013.

\bibitem[ALM03]{ALM03}
Jose~Mar{\'\i}a Almira and Antonio~Jes{\'u}s L{\'o}pez-Moreno.
\newblock Characterizing polynomials by forward differences.
\newblock {\em Applied Mathematics E-Notes [electronic only]}, 3, 01 2003.

\bibitem[ALM07]{AL07}
Jose~Mar{\'\i}a Almira and Antonio~Jes{\'u}s L{\'o}pez-Moreno.
\newblock On solutions of the fr{\'e}chet functional equation.
\newblock {\em Journal of mathematical analysis and applications},
  332(2):1119--1133, 2007.

\bibitem[BBBY12]{BalcanBBY12}
Maria{-}Florina Balcan, Eric Blais, Avrim Blum, and Liu Yang.
\newblock Active property testing.
\newblock In {\em Proceedings of the 53rd Annual {IEEE} Symposium on
  Foundations of Computer Science (FOCS)}, pages 21--30, 2012.

\bibitem[BCS20]{BCS20}
Hadley Black, Deeparnab Chakrabarty, and Comandur Seshadhri.
\newblock Domain reduction for monotonicity testing: A $o(d)$ tester for
  boolean functions in $d$-dimensions.
\newblock In {\em Proceedings of the 31st Annual ACM-SIAM Symposium on Discrete
  Algorithms (SODA)}, pages 1975--1994, 2020.

\bibitem[Bel19]{Belovs19}
Aleksandrs Belovs.
\newblock Quantum algorithm for distribution-free junta testing.
\newblock In {\em Proceedings of the 14th International Computer Science
  Symposium in Russia (CSR)}, pages 50--59, 2019.

\bibitem[BFH{\etalchar{+}}13]{BFHHL13}
Arnab Bhattacharyya, Eldar Fischer, Hamed Hatami, Pooya Hatami, and Shachar
  Lovett.
\newblock Every locally characterized affine-invariant property is testable.
\newblock In {\em Proceedings of the 45th Annual ACM Symposium on Theory of
  computing (STOC)}, pages 429--436, 2013.

\bibitem[BFPJH21]{BFH21}
Eric Blais, Renato Ferreira Pinto~Jr, and Nathaniel Harms.
\newblock {VC} dimension and distribution-free sample-based testing.
\newblock In {\em Proceedings of the 53rd Annual ACM SIGACT Symposium on Theory
  of Computing (STOC)}, pages 504--517, 2021.

\bibitem[BHK20]{blum_hopcroft_kannan_2020}
Avrim Blum, John Hopcroft, and Ravindran Kannan.
\newblock {\em Foundations of Data Science}.
\newblock Cambridge University Press, 2020.

\bibitem[BLP{\etalchar{+}}13]{BLPRS13}
Zvika Brakerski, Adeline Langlois, Chris Peikert, Oded Regev, and Damien
  Stehl{\'e}.
\newblock Classical hardness of learning with errors.
\newblock In {\em Proceedings of the 45th Annual ACM Symposium on Theory of
  Computing (STOC)}, pages 575--584, 2013.

\bibitem[BLR93]{BlumLR93}
Manuel Blum, Michael Luby, and Ronitt Rubinfeld.
\newblock Self-testing/correcting with applications to numerical problems.
\newblock {\em Journal of Computer and System Sciences}, 47(3):549--595, 1993.

\bibitem[Bsh19]{Bshouty19}
Nader~H. Bshouty.
\newblock Almost optimal distribution-free junta testing.
\newblock In {\em Proceedings of the 34th Computational Complexity Conference
  (CCC)}, pages 2:1--2:13, 2019.

\bibitem[BY22]{ArnabYoshida22}
Arnab Bhattacharyya and Yuichi Yoshida.
\newblock {\em Property Testing: Problems and Techniques}.
\newblock Springer, Singapore, 2022.

\bibitem[Cau21]{Cau21}
Augustin Louis~Baron Cauchy.
\newblock {\em Cours d'analyse de l'{\'E}cole Royale Polytechnique: Analyse
  alg{\'e}brique. I. re partie}.
\newblock Debure fr{\`e}res, 1821.

\bibitem[CFSS17]{FSS17}
Xi~Chen, Adam Freilich, Rocco~A. Servedio, and Timothy Sun.
\newblock Sample-based high-dimensional convexity testing.
\newblock In {\em Approximation, Randomization, and Combinatorial Optimization.
  Algorithms and Techniques (APPROX/RANDOM)}, pages 37:1--37:20, 2017.

\bibitem[Cie59]{Ciesielski1959SomePO}
Zbigniew Ciesielski.
\newblock Some properties of convex functions of higher orders.
\newblock {\em Annales Polonici Mathematici}, 7:1--7, 1959.

\bibitem[CP22]{CP22}
Xi~Chen and Shyamal Patel.
\newblock Distribution-free testing for halfspaces (almost) requires pac
  learning.
\newblock In {\em Proceedings of the 33rd Annual ACM-SIAM Symposium on Discrete
  Algorithms (SODA)}, pages 1715--1743, 2022.

\bibitem[CX16]{ChenX16}
Xi~Chen and Jinyu Xie.
\newblock Tight bounds for the distribution-free testing of monotone
  conjunctions.
\newblock In {\em Proceedings of the 27th Annual {ACM-SIAM} Symposium on
  Discrete Algorithms (SODA)}, pages 54--71, 2016.

\bibitem[Dar75]{Dar75}
Gaston Darboux.
\newblock M{\'e}moire sur les fonctions discontinues.
\newblock In {\em Annales scientifiques de l'{\'E}cole normale sup{\'e}rieure},
  volume~4, pages 57--112, 1875.

\bibitem[DMN19]{DeMN19}
Anindya De, Elchanan Mossel, and Joe Neeman.
\newblock Is your function low dimensional?
\newblock In {\em Proceedings of the 32nd Conference on Learning Theory
  (COLT)}, pages 979--993, 2019.

\bibitem[DR11]{DolevR11}
Elya Dolev and Dana Ron.
\newblock Distribution-free testing for monomials with a sublinear number of
  queries.
\newblock {\em Theory of Computing}, 7(1):155--176, 2011.

\bibitem[EKR01]{EKR01}
Funda Erg{\"u}n, S~Ravi Kumar, and Ronitt Rubinfeld.
\newblock Checking approximate computations of polynomials and functional
  equations.
\newblock {\em SIAM Journal on Computing}, 31(2):550--576, 2001.

\bibitem[Fr{\'e}09]{Fre09}
Maurice Fr{\'e}chet.
\newblock Une d{\'e}finition fonctionnelle des polyn{\^o}mes.
\newblock {\em Nouvelles annales de math{\'e}matiques: journal des candidats
  aux {\'e}coles polytechnique et normale}, 9:145--162, 1909.

\bibitem[FY20]{FlemingY20}
Noah Fleming and Yuichi Yoshida.
\newblock Distribution-free testing of linear functions on $\mathbb{R}^n$.
\newblock In {\em 11th Innovations in Theoretical Computer Science Conference
  (ITCS)}, volume 151 of {\em LIPIcs}, pages 22:1--22:19, 2020.

\bibitem[Gaj91]{Gaj91}
Zbigniew Gajda.
\newblock Local stability of the functional equation characterizing polynomial
  functions.
\newblock {\em Annales Polonici Mathematici}, 52(2):119–137, 1991.

\bibitem[Ger71]{Ger71}
Roman Ger.
\newblock On some properties of polynomial functions.
\newblock In {\em Annales Polonici Mathematici}, volume~25, pages 195--203.
  Institute of Mathematics Polish Academy of Sciences, 1971.

\bibitem[GGR98]{GoldreichGR98}
Oded Goldreich, Shafi Goldwasser, and Dana Ron.
\newblock Property testing and its connection to learning and approximation.
\newblock {\em Journal of the ACM}, 45(4):653--750, 1998.

\bibitem[GKS13]{GKS13}
Alan Guo, Swastik Kopparty, and Madhu Sudan.
\newblock New affine-invariant codes from lifting.
\newblock In {\em Proceedings of the 4th conference on Innovations in
  Theoretical Computer Science}, pages 529--540, 2013.

\bibitem[GLR{\etalchar{+}}91]{GLRSW91}
Peter Gemmell, Richard Lipton, Ronitt Rubinfeld, Madhu Sudan, and Avi
  Wigderson.
\newblock Self-testing/correcting for polynomials and for approximate
  functions.
\newblock In {\em Proceedings of the 23rd Annual ACM symposium on Theory of
  Computing (STOC)}, pages 32--42, 1991.

\bibitem[Gol17]{Goldreich17}
Oded Goldreich.
\newblock {\em Introduction to Property Testing}.
\newblock Cambridge University Press, 2017.

\bibitem[GS09]{GlasnerS07}
Dana Glasner and Rocco~A. Servedio.
\newblock Distribution-free testing lower bound for basic boolean functions.
\newblock {\em Theory of Computing}, 5(10):191--216, 2009.

\bibitem[Ham05]{Ham05}
Georg Hamel.
\newblock Eine basis aller zahlen und die unstetigen l{\"o}sungen der
  funktionalgleichung: $f (x+ y)= f (x)+ f (y)$.
\newblock {\em Mathematische Annalen}, 60(3):459--462, 1905.

\bibitem[Har19]{Harms19}
Nathaniel Harms.
\newblock Testing halfspaces over rotation-invariant distributions.
\newblock In {\em Proceedings of the 30th Annual {ACM-SIAM} Symposium on
  Discrete Algorithms (SODA)}, pages 694--713, 2019.

\bibitem[HIR12]{HIR12}
Donald~H Hyers, George Isac, and Themistocles Rassias.
\newblock {\em Stability of functional equations in several variables},
  volume~34.
\newblock Springer Science \& Business Media, 2012.

\bibitem[HK07]{HalevyK07}
Shirley Halevy and Eyal Kushilevitz.
\newblock Distribution-free property-testing.
\newblock {\em {SIAM} Journal on Computing}, 37(4):1107--1138, 2007.

\bibitem[HY20]{HY20}
Nathaniel Harms and Yuichi Yoshida.
\newblock Downsampling for testing and learning in product distributions.
\newblock {\em arXiv preprint arXiv:2007.07449}, 2020.

\bibitem[KMS01]{KiwiMS01}
Marcos Kiwi, Fr{\'{e}}d{\'{e}}ric Magniez, and Miklos Santha.
\newblock Exact and approximate testing/correcting of algebraic functions: {A}
  survey.
\newblock {\em Electron. Colloquium Comput. Complex.}, (14), 2001.

\bibitem[KNOW14]{KothariNOW14}
Pravesh Kothari, Amir Nayyeri, Ryan O'Donnell, and Chenggang Wu.
\newblock Testing surface area.
\newblock In {\em Proceedings of the 25th Annual {ACM-SIAM} Symposium on
  Discrete Algorithms (SODA)}, pages 1204--1214, 2014.

\bibitem[Kom89]{Kominek89}
Zygíryd Kominek.
\newblock On a local stability of the {J}ensen functional equation.
\newblock {\em Demonstratio Mathematica}, 22(2):499--508, 1989.

\bibitem[KS08]{KS08}
Tali Kaufman and Madhu Sudan.
\newblock Algebraic property testing: the role of invariance.
\newblock In {\em Proceedings of the 40th Annual ACM Symposium on Theory of
  computing (STOC)}, pages 403--412, 2008.

\bibitem[LCS{\etalchar{+}}19]{LiuCSSX19}
Zhengyang Liu, Xi~Chen, Rocco~A. Servedio, Ying Sheng, and Jinyu Xie.
\newblock Distribution-free junta testing.
\newblock {\em {ACM} Transactions on Algorithms}, 15(1):1:1--1:23, 2019.

\bibitem[Lip89]{Lipton89}
Richard~J. Lipton.
\newblock New directions in testing.
\newblock In {\em Distributed Computing And Cryptography, Proceedings of a
  {DIMACS} Workshop}, volume~2 of {\em {DIMACS} Series in Discrete Mathematics
  and Theoretical Computer Science}, pages 191--202. {DIMACS/AMS}, 1989.

\bibitem[McK67]{Mck67}
MA~McKiernan.
\newblock On vanishing $n$-th ordered differences and hamel bases.
\newblock In {\em Annales Polonici Mathematici}, volume~19, pages 331--336.
  Institute of Mathematics Polish Academy of Sciences, 1967.

\bibitem[MH02]{mason2002chebyshev}
J.C. Mason and D.C. Handscomb.
\newblock {\em Chebyshev Polynomials}.
\newblock CRC Press, 2002.

\bibitem[MORS09]{matulef2009testing}
Kevin Matulef, Ryan O’Donnell, Ronitt Rubinfeld, and Rocco~A Servedio.
\newblock Testing $\pm$1-weight halfspaces.
\newblock In {\em Approximation, Randomization, and Combinatorial Optimization.
  Algorithms and Techniques (APPROX/RANDOM)}, pages 646--657. Springer Berlin
  Heidelberg, 2009.

\bibitem[MORS10a]{MatulefORS102}
Kevin Matulef, Ryan O'Donnell, Ronitt Rubinfeld, and Rocco~A. Servedio.
\newblock Testing halfspaces.
\newblock {\em {SIAM} Journal on Computing}, 39(5):2004--2047, 2010.

\bibitem[MORS10b]{MatulefORS10}
Kevin Matulef, Ryan O'Donnell, Ronitt Rubinfeld, and Rocco~A. Servedio.
\newblock Testing (subclasses of) halfspaces.
\newblock In {\em Property Testing - Current Research and Surveys}, pages
  334--340. Springer Berlin Heidelberg, 2010.

\bibitem[MP13]{MP13}
Daniele Micciancio and Chris Peikert.
\newblock Hardness of {SIS} and {LWE} with small parameters.
\newblock In {\em Annual Cryptology Conference}, pages 21--39. Springer, 2013.

\bibitem[MR07]{MR07}
Daniele Micciancio and Oded Regev.
\newblock Worst-case to average-case reductions based on gaussian measures.
\newblock {\em SIAM Journal on Computing}, 37(1):267--302, 2007.

\bibitem[Nee14]{Neeman14}
Joe Neeman.
\newblock Testing surface area with arbitrary accuracy.
\newblock In {\em Proceedings of the 46th Annual ACM Symposium on Theory of
  Computing (STOC)}, pages 393--397, 2014.

\bibitem[PRR06]{PRR06}
Michal Parnas, Dana Ron, and Ronitt Rubinfeld.
\newblock Tolerant property testing and distance approximation.
\newblock {\em Journal of Computer and System Sciences}, 72(6):1012--1042,
  2006.

\bibitem[Reg09]{Reg09}
Oded Regev.
\newblock On lattices, learning with errors, random linear codes, and
  cryptography.
\newblock {\em Journal of the ACM}, 56(6):1--40, 2009.

\bibitem[RS92]{RubinfeldS92}
Ronitt Rubinfeld and Madhu Sudan.
\newblock Self-testing polynomial functions efficiently and over rational
  domains.
\newblock In {\em Proceedings of the 3rd Annual {ACM/SIGACT-SIAM} Symposium on
  Discrete Algorithms (SODA)}, pages 23--32, 1992.

\bibitem[RS96]{RubinfeldS96}
Ronitt Rubinfeld and Madhu Sudan.
\newblock Robust characterizations of polynomials with applications to program
  testing.
\newblock {\em {SIAM} Journal on Computing}, 25(2):252--271, 1996.

\bibitem[RS97]{RazS97}
Ran Raz and Shmuel Safra.
\newblock A sub-constant error-probability low-degree test, and a sub-constant
  error-probability {PCP} characterization of {NP}.
\newblock In {\em Proceedings of the 29th Annual {ACM} Symposium on the Theory
  of Computing (STOC)}, pages 475--484, 1997.

\end{thebibliography}

\begin{appendices}

\section{A Characterization of Degree-\texorpdfstring{$d$}{d} Polynomials}
\label{sec:appendix}

In this appendix, we show how the \localCharacterization of degree-$d$ polynomials over $\mathbb{R}$ follows from known results. We begin with several definitions.

To connect finite forward differences of $f$ with derivatives of $f$, we consider the \emph{discrete differential operator}. Let $\bm t = (t_1,\ldots, t_{d+1}) \in \mathbb{R}^{d+1}$, where $t_i\neq 0$ for every $i\in[d+1]$, be a vector of length $d+1$, and for any $S \subseteq [d+1]$, denote $t_S \triangleq \sum_{j \in S} t_j$. The discrete differential operator is defined as 
\[D_{\bm{\bm{t}}}[f](x) = D_{(t_1,\hdots,t_{d+1})}[f](x)\triangleq\sum_{S\subseteq[d+1]}(-1)^{|S|}f\left(x+t_S\right).\]
Note that the derivative of $f$ is the limit of the corresponding differential; formally, 
\begin{equation}\label{eq:relating-derivative-and-differential}
    \frac{d^{d+1}}{dx^{d+1}}f(x)=\lim_{\bm{t}\rightarrow\bm{0}}\frac{D_{\bm t }[f](x)}{\prod_{i\in[d+1]}t_i}.
\end{equation}
Thus, we can obtain local information about the derivative by inspecting the discrete differential. We will indirectly evaluate the discrete differential by inspecting the forward finite difference $\Delta_{h}[f](x)$ (defined in \eqref{eq:higher-order-difference}). Indeed, for $\bm 1 = (1,\ldots 1) \in \mathbb{R}^{d+1}$, observe that for any $h \in \mathbb{R}$,
\begin{equation}\label{eq:relating-differential-and-difference}
    D_{h\cdot\bm 1}[f](x) = \sum_{S\subseteq[d+1]} (-1)^{|S|} f\left(x+h|S|\right) = \sum_{i=0}^{d+1} (-1)^{i}\binom{d+1}{i} f(x+hi) = (-1)^{d+1}\Delta_h^{(d+1)}[f](x).
\end{equation}
We will first state a useful result from \cite{Ciesielski1959SomePO}, and sketch a variant of another result from \cite{ALM03}.
\begin{thm}{(Theorem 1 of \cite{Ciesielski1959SomePO})}\label{thm:bounded-to-continuity}
Let $f:(a,b)\to\R$ be \emph{J-convex of the $d$-th order over $(a,b)$}, i.e. $\Delta_h^{(d+1)}[f](x)\geq 0$, for every $x$, and $h>0$ such that $a<x<x+(d+1)h<b$. If $f$ is bounded on a set $E\subset (a,b)$ of positive measure, then $f$ is continuous on the interval $(a,b)$.
\end{thm}
\begin{thm}{(A variant of Theorem 2 of \cite{ALM03}\footnote{See also \cite{Mck67, Ger71, AL07}.})}\label{thm:continuity-to-poly}
Let $f:(a,b)\to\R$ such that $\Delta_h^{(d+1)}[f](x)=0$, for every $x\in(a,b)$ and $h>0$, such that $a<x<x+(d+1)h<b$. If $f$ is continuous on a set $S\subset (a,b)$ of $d+1$ points, then $f$ is a degree-$d$ polynomial over $(a,b)$.
\end{thm}
\begin{proof}[Proof Sketch:]
The proof goes through by showing that for an arbitrary $\alpha\in(a,b)$, and a large enough $M\in\mathbb N, f$ agrees with a unique degree-$d$ polynomial $p$,  on a sequence of sets $\{\{\frac{i\alpha}{2^nM}\}_{i\in\Z}\cap(a,b)\}_{n\in\mathbb{N}}$, and then by the continuity of $f$ on $S$, $f$ and $p$ can be proven to be arbitrarily close on $S$, i.e. $f(x)=p(x)$, for every $x\in S$. With $|S|=d+1$, this proves $f$ is a degree-$d$ polynomial.  
\end{proof}
We are now ready to prove the \localCharacterization, restated next for convenience.
\localchar*
\begin{proof}
Since $\Delta_h^{d+1}[g](x)= 0$ for every $x\in(a,b)$ and sufficiently small $h>0$, such that $a<x<x+(d+1)h<b$, it follows that $g$ is J-convex of the $d$-th order over $(a,b)$. Since $g$ is bounded as well over $(a,b)$, by \autoref{thm:bounded-to-continuity} $g$ must be continuous over $(a,b)$. Now, invoking \autoref{thm:continuity-to-poly}, we thus claim $g$ is a degree-$d$ polynomial over $(a,b)$.
\end{proof}
\ignore{
\begin{proof}
Since $g$ is analytic \enote{bounded?}, for every $ x\in (c,d),g(x)$ converges to the Taylor series expansion of $g$ at some point in the local neighborhood of $x$; i.e., for every $x\in (c,d)$, there exists $\varepsilon\in\mathbb{R}_{\geq 0}$, and $x_0\in(x-\varepsilon,x+\varepsilon)$ such that $$g(x)=\sum_{n=0}^{\infty}\frac{d^{n}}{dx^n}g(x_0) \frac{(x-x_0)^{n}}{n!}.$$
By assumption, $\Delta_h^{(d+1)}[g](x)=0$ for all sufficiently small $h$. Thus, by \eqref{eq:relating-differential-and-difference} we have that   for all sufficiently small $h$ and every $x\in (c,d)$, \[D_{h\cdot\bm{1}}[g](x)=(-1)^{d+1}\Delta_{h}^{(d+1)}[g](x)=0.\]
By \eqref{eq:relating-derivative-and-differential}, for every $ x\in (c,d)$, 
\[\frac{d^{d+1}}{dx^{d+1}}g(x)=\lim_{\bm{t\rightarrow\bm{0}}}\frac{D_{\bm{t}}[g](x)}{\prod_{i\in[d+1]}t_i}=\frac{\lim_{\bm{t\rightarrow\bm{0}}}D_{\bm{t}}[g](x)}{\lim_{\bm{t\rightarrow\bm{0}}}\prod_{i\in[d+1]}t_i}=\frac{\lim_{h\rightarrow 0}D_{h\cdot\bm{1}}[g](x)}{\lim_{h\rightarrow 0}h^{d+1}}=\lim_{h\rightarrow 0}\frac{0}{h^{d+1}}=0.\]
Thus, for every $k\geq d+1$, and $x\in (c,d)$, we have $\frac{d^k}{dx^k} g(x)=0$.

Now, for any point $y\in (c,d)$, consider the Taylor series expansion of $g$, centered at $0$:
$$T(y)=\sum_{n=0}^{\infty}\frac{d^n}{dx^n}g(0) \frac{y^{n}}{n!}=\sum_{n=0}^{d}\frac{d^n}{dx^n}g(0) \frac{y^n}{n!}.$$
This give us a degree-$d$ Taylor polynomial, making the series finite, and hence convergent at every such point; i.e., $T(y)=g(y)$.  Thus,  $g(y)$ is a degree-$d$, univariate polynomial, for every $y\in (c,d)$. 
\end{proof}
}
Next, we prove the  \discreteLocalCharacterization which was used in \autoref{sec:discrete}.
\localchardiscrete*
\begin{proof}
 For each $i \in \mathbb{N}$, let $S_i\triangleq\{\frac{ia}{M},\hdots,\frac{(i+d)a}{M}\}\subset[0,a]$ and let $p_i(t):[0,a]\to\R$ be the unique degree-$d$ polynomial satisfying $p_i(s) =f(s)$ for all $s \in S_i$. We will argue that the polynomials $p_i$ are identical, and thus equal to $f$ on $S \cap [0,a]$. First, we will argue that $p_0 = p_{1}$. Observe that
 \[ 0=\Delta_{\frac{a}{M}}^{(d+1)}[f](0)=\sum_{j=0}^{d+1}\alpha_j\cdot f\left(\frac{ja}{M}\right)=\sum_{j=0}^{d}\alpha_j \cdot f\left(\frac{ia}{M}\right) + \alpha_{d+1} \cdot f\left(\frac{(d+1)a}{M}\right). \] 
 As $p_0$ was defined by interpolating the values of $f$ on $S_0$, we have
 \[0=\sum_{j=0}^{d}\alpha_j \cdot p_0\left(\frac{ja}{M}\right)+ \alpha_{d+1} \cdot  f\left(\frac{(d+1)a}{M}\right)=\underbrace{\sum_{j=0}^{d+1}\alpha_j \cdot p_0\left(\frac{ja}{M}\right)}_{=0}+ \alpha_{d+1} \cdot (f-p_0)\left(\frac{(d+1)a}{M}\right), \]
 which implies that $f((d+1)a/M) = p_0((d+1)a/M)$, and hence $p_0(t) = p_1(t)$ for every $t \in [0,a]$. Repeating this argument for every $i \in \mathbb{N}$, we can conclude that $p_0(t) = p_1(t) \ldots = f(t)$ for every $t \in S \cap [0,a]$, where the equality with $f$ follows because the $p_i$'s are defined by interpolating the values of $f$ on the $S_i$'s. Thus, $f$ agrees with a degree-$d$ polynomial on $S \cap [0,a]$.


\ignore{
Now consider the interval $(0,\frac{a}{M})$. We will prove $g$, on $(0,\frac{a}{M})$, agrees with its degree-$d$ polynomial representation on $S\cap[0,a]$, say $\sum_{i=0}^{d}c_ix^i$.
Since $g$ is analytic, for every $ x\in [0,a],g(x)$ converges to the Taylor series expansion of $g$ at some point in the local neighborhood of $x$; i.e., for every $x\in[0,a]$, there exists $\lambda\in\mathbb{R}_{> 0}$, and $x_0\in(x-\lambda,x+\lambda):$ $$g(x)=\sum_{n=0}^{\infty}\frac{d^{n}}{dx^n}g(x_0) \frac{(x-x_0)^{n}}{n!}.$$
Since $g(0)$, and $g(\frac{a}{M})$ agree with degree-$d$ polynomial representation, it must hold that $\frac{d^k}{dx^k}g(0)=0=\frac{d^k}{dx^k}g(\frac{a}{M})$, for every $k\geq n$.
Now, for any point $y\in (0,\frac{a}{M})$, consider the Taylor series expansion of $g$, centered at $0$:
$$T(y)=\sum_{n=0}^{\infty}\frac{d^n}{dx^n}g(0) \frac{y^{n}}{n!}=\sum_{n=0}^{d}\frac{d^n}{dx^n}g(0) \frac{y^n}{n!}.$$
This give us a degree-$d$ Taylor polynomial, making the series finite, and hence convergent at every such point; i.e., $T(y)=g(y)$. Thus, $g(y)$ agrees with the degree-$d$ polynomial representation, for every $y\in (0,\frac{a}{M})$. Repeating this argument for every interval of the form $(\frac{ia}{M},\frac{(i+1)a}{M})\subset(0,a)$,
we get that $g$ agrees with the degree-$d$ polynomial representation on $[0,a]$.
}
\end{proof}

\section{Proofs from \autoref{sec:poly-on-lines-ball-approx}}\label{sec:pushed-proofs}
In this appendix, we provide proofs of \autoref{lem:well_defined_of_g_approx}, and \autoref{thm:characterization_in_small_ball_approx}. But first we state the result from \cite{Gaj91}, that forms the basis of our \autoref{cor:gajda-cor}:
\begin{thm}{(\cite[Theorem 8]{Gaj91})}\label{thm:approx-characterization}
Let $X$ be a linear space over the rationals, $x_0\in\R,d\in\mathbb N,\phi,a\in(0,\infty)$, and suppose that for all $x\in(x_0-a,x_0+a)$ and  $h\in(-a,a)$, with $x+(d+1)h \in (x_0 -a,x_0+a)$, $f\colon(x_0-a,x_0+a)\to X$  satisfies 
\[|\Delta_h^{(d+1)}[f](x)|\leq\phi.\]
Then, there exists a degree-$d$ polynomial $g\colon\R\to X$, such that for every $x\in (x_0-a,x_0+a),|f(x)-g(x)|\leq l_d''\phi$, where $l_d''\triangleq n_d+2^{d+1}l_d'(2^{d+1}-1)^{n_d},n_d\triangleq\min\{k\in\mathbb{N}:(1+k/d)^k\geq d\},l_d'\triangleq\prod_{i=1}^d(k_i'+1), l_0'\triangleq 1$, and $k_i'\triangleq 3\cdot2^{2i}+(i-1)2^{i+1}-1$. In particular,  $l_d'=\Theta(3^d\cdot 2^{d^2+d})$, $n_d\in\Omega(\log d)\cap o(d)$, and $l_d''=o(2^{8d^2})$. 
\end{thm}
Interestingly, \cite{Gaj91} defines a function $g:(a,b)\to X$ to be a degree-$d$ polynomial, if $\Delta_h^{(d+1)}[g](x)=0$, for all $x\in(a,b)$ and $h>0$, such that $a<x<x+(d+1)h<b$. Note that if $f:(a,b)\to\R$ is bounded on $(a,b)$, then by \autoref{thm:approx-characterization} $g:\R\to\R$ is also bounded, and hence a degree-$d$ polynomial on the same interval, by the \localCharacterization, thus proving \autoref{cor:gajda-cor}. We now resume the proofs:
\gwelldefapprox*
\begin{proof}
    Fix some $t\in \{0,\ldots, d+1\}$ and $\bm{p}\in \ball(\bm{0},\srad)$. We will bound the probability that $g_{\bm{q}_1}(\bm{p})$ and $q_{\bm{q}_2}(\bm{p})$ are far from $\sum_{i=1}^{d+1}\sum_{j=1}^{d+1} \alpha_i \alpha_j \cdot f(\bm{p}+i\bm{q}_1+j\bm{q}_2)$; the lemma will then follow by a union bound.
    
    By definition, we have $g_{\bm{q}_2}(\bm{p}) = \sum_{i=1}^{d+1} \alpha_i \cdot f(\bm{p}+i\bm{q}_2)$. Fixing an $i \in [d+1]$, we get
\begin{align*}
	&\Pr_{\substack{\bm{q}_1\sim\mathcal{N}(\bm{0},(t^2+1)I)\\ \bm{q}_2\sim\mathcal{N}(\bm{0},I)}}\Bigg[ |f(\underbrace{\bm{p}+i \bm{q}_1}_{\triangleq\bm z}) - g_{\bm{q}_2}(\bm{p}+i\bm{q}_1)|>\delta \Bigg]
	= \Pr_{\substack{\bm z \sim \mathcal{N}(\bm p,i^2(t^2+1)I)\\
	\bm{q}_2 \sim \mathcal{N}(\bm{0},I)}} \Bigg[ \Big|f(\bm z) - \sum_{j=1}^{d+1} \alpha_j f(\bm{z} +j \bm{q}_2)\Big|>\delta \Bigg]  \\
	&\leq \Pr_{\substack{\bm{z}\sim\mathcal{N}(\bm{0},i^2(t^2+1)I)\\ \bm{q}_2\sim\mathcal{N}(\bm{0},I)}} \Bigg[ \Bigg|\sum_{j=0}^{d+1} \alpha_j \cdot f(\bm{z} +j \bm{q}_2)\Bigg|>\delta\Bigg] + 2\dtv(\mathcal{N}(\bm{0},i^2(t^2+1)I), \mathcal{N}(\bm{p},i^2(t^2+1)I))   \\
	&\leq \rho + i^2(t^2+1)\srad  \tag{By \eqref{eq:test_bound_1_approx} and \autoref{lem:usefulDTVBound}} \leq \rho+20d^4\srad.
\end{align*}

By a similar calculation, we have for every $j \in [d+1]$ that
\begin{align*}
	&\Pr_{\substack{\bm{q}_1\sim\mathcal{N}(\bm{0},(t^2+1)I)\\ \bm{q}_2\sim\mathcal{N}(\bm{0},I)}} \Bigg[ |f(\underbrace{\bm{p}+j\bm{q}_2}_{\triangleq\bm z}) - g_{\bm{q}_1}(\bm{p}+j\bm{q}_2)|>\delta \Bigg]
	=\Pr_{\substack{ \bm{z}\sim\mathcal{N}(\bm{p},j^2I)\\ \bm{q}_1\sim\mathcal{N}(\bm{0},(t^2+1)I)}}\Bigg[\Big|f(\bm{z})-\sum_{i=1}^{d+1}\alpha_i f(\bm{z}+i\bm{q}_1)\Big|>\delta\Bigg]\\
	&\leq\Pr_{\substack{ \bm{z}\sim\mathcal{N}(\bm{0},j^2I)\\ \bm{q}_1\sim\mathcal{N}(\bm{0},(t^2+1)I)}}\left[\Bigg|\sum_{i=0}^{d+1}\alpha_1\cdot f(\bm{z}+i\bm{q}_1)\Bigg|>\delta\right] + 2\dtv(\mathcal{N}(\bm{0},j^2I),\mathcal{N}(\bm{p},j^2I))\\
	&\leq \rho + j^2\srad \tag{By \eqref{eq:test_bound_2_approx} and \autoref{lem:usefulDTVBound}}\leq \rho+4d^2\srad.
\end{align*}

Taking a union bound over $i \in [d+1]$ and $j \in [d+1]$ respectively, it follows that
\begin{align*}
	\Pr_{\substack{\bm{q}_1\sim\mathcal{N}(\bm{0},(t^2+1)I)\\ \bm{q}_2\sim\mathcal{N}(\bm{0},I)}} \Bigg[\Bigg| \underbrace{\sum_{i=1}^{d+1} \alpha_i  \cdot f(\bm{p}+i\bm{q}_1)}_{=g_{\bm{q}_1}(\bm{p})} - \sum_{i=1}^{d+1} \sum_{j=1}^{d+1} \alpha_i \alpha_j  \cdot f((\bm{p}+i\bm{q}_1) + j\bm{q}_2)\Bigg|>2^{d+1}\delta \Bigg] \leq 2d\rho +40d^5\srad, \\
	\Pr_{\substack{\bm{q}_1\sim\mathcal{N}(\bm{0},(t^2+1)I)\\ \bm{q}_2\sim\mathcal{N}(\bm{0},I)}} \Bigg[\Bigg| \underbrace{\sum_{j=1}^{d+1} \alpha_j \cdot f(\bm{p}+j\bm{q}_2)}_{=g_{\bm{q}_2}(\bm{p})} - \sum_{j=1}^{d+1} \sum_{i=1}^{d+1} \alpha_i \alpha_j \cdot f((\bm{p}+j\bm{q}_2) + i\bm{q}_1)\Bigg|>2^{d+1}\delta \Bigg] \leq 2d\rho+8d^3\srad.
\end{align*}

Thus, by a union bound over the two previous inequalities we can conclude that 
\[ \Pr_{\substack{\bm{q}_1\sim\mathcal{N}(\bm{0},(t^2+1)I)\\ \bm{q}_2\sim\mathcal{N}(\bm{0},I)}} [|g_{\bm{q}_1}(\bm{p}) - g_{\bm{q}_2}(\bm{p})|>2^{d+2}\delta] \leq 4d\rho+48d^5\srad. \qedhere \]
\end{proof}

\smallballcharapprox*
\begin{proof}
Fix $\bm{p},\bm{q} \in \ball(\bm{0},\srad)$ and let  $h\in\mathbb{R}$ be sufficiently small so that $\bm p +ih \bm q \in \ball(\bm 0,\srad)$ for every $i\in[d+1]$; such $h$'s exist, as $\ball(\bm 0,\srad)$ is an open ball containing $\bm p$. We will argue that the following hold simultaneously with non-zero probability over $\bm{q}_1,\bm{q}_2 \sim \mathcal{N}(\bm{0},I)$:
\begin{equation}\label{eq:replacement-is-good_approx}
    \Big|\sum_{i=0}^{d+1} \alpha_i \cdot g(\bm{p}+ih\bm{q}) -\sum_{i=0}^{d+1} \alpha_i\cdot g_{\bm{q}_1+i\bm{q}_2}(\bm{p}+ih\bm{q})\Big|\leq 2^{2d+4}\delta, 
\end{equation}
\begin{equation}\label{eq:post-replacement-is-good_approx}
    \Big|\sum_{i=0}^{d+1} \alpha_i \cdot f(\bm{p}+ j\bm{q}_1+ i(h\bm{q}+j\bm{q}_2))\Big|\leq\delta\text{, for every }j \in [d+1].
\end{equation}
Assuming that these hold, we complete the proof. Fix any $\bm{q}_1,\bm{q}_2$ satisfying both~\eqref{eq:replacement-is-good_approx}, and~\eqref{eq:post-replacement-is-good_approx}. Then,
\begin{align*}
    \Big|\sum_{i=0}^{d+1} \alpha_i \cdot g(\bm{p}+ih\bm{q})\Big| &\leq \Big|\sum_{i=0}^{d+1} \alpha_i\cdot g_{\bm{q}_1+i\bm{q}_2}(\bm{p}+ ih\bm{q})\Big|+2^{2d+4}\delta \tag{By~\eqref{eq:replacement-is-good_approx}} \\
    &=\Big|\sum_{i=0}^{d+1} \alpha_i\Big( \sum_{j=1}^{d+1} \alpha_j \cdot f(\bm{p}+ih\bm{q}+j(\bm{q}_1+i\bm{q}_2))\Big)\Big|+2^{2d+4}\delta\tag{By definition of $g_{\bm q}(\bm p)$} \\
    &=\Big|\sum_{j=1}^{d+1} \alpha_j\Big( \sum_{i=0}^{d+1} \alpha_i \cdot f(\bm{p}+j\bm{q}_1+i(h\bm{q}+j\bm{q}_2))\Big)\Big|+2^{2d+4}\delta \\
    &\leq \Big|\sum_{j=1}^{d+1}\alpha_j \cdot \delta\Big|+2^{2d+4}\delta \tag{By~\eqref{eq:post-replacement-is-good_approx}}\\
    &\leq 2^{d+1}\delta+2^{2d+4}\delta=2^{d+1}(2^{d+3}+1)\delta\leq 2^{2d+5}\delta.
\end{align*}

 Next, we prove~\eqref{eq:replacement-is-good_approx} and~\eqref{eq:post-replacement-is-good_approx}, by arguing that each holds with positive probability and then taking a union bound. For~\eqref{eq:replacement-is-good_approx}, we observe
 {\allowdisplaybreaks
\begin{align*}
    &\Pr_{\bm{q}_1,\bm{q}_2 \sim \mathcal{N}(\bm{0},I)} \bigg[\Big|\sum_{i=0}^{d+1} \alpha_i \cdot g(\bm{p}+ih\bm{q}) -\sum_{i=0}^{d+1}\alpha_i\cdot g_{\bm{q}_1+i\bm{q}_2}(\bm{p}+ih\bm{q})\Big|\leq 2^{2d+4}\delta\bigg]	\\ 
	&\geq \Pr_{\bm{q}_1,\bm{q}_2 \sim \mathcal{N}(\bm{0},I)} \left[|g(\bm{p}+ih\bm{q}) - g_{\bm{q}_1+i\bm{q}_2}(\bm{p}+ih\bm{q})|\leq 2^{d+3}\delta,~~ \forall i \in \{0,\ldots, d+1\} \right] \\
	&=1-\Pr_{\bm{q}_1,\bm{q}_2\sim\mathcal{N}(\bm{0},I)} \left[\exists i\in\{0,\ldots,d+1\}:|g(\bm{p}+ih\bm{q}) - g_{\bm{q}_1+i\bm{q}_2}(\bm{p}+ih\bm{q})|>2^{d+3}\delta \right] \\
	&\geq 1-\sum_{i=0}^{d+1} \Pr_{\bm{q}_1,\bm{q}_2 \sim \mathcal{N}(\bm{0},I)}\left[|g(\bm{p}+ih\bm{q}) - g_{\bm{q}_1+i\bm{q}_2}(\bm{p}+ih\bm{q})|>2^{d+3}\delta \right] \tag{By union bound} \\
	&= 1-\sum_{i=0}^{d+1}\Pr_{\bm{m}\sim\mathcal{N}(\bm{0},(i^2+1)I)}\left[|g(\bm{p}+ih\bm{q})- g_{\bm{m}}(\bm{p}+ih\bm{q})|>2^{d+3}\delta \right] \tag{Letting $\bm{m}\triangleq\bm{q}_1+i\bm{q}_2$} \\
	&> 1-\sum_{i=0}^{d+1}\frac{1}{7d}=1-\frac{d+2}{7d}>\frac{1}{2}. \tag{Applying \autoref{cor:replacement-is-good_approx}, as $\bm{p}+ih\bm{q}\in \ball(\bm{0},\srad)$}
\end{align*}
}

For~\eqref{eq:post-replacement-is-good}, consider some $j \in [d+1]$, then
{\allowdisplaybreaks
\begin{align*}
    &\Pr_{\bm{q}_1,\bm{q}_2 \sim \mathcal{N}(\bm{0},I)} \bigg[ \Big|\sum_{i=0}^{d+1} \alpha_i \cdot f(\underbrace{\bm{p}+j\bm{q}_1}_{\triangleq\bm{z}_1}+i (\underbrace{h\bm{q}+j\bm{q}_2}_{\triangleq\bm{z}_2} ))\Big|>\delta \bigg] 
    = \Pr_{\substack{\bm{z}_1 \sim \mathcal{N}(\bm{p},j^2I) \\ \bm{z}_2 \sim \mathcal{N}(h\bm{q},j^2I)}} \bigg[ \Big|\sum_{i=0}^{d+1} \alpha_i\cdot f(\bm{z}_1 + i \bm{z}_2)\Big|>\delta\bigg] \\
    \leq & \Pr_{\substack{\bm{z}_1 \sim \mathcal{N}(\bm{0},j^2I) \\ \bm{z}_2 \sim \mathcal{N}(\bm{0},j^2I)}} \bigg[ \Big|\sum_{i=0}^{d+1} \alpha_i\cdot f(\bm{z}_1 + i \bm{z}_2)\Big|>\delta\bigg]\\
    +& 2 \left(\dtv\left(\mathcal{N}(\bm{0},j^2I), \mathcal{N}(\bm{p},j^2I)\right) + \dtv\left(\mathcal{N}(\bm{0},j^2I), \mathcal{N}(h\bm{q},j^2I) \right) \right) \\
    \leq & \Pr_{\bm{z}_1,\bm z_2 \sim \mathcal{N}(\bm{0},j^2I)} \bigg[ \Big|\sum_{i=0}^{d+1} \alpha_i \cdot f(\bm{z}_1 + i \bm{z}_2)\Big|>\delta\bigg] +j^2\srad + j^2 h\srad \tag{By \autoref{lem:usefulDTVBound}}\\
    \leq &\; \rho + j^2\srad +j^2h\srad < \rho + 8d^2\srad. \tag{By~\eqref{eq:test_bound_3_approx}}
\end{align*}}
By a union bound over all $j \in [d+1]$, 
\[ \Pr_{\bm{q}_1,\bm{q}_2 \sim \mathcal{N}(\bm{0},I)} \bigg[ \forall j \in [d+1], \Big|\sum_{i=0}^{d+1} \alpha_i \cdot f(\bm{p}+j\bm{q}_1 + i(h\bm{q}+j\bm{q}_2))\Big|\leq\delta \bigg] \geq 1-(2d\rho+16d^3\srad), \]
which is at least $ 2/3$, as $\rho\leq (30d)^{-2}$ by \autoref{claim:rho-is-small-approx}, and we set $r=(4d)^{-6}$ in \autoref{cor:replacement-is-good_approx}.
A final union bound concludes that both~\eqref{eq:replacement-is-good_approx} and~\eqref{eq:post-replacement-is-good_approx} hold simultaneously with non-zero probability.
\end{proof}

\section{Proof of \autoref{lem:indsum}}
\label{sec:AppendixC}

In this appendix we prove \autoref{lem:indsum}, which is an immediate consequence of the following lemma. This lemma follows in a straightforward manner from \cite[Theorem 3.3]{MP13} . 
 Denote by $\oplus$, the standard \emph{direct sum} of lattices and by $\otimes$, the standard \emph{tensor product}, and let $I_n$ denote the $n \times n$ identity matrix.

\begin{lem}
    Let $\caL\subseteq\R^n$ be a full rank lattice, and fix $k\in\Z_{>0}$, $\vartheta$, $s_1,\dots,s_k\in\R_{>0}$, and $\bm z\triangleq(z_1,\hdots,z_k)\in\Z^k$ such that $ s_i\geq\|\bm z\|_{\infty}\sqrt{2}\eta_\vartheta(\caL)$ for every $i\in[k]$. Let $\bm y_1,\ldots, \bm y_k$ be sampled independently from $\caG(\caL,s_i)$. Then, for $s\triangleq\sqrt{\sum_{i=1}^k(z_is_i)^2}$, the total variation distance between $\bm y \triangleq \sum_{i=1}^k z_i\bm y_i$ and $\caG\left(\gcd{(\bm z)}\caL,s \right)$ is given by 
    \[ \dtv \left(\bm y,\caG \left({\gcd{(\bm z)}\caL,s} \right)\right)=\frac{1}{2}\sum_{\bm y\in\gcd(\bm z)\caL}\left|\frac{\tau_{s}(\bm y)}{\tau_{s}(\gcd(\bm z)\caL)}-\frac{\tau_{s}(\bm y)\tau(L+\bm x)}{\tau(\caL')}\right|, \]
where the $nk$-dimensional lattice $\caL'\triangleq\bigoplus_{i=1}^k s_i^{-1}\caL=(\bm S \otimes I_n)^{-1}\caL^{\oplus k}$, for $\bm S\triangleq\mathrm{diag}(s_1,\hdots,s_k)$, and $L$ is the sublattice of $\caL'$ containing the elements which fall in the kernel of 
$\bm Z \triangleq (\bm z^\top \bm S) \otimes I_n$; that is,
$L \triangleq  \caL' \cap \ker(\bm Z)$. As well, $\bm x$ is the orthogonal projection of $\bm x'$ onto $\ker(\bm Z)$, where $\bm x'\in\caL':\bm Z\bm x'=\bm y$. Furthermore, $$\dtv \left(\bm y,\caG \left({\gcd{(\bm z)}\caL,s} \right)\right)\leq \frac{2k\vartheta} {1-2k\vartheta}.$$
    

\end{lem}
The proof of the first part follows by recording the parameters obtained in \cite{MP13}, while the proof for the second part is provided below:

\begin{proof}Let $\Re\colon\mathbb{C}\to\R$ denote the real part of a complex number.
From the proof of Lemma 4.1 in \cite{MR07}, we have that for every $\bm x\in\R^n$, lattice $\Lambda$, and $\vartheta >0$, such that $\eta_{\vartheta}(\Lambda)\leq 1$, 
\begin{align*}
    \underbrace{\tau(\Lambda+\bm x)}_{\geq 0}&=\underbrace{\det(\Lambda^*)}_{\geq 0}\Big(1+\sum_{\bm w\in \Lambda^*\setminus\{\bm 0\}}e^{2\pi i\langle \bm x,\bm w\rangle}\underbrace{\tau(\bm w)}_{\geq 0}\Big)\\
    &=\det(\Lambda^*)\Big(1+\sum_{\bm w\in \Lambda^*\setminus\{\bm 0\}}\underbrace{\Re(e^{2\pi i\langle \bm x,\bm w\rangle})}_{\in[-1,1]}\tau(\bm w)\Big)
    \in\det(\Lambda^*)\Big(1\pm\sum_{\bm w\in\caL^*\setminus\{\bm 0\}}\tau(\bm w)\Big) \\ 
    &=\det(\Lambda^*)\Big(1\pm\underbrace{\tau(\Lambda^*\setminus\{\bm 0\})}_{\leq\vartheta,\because \eta_{\vartheta}(\caL)\leq 1}\Big)=\det(\Lambda^*)(1\pm\vartheta).
\end{align*}
From the proof of Theorem 3.3 in \cite{MP13}, we have $\eta_{\vartheta'}(L)\leq\sqrt{2} \eta_{\vartheta}(\caL)/\min s_i\leq 1$, where $\vartheta'\triangleq(1+\vartheta)^{k-1}-1$. And for all $\vartheta,k$ such that $k \vartheta \in [0,1]$ we have $\vartheta'=(1+\vartheta)^{k-1}-1\leq 2k\vartheta$.
Also, note that for every $0\leq\vartheta_1\leq\vartheta_2$,  $\eta_{\vartheta_1}(\Lambda)\geq\eta_{\vartheta_2}(\Lambda)$ holds, for every lattice $\Lambda$. So, we can claim $\eta_{2k\vartheta}(L)\leq\eta_{\vartheta'}(L)\leq 1$, and hence we have that for every $\bm x\in\R^n,\tau(L+\bm x)\in\det(L^*)(1\pm 2k\vartheta)=\det(L^*)[1-2k\vartheta,1+2k\vartheta]$. Now observe that

\[\sum_{\bm y\in\gcd(\bm z)\caL}\underbrace{\frac{\tau_{s}(\bm y)}{\tau_{s}(\gcd(\bm z)\caL)}}_{\triangleq p(\bm y)}=1=\sum_{\bm y\in \gcd(\bm z)\caL}\underbrace{\frac{\tau_{s}(\bm y)\tau(L+\bm x)}{\tau(\caL')}}_{\triangleq q(\bm y)}=\sum_{\bm y\in \gcd(\bm z)\caL}\underbrace{p(\bm y)c\cdot(1+\vartheta(\bm y))}_{=q(\bm y)},\] 
where $c\triangleq \tau_{s}(\gcd(\bm z)\caL)\det(L^*)/ \tau(\caL')=\tau_{s}(\gcd(\bm z)\caL)/\tau(\caL')\det(L)$, and $\vartheta(\bm y)\in[-2k\vartheta,2k\vartheta]$ for every $\bm y\in \gcd(\bm z)\caL$. Thus, we have,
\begin{align*}
    \frac{1}{c}&=\sum_{\bm y\in\gcd(\bm z)\caL}p(\bm y)(1+\vartheta(\bm y))=\underbrace{\sum_{\bm y\in\gcd(\bm z)\caL}p(\bm y)}_{=1}+\underbrace{\sum_{\bm y\in\gcd(\bm z)\caL}p(\bm y)\vartheta(\bm y)}_{p(\bm y)\text{ is a PMF}}\\
    &=1+\underbrace{\mathbb{E}_{\bm y\sim_{p(\bm y)}\gcd(\bm z)\caL}[\vartheta(\bm y)]}_{\in[\min_{\bm y}\vartheta(\bm y),\max_{\bm y}\vartheta(\bm y)]}\in[1-2k\vartheta,1+2k\vartheta].
\end{align*}
Thus, $c \in [1/(1+2k \vartheta), 1/(1-2k \vartheta)]$ and $1-c \in [-2k \vartheta/(1-2k\vartheta), 2k \vartheta/(1+2k \vartheta)]$. It follows that 

\begin{align*}
    \dtv\left(\bm y,\caG\left({\gcd(\bm z)\caL,s}\right)\right)&=\frac{1}{2}\sum_{\bm y\in\gcd(\bm z)\caL}\left|\frac{\tau_{s}(\bm y)}{\tau_{s}(\gcd(\bm z)\caL)}-\frac{\tau_{s}(\bm y)\tau(L+\bm x)}{\tau(\caL')}\right|\leq\frac{1}{2}\max_{\bm y\in\gcd(\bm z)\caL}\left|1-\frac{q(\bm y)}{p(\bm y)}\right|\\
    &\leq\frac{1}{2}\max_{\bm y\in\gcd(\bm z)\caL}\max_c\left|1-c(1+\vartheta(\bm y))\right|\\
    &\leq\frac{1}{2}\left(\max_{c}|1-c|+\max_c c\cdot\max_{\bm y\in\gcd(\bm z)\caL}|\vartheta(\bm y)|\right)\\
    &=\frac{1}{2}\left(\frac{2k\vartheta}{1-2k\vartheta}+\frac{1}{1-2k\vartheta}2k\vartheta\right)=\frac{2k\vartheta}{1-2k\vartheta}. \qedhere
\end{align*}
\end{proof}

Finally, setting $k$ and $\vartheta$ such that $k\vartheta< 1/4$ proves \autoref{lem:indsum}.

\section{Distribution-Free Approximate Tester for  Additivity}\label{sec:additivity}

Recall that a function $f\colon\mathbb{R}^n \to \mathbb{R}$ is \emph{additive} if for every $\bm x,\bm y \in \mathbb{R}^n$, $f(\bm x+ \bm y) = f(\bm x) + f(\bm y)$; a function is \emph{linear} if it is both additive and for every $\alpha \in \mathbb{R}$ and $\bm x \in \mathbb{R}^n$, $f(\alpha \bm x) = \alpha f(\bm x)$. 
In this appendix we modify the additivity tester of \cite{FlemingY20} to be robust against noise. This gives us a tester for additivity with better error parameters than the approximate degree-$1$ tester obtained from \autoref{thm:approximate_low_degree_tester}. Formally, given query access to the input function $f: \R^n \to \R$, sampling access to unknown $(\varepsilon/4,R)$-concentrated distribution $\mathcal{D}$, and constants $0<\alpha\in \R$ and $0<\varepsilon \in \R$,  
 a \emph{distribution-free approximate tester} for additivity distinguishes between the following two cases with probability at least $2/3$:
\begin{itemize}
\item \textbf{\textsc{Yes Case:}} There exists an additive function $h: \R^n \to \R$ such that for all $\bm{p} \in \R^n$:
\[
|f(\bm{p})-h(\bm{p})| \leq \alpha;
\]
\item \textbf{\textsc{No Case:}}  For any additive function  $h: \R^n \to \R$:
\[
\Pr_{\bm{p} \sim \mathcal{D}}[|f(\bm{p})-h(\bm{p})|>21015 \cdot \lrad n^{1.5}\alpha] > \eps.
\]
\end{itemize}
We say that the tester has one-sided error if, for every $f$ satisfying the \textsc{Yes Case} , the tester always accepts, with probability $1$.

The main theorem of this section is the following.

\begin{restatable}{thm}{noisy}\label{thm:central-gausssian}
Let $\alpha, \varepsilon > 0$ and $\caD$ be an unknown $(R,\varepsilon/4)$-concentrated distribution. There exists a one-sided error, $O (\frac{1}{\varepsilon} \log\frac{1}{\varepsilon} )$-query for distinguishing between the case when $f$ is pointwise $\alpha$-close to some additive function and the case when, for every additive function $h$, $\Pr_{\bm p \sim \caD}[|f(\bm p)-h(\bm p)| > O(\lrad n^{1.5}\alpha) ] > \eps$.
\end{restatable}

The remainder of this section is organized as follows: In \autoref{sec:noisy_proof_idea}, first we describe several properties of additive functions which we will require for our tester, and  give an overview of the proof of \autoref{thm:central-gausssian}. Then, we present our tester under some constraints on the unknown $\mathcal{D}$, and give informal description of the proof technique.
In \autoref{subsec:proof_tester_exists} we prove  the main \autoref{thm:central-gausssian}, relying on our main \autoref{lem:g-is-well-defined}. \autoref{subsec:proof_of_g_additive_in_small_ball} is devoted to prove the main \autoref{lem:g-is-well-defined}.
Finally, in \autoref{subsec:distribution-free-additivity}, we  show that our tester is actually a multiplicative error distribution-free tester, without any assumption on the unknown distribution $\mathcal{D}$.   

\subsection{Proof Overview and \texorpdfstring{$\delta$}{d}-Additive Functions}
\label{sec:noisy_proof_idea}

We say that a function $f:\R^n\to\R$ is \emph{$\delta$-additive}, if for every $\bm x,\bm y\in \R^n$ it holds that 
\[|f(\bm x+\bm y)-f(\bm x)-f(\bm y)|\leq \delta.\]
Satisfying $\delta$-additivity implies that the following inequalities hold, which will be the basis for our tester. For every $\bm x,\bm y,\bm z\in \R^n$, assuming $f$ is $\alpha$-close to some additive function $h$, we have: 
\begin{align}
    |f(\bm x-\bm y)-f(\bm x)+f(\bm y)|\leq |\underbrace{h(\bm x-\bm y)-h(\bm x)-h(-\bm y)}_{=0}|+3\alpha &\leq \delta,\label{eq:diff-is-delta-additive}\\
    |f(\bm x)+f(-\bm x)|\leq |h(\bm x)+h(-\bm x)|+2\alpha=|h(\bm x)-h(\bm x)|+2\alpha &\leq \delta\\
    |f(\bm x-\bm y)-f(\bm x-\bm z)-f(\bm z-\bm y)|\leq |h(\bm x-\bm y)-(\underbrace{h(\bm x-\bm z)+h(\bm z-\bm y)}_{=h(\bm x-\bm y)})| +3\alpha &\leq\delta
\end{align}

Our tester (given in \autoref{alg:zero-mean-additivity} and \autoref{alg:subroutines_noisy}) follows the general outline given in the introduction for testing linearity. First, it tests whether $f$ satisfies $\delta$-additivity over a set of samples drawn from the distribution $\cN(\bm 0,I)$. If this test passes with sufficiently high probability then we able to show that $g$ --- a self-corrected function of $f$ on $\ball(\bm 0,\srad)$ ---  is $O(n^{1.5}\delta$)-close to an additive function $h:\R^n\to\R$, and furthermore, if $f$ is $\delta$-additive, then $f$ and $g$ (and therefore $f$ and $h$) are close. 
To do so, we crucially rely on the  following stability theorem for additive functions which follows from \cite[Theorem 2]{Kominek89}. 
\begin{thm}\label{thm:closeness_to_additivity_on_ball}
Let $\srad > 0$ and  $g:\ball(\bm 0,\srad)\to \R$. If $g$ is $\delta$-additive, then there exists an additive function $h:\R^n\to\R$, 
such that for every $\bm x\in \ball(\bm 0,\srad)$ 
  \[|g(\bm x)-h(\bm x)|\leq 5n^{1.5}\delta.\]
\end{thm}

Second, we show that by the way we have constructed $g$, we are able to approximate its value on points within $\ball(\bm 0, \srad)$ with high probability. Thus, for any point $\ball(\bm 0,\srad)$, we can estimate the distance between $f$ and $g$, and therefore between $f$ and $h$, the additive function which is close to $g$, given by \autoref{thm:closeness_to_additivity_on_ball}. 

For points $\bm p \not \in \ball(\bm 0, \srad)$, we map them to a point within $\ball(\bm 0, \srad)$ by dividing by a contraction factor $\kappa_{\bm p}$, defined as
\[ \kappa_{\bm p} \triangleq \begin{cases} 1 &\text{ if } \|\bm p \|_2 \leq \srad, \\ \left \lceil \frac{ \|\bm p\|_2 }{\srad}\right \rceil   &\text{ if } \|\bm p \|_2 > \srad. \end{cases}\]
 Then, we approximate $h$ on the corresponding point $\bm p/\kappa_{\bm p}$ inside the ball and map $h(\bm p/\kappa_{\bm p})$ back to $h(\bm p)$.


We are now ready to formally define $g$.

\paragraph{The Self-Corrected Function.} Let $r$ be a sufficiently small rational; $r \triangleq 1/50$ suffices. Define the value of the self-corrected function $g$ at a point $\bm{p}\in\ball(\bm 0,\srad)$ as the (weighted) median value of  $g_{\bm{x}}(\bm{p})\triangleq f(\bm{p}-\bm{x}) + f(\bm{x})$,  
each weighted according to its probability mass under $\bm{x} \sim \mathcal{N}(\bm 0,I)$. For points $\bm p$ outside of the ball, we project them into the ball by diving by a sufficiently large contraction factor that depends on the magnitude of $\bm p$. 

Concretely, $g:\mathbb{R}^nb \rightarrow \mathbb{R}$ is defined as follows
\[ g(\bm p) \triangleq \kappa_{\bm p} \cdot \mathop{\mathsf{med}}_{\bm x\sim\mathcal{N}(\bm 0,I)} \left[  g_{\bm x} \left( \frac{\bm p}{\kappa_{\bm p}} \right)  \right] = \kappa_{\bm p} \cdot \mathop{\mathsf{med}}_{\bm x\sim\mathcal{N}(\bm 0,I)} \left[  f \left( \frac{\bm p}{\kappa_{\bm p}}- \bm x \right) +f \left (\bm x \right)  \right]. \]

The intuition for using median is that it, in the case when $f$ is approximately additive, the median value should allow us to approximately correct the errors in $f$, and thus $g$ should be close to additive. We use the median here, rather than the majority, because the majority is more affected by outliers.

\subsection{Approximate Additivity Tester }\label{subsec:proof_tester_exists}
Our tester is given in \autoref{alg:zero-mean-additivity}, which uses subroutines given in \autoref{alg:subroutines_noisy}.\\
\begin{algorithm}[ht]
  \caption{Approximate Additivity Tester}\label{alg:zero-mean-additivity}
  \Procedure{\emph{\Call{ApproxAdditivityTester}{$f,\mathcal{D},\alpha,\eps,\lrad$}}}{
    \Given{Query access to $f\colon \mathbb{R}^n \to \mathbb{R}$, sampling access to an unknown $(\eps/4,\lrad)$-concentrated distribution $\mathcal{D}$, a noise parameter $\alpha>0$, and a farness parameter $\eps> 0$;}
    $\delta\gets 3\alpha$, $r\gets 1/50$\;
    \textbf{Reject} if \Call{TestAdditivity}{$f,\delta$} returns \textbf{Reject}\;
    \For{$N_{\ref{alg:zero-mean-additivity}} \gets O(1/\varepsilon)$ times}{
      Sample $\bm p \sim \mathcal{D}$\;
    \If{$\bm p\in\ball(\bm 0,\lrad)$}{
      \textbf{Reject} if $|f(\bm p) - $ \Call{Approximate-$g$}{$\bm p,f,\delta$}$|>5\delta n^{1.5}\kappa_{\bm p}$, or if \Call{Approximate-$g$}{$\bm p,f,\delta$} returns \textbf{Reject}.
    }}
    \textbf{Accept}.}
\end{algorithm}
\begin{algorithm}
  \caption{Additivity Subroutines}\label{alg:subroutines_noisy}
  \Procedure{\emph{\Call{TestAdditivity}{$f,\delta$}}}{
    \Given{Query access to $f\colon \mathbb{R}^n \to \mathbb{R}$, threshold parameter $\delta>0$;}
    \For{$N_{\ref{alg:subroutines_noisy}} \gets O(1)$ times}{
      Sample $\bm{x},\bm{y},\bm{z} \sim \cN(\bm 0, I)$\;
      \textbf{Reject} if $|f(-\bm x)+f(\bm x)|>\delta$\;
      \textbf{Reject} if $|f(\bm x - \bm y) - \left(f(\bm x) - f(\bm y)\right)|>\delta$\;
      \textbf{Reject} if $\left|f\left(\frac{\bm x - \bm y}{\sqrt{2}} \right) - \left( f \left(\frac{\bm x -\bm z}{\sqrt{2}} \right) + f \left(\frac{\bm z - \bm y}{\sqrt 2} \right)\right)\right|>\delta$\; 
    }
    \textbf{Accept}.
  }
  \Procedure{\emph{\Call{Approximate-$g$}{$\bm p, f,\delta$}}}{
    \Given{$\bm p \in \mathbb{R}^n$, query access to $f\colon \mathbb{R}^n \to \mathbb{R}$, threshold parameter $\delta>0$;}
    $N'_{\ref{alg:subroutines_noisy}} \gets O\left(\log \frac{1}{\varepsilon}\right)$\;
    Sample $\bm x_{1}, \ldots, \bm x_{N'_{\ref{alg:subroutines_noisy}}} \sim \mathcal{N}(\bm 0,I)$\;
    \textbf{Reject} if there exists $j \in [ N'_{\ref{alg:subroutines_noisy}}]$ such that $\left|\left(f(\bm p/\kappa_{\bm p} - \bm x_1) + f(\bm x_1)\right) -\left(f(\bm p/\kappa_{\bm p} - \bm x_j) + f(\bm x_j)\right)\right|>2\delta$\;
    \Return $\kappa_{\bm p} \left( f( \bm p/\kappa_{\bm p} - \bm x_1) + f(\bm x_1) \right)$.
  }
\end{algorithm}


The following lemma records the properties of $g$ that will be guaranteed by our tester. 

\begin{lem}\label{lem:g-is-additive}
  With $\srad\triangleq 1/50$, if \Call{TestAdditivity}{$f,\delta$} accepts with probability at least $1/3$, then $g$ is a $14\delta$-additive function inside the small ball $\ball(\bm 0,\srad)$, and furthermore, for every $\bm p\in\ball(\bm 0,\srad)$  it holds that 
  \[\Pr_{\bm x \sim \mathcal{N}(\bm 0,I)}[\left|g(\bm p) - \left( f(\bm p - \bm x) + f(\bm x)\right)\right|\geq 4\delta ] < 1/2.\]
\end{lem}

We prove \autoref{thm:central-gausssian} assuming that \autoref{lem:g-is-additive} holds.


\begin{proof}[Proof of \autoref{thm:central-gausssian}]
  First, observe that if $f$ is a noisy version of an additive function with noise bounded by $\alpha$, then $f$ is a $\delta$-additive function for $\delta=3\alpha$, and we claim \autoref{alg:zero-mean-additivity} always accepts. It is immediate that \Call{TestAdditivity}{$f$} always accepts. To see that $f$ also passes the remaining tests, observe that, since $f$ is $\alpha$-close to an additive function $h$, point-wise, we can claim:
  \begin{align*}
      \left|\kappa_{\bm p}  g_{\bm x}(\bm p/\kappa_{\bm p})  - f(\bm p)\right| &\leq \kappa_{\bm p}\left| g_{\bm x}(\bm p/\kappa_{\bm p}) - f(\bm p/\kappa_{\bm p})\right| + \left| \kappa_{\bm p}f(\bm p/\kappa_{\bm p}) - f(\bm p) \right|\\
      &= \kappa_{\bm p}\underbrace{\left| f(\bm p/\kappa_{\bm p} - \bm x) + f(\bm x) - f(\bm p/\kappa_{\bm p}) \right|}_{\leq\delta\text{, by }\eqref{eq:diff-is-delta-additive}} + \left| \kappa_{\bm p}f(\bm p/\kappa_{\bm p}) -h(\bm p) + h(\bm p)- f(\bm p) \right|\\
      &\leq \kappa_{\bm p}\delta + \left| \kappa_{\bm p}f(\bm p/\kappa_{\bm p}) -\kappa_{\bm p}h(\bm p/\kappa_{\bm p}) + h(\bm p) - f(\bm p) \right| \tag{as $h(\bm p)=\kappa_{\bm p}h(\bm p/\kappa_{\bm p})$}\\
      &\leq\delta\kappa_{\bm p} +\kappa_{\bm p}\underbrace{\left|f(\bm p/\kappa_{\bm p}) - h(\bm p/\kappa_{\bm p})\right|}_{<\alpha} + \underbrace{\left| h(\bm p)-f(\bm p)\right|}_{<\alpha}\\
      &\leq\delta\kappa_{\bm p} +\alpha\kappa_{\bm p} +\alpha<2\delta\kappa_{\bm p}.
  \end{align*}
  
  
  Note that \Call{Approximate-$g$}{$\bm p, f$} never rejects, because we have $\left|  g_{\bm x}(\bm p/\kappa_{\bm p}) - f(\bm p/\kappa_{\bm p})\right|\leq \delta$.
  Then, by the triangle inequality,
  \begin{align*}
      |g_{\bm x_i}(\bm p/\kappa_{\bm p})-g_{\bm x_j}(\bm p/\kappa_{\bm p})|\leq 2| g_{\bm x}(\bm p/\kappa_{\bm p})  - f(\bm p/\kappa_{\bm p})|\leq 2\delta.
  \end{align*}

  We now show that if $f$ is $\varepsilon$-far from all additive functions, then Algorithm~\ref{alg:zero-mean-additivity} rejects with probability at least $2/3$.
  If \Call{TestAdditivity}{$f$} accepts with probability at most $1/3$, we can reject $f$ with probability at least $2/3$.
  Hence, we assume that \Call{TestAdditivity}{$f$} accepts with probability at least $1/3$.
  Then by \autoref{lem:g-is-additive}, the function $g$ is $14\delta$-additive, inside $\ball(\bm 0,\srad)$. 
  Using \autoref{thm:closeness_to_additivity_on_ball}, there is an additive function $h:\R^n\to\R$,  which is 
  $70\delta n^{1.5}$-close to $g$, on the small ball, i.e. for every $\bm x\in\ball(\bm 0,\srad),|g(\bm x)-h(\bm x)|\leq 70\delta n^{1.5}$. Since $f$ is $\varepsilon$-far from any additive function, we have $f$ is $\varepsilon$-far from $h$.

  Now, we want to bound the probability that Step~2 of Algorithm~\ref{alg:zero-mean-additivity} passes.
  First, we bound the probability that \Call{Approximate-$g$}{$\bm p, f$} fails to recover the value of $g(\bm p)$ within an error of $4\delta$. That is, we bound the probability that $|g_{\bm x_1}(\bm p/\kappa_{\bm p})- g_{\bm x_j}(\bm p/\kappa_{\bm p})|\leq 2\delta$, for all $j \in [ N'_{\ref{alg:subroutines_noisy}}]$ (so that it doesn't reject), but $\left|g(\bm p/\kappa_{\bm p}) - g_{\bm x_1}(\bm p/\kappa_{\bm p}) \right|>4\delta$, by the probability that for all sampled vectors $\bm x_i,i\in  \left[ N'_{\ref{alg:subroutines_noisy}} \right]$, $\left|g(\bm p/\kappa_{\bm p})-g_{\bm{x}_i}(\bm p/\kappa_{\bm p})\right|\geq 4\delta$.
  By Lemma~\ref{lem:g-is-additive}, the probability that we draw $N'_{\ref{alg:subroutines_noisy}}$ points which satisfy this, is less than $2^{-N'_{\ref{alg:subroutines_noisy}}}$, which can be made $ \leq \varepsilon /4$ by choosing the hidden constant in $N'_{\ref{alg:subroutines_noisy}}$ to be large enough. 

  Now that we have established that --- in the case we  obtained query access to  approximate $g$ inside the small ball --- we get the correct approximation within $4\delta$ with high probability, it remains to show that we can test whether $f$ and $h$ are close.
  After arguing that $g$ is $14\delta$-additive in $\ball(\bm 0,\srad)$, it will follow using \autoref{thm:closeness_to_additivity_on_ball}, that $g$ is close to an additive function $h:\R^n\to\R$ on all points inside $\ball(\bm 0,\srad)$. 
We argue that in the YES case, if $f$ is close to some additive function then, for every $\bm p\in \ball(\bm 0,\lrad)$ (which contains the majority of the mass of the unknown distribution $\mathcal{D})$, we have $|f(\bm p)-h(\bm p)|\leq O(n^{1.5}\delta)$.

While in the NO case, since $f$ is far from any additive function, it is also far from $h$, and therefore 
\[\Pr_{\bm p\sim\mathcal{D}}[|f(\bm p)-h(\bm p)|>4515n^2\alpha]\geq \varepsilon.\]
And also 
\[\Pr_{\bm p\sim\mathcal{D}}[|f(\bm p)-h(\bm p)|>4515n^2\alpha : \bm p\in\ball(\bm 0,\lrad)]\geq \frac{3}{4}\varepsilon.\]

  
  If \Call{TestAdditivity}{$f$} passes with probability at least $1/3$, then by \autoref{lem:g-is-additive}, $g$ will be $14\delta$-additive inside $\ball(\bm 0,\srad)$,
  and for every $\bm p\in\ball(\bm 0,\srad),\Pr_{\bm p\sim\cN(\bm 0,I)}[|g(\bm p)-g_{\bm x}(\bm p)|\geq 4\delta]<1/2$.
  Consequently, by \autoref{thm:closeness_to_additivity_on_ball}, there would exist an additive function $h:\R^n\to\R$, such that for every $\bm x\in\ball(\bm 0,\srad),|g(\bm x)-h(\bm x)|\leq 70n^{1.5}\delta$. This gives us, for every $\bm p\in\ball(\bm 0,\lrad)$,
  \[|g(\bm p)-h(\bm p)|=\left| \kappa_{\bm p}g\left(\frac{\bm p}{\kappa_{\bm p}}\right)-\kappa_{\bm p}h\left(\frac{\bm p}{\kappa_{\bm p}}\right)\right|\leq 70n^{1.5}\delta\kappa_{\bm p}\leq 3500n^{1.5}\delta \lrad\leq 7000n^2\delta.\]
  
  Note that since $f$ is $\varepsilon$-far from $h$ we have 
  \begin{align*}
      &\Pr_{\bm p \sim \mathcal{D}} [\left|f(\bm p)- g(\bm p)\right|>5\delta n^{1.5}\kappa_{\bm p} : \bm p\in\ball(\bm 0,\lrad)]\\
      &\geq\Pr_{\bm p\sim\mathcal{D}}[|f(\bm p)-h(\bm p)|>21015n^2\alpha : \bm p\in\ball(\bm 0,\lrad)]\\
      &\geq \frac{3\varepsilon}{4}
  \end{align*}
  
  
  
  Indeed, the probability that Step~2 of Algorithm~\ref{alg:zero-mean-additivity} fails to reject is at most
  \begin{align*}
    & {\left( \Pr_{\bm p \sim \mathcal{D}} \left[ \left|f(\bm p)- g(\bm p)\right|\leq 5\delta n^{1.5}\kappa_{\bm p} \vee \text{\Call{Approximate-$g$}{$\bm p, f$} fails to correctly recover $g(\bm p)$} : \bm p\in\ball(\bm 0,\lrad)\right]  \right)}^{N_{\ref{alg:zero-mean-additivity}}} \\
    &\leq  \Big( 1- \Pr_{\bm p \sim \mathcal{D}} [\left|f(\bm p)- g(\bm p)\right|>5\delta n^{1.5}\kappa_{\bm p}:\bm p\in\ball(\bm 0,\lrad)]\\
    &+ \Pr_{\bm p \sim \mathcal{D}} \left[ \text{\Call{Approximate-$g$}{$\bm p, f$} fails to correctly recover $g(\bm p)$}:\bm p\in\ball(\bm 0,\lrad) \right]  \Big)^{N_{\ref{alg:zero-mean-additivity}}} \\
    &< {\left(1- \frac{3\varepsilon}{4} +\frac{\varepsilon}{4} \right)}^{N_{\ref{alg:zero-mean-additivity}}} < \frac{1}{3},
  \end{align*}
  by choosing the hidden constant in $N_{\ref{alg:zero-mean-additivity}}$ to be large enough.
  Therefore, Algorithm~\ref{alg:zero-mean-additivity} rejects with probability at least $2/3$.
\end{proof}

It now remains to prove \autoref{lem:g-is-additive}, showing that if  Algorithm~\ref{alg:zero-mean-additivity} succeeds, then $g$ is $14\delta$-additive inside $\ball(\bm 0,\srad)$, and can be well approximated in $\ball(\bm 0,\srad)$ with  high probability by querying $f$ on correlated points.

\subsection{\texorpdfstring{$O(\delta)$}{O(d)}-Additivity of \texorpdfstring{$g$}{g} Inside \texorpdfstring{$\ball(\bf 0,\it\srad)$}{Ball(0, r)}} \label{subsec:proof_of_g_additive_in_small_ball}
First, we record the basic, but useful observation that if the \Call{TestAdditivity}{} subroutine passes, then each of its tests hold with high probability over $\mathcal{N}(\bm 0,I)$.

\begin{lem}\label{lem:f-additive-whp}
  If \Call{TestAdditivity}{$f$} accepts with probability at least $1/3$, then 
  \begin{align} 
    \Pr_{\bm x \sim \mathcal{N}(\bm 0, I)}[\left|f(-\bm x) + f(\bm x)\right|\leq \delta] \geq \frac{999}{1000}, \label{eq:lem3-1} \\
  \Pr_{\bm x,y \sim \mathcal{N}(\bm 0,I)} \left [\left|f(\bm x - \bm y) - f(\bm x) + f(\bm y)\right|\leq \delta \right] \geq \frac{999}{1000}, \label{eq:lem3-3} \\
   \Pr_{\bm x,\bm y,\bm z \sim \mathcal{N}(\bm 0,I)} \left [\left|f\left(\frac{\bm x - \bm y}{\sqrt{2}} \right) - f \left(\frac{\bm x - \bm z}{\sqrt{2}} \right) - f \left(\frac{\bm z - \bm y}{\sqrt{2}} \right)\right|\leq \delta \right] \geq \frac{999}{1000}.  \label{eq:lem3-5}
  \end{align}
\end{lem}
\begin{proof}
  Suppose for contradiction that at least one of~\eqref{eq:lem3-1},~\eqref{eq:lem3-3}, and~\eqref{eq:lem3-5}  does not hold.
  We here assume that~\eqref{eq:lem3-1} does not hold as other cases are similar.

  We accept only when all the sampled points $\bm x$ satisfy $|f(-\bm x)+f(\bm x)|\leq \delta$.
  By setting the hidden constant in $N_{\ref{alg:subroutines_noisy}}$ to be large enough, this happens with probability at most
  \[ {\Big(\Pr_{\bm x \sim \mathcal{N}(0,I)} [|f(-\bm x ) + f(\bm x) |\leq \delta] \Big)}^{N_{\ref{alg:subroutines_noisy}}} < {\left( \frac{999}{1000}\right)}^{N_{\ref{alg:subroutines_noisy}}} < \frac{1}{3}, \]
  which is a contradiction.
 \end{proof}

In order to argue that $g$ is $O(\delta)$-additive on points within $\ball(\bm 0,\srad)$, we will rely on the fact that $\bm p + \bm x$ is approximately distributed as $\bm x \sim \mathcal{N}(\bm 0,I)$, if $\|\bm p\|_2$ is small.
By \autoref{lem:usefulDTVBound}, we have a bound on the total variation distance between $\bm x$ and $ \bm p+\bm x$.
Next, we will show that $g$ is $O(\delta)$-additive within $\ball(\bm 0,\srad)$.
\begin{lem}\label{lem:g-additive-in-ball}
  Suppose that $\eqref{eq:lem3-1}-\eqref{eq:lem3-5}$ of \autoref{lem:f-additive-whp} hold. Then for every $\bm p,\bm q \in \mathbb{R}^n$ with $\|\bm p\|_2, \|\bm q\|_2, \|\bm p + \bm q\|_2 \leq \srad$, it holds that \[\left|g(\bm p + \bm q) -g(\bm p) - g(\bm q)\right|\leq 14\delta.\]
\end{lem}

The proof of this lemma will crucially rely on the following two lemmas, which say that the conclusions of \autoref{lem:f-additive-whp} hold with high probability, even when one of the points are fixed to some $\bm p \in \ball(\bm 0,\srad)$. A consequence of this is that we will be able to query $g$ within a small error, with high probability.

\begin{lem}\label{lem:g-is-well-defined}
  Suppose that $\eqref{eq:lem3-1}-\eqref{eq:lem3-5}$  of \autoref{lem:f-additive-whp} hold. Then, for every $\bm p \in \mathbb{R}^n$ with $\|\bm p\|_2 \leq \srad$,
  \begin{equation}\label{eq:g_is_approximated_majority_gaussian}
      \Pr_{\bm x \sim \mathcal{N}(\bm 0,I)}[|g(\bm p) - ( f(\bm p-\bm x) + f(\bm x))|<4\delta] \geq \frac{113}{125}.
  \end{equation}
\end{lem}

The proof of this lemma will rely on an earlier stated theorem which provides a relationship between the majority and the median:

\medIsClose*

We provide a proof of this theorem  in \autoref{sec:appendixB}.
With this result in hand, we are ready to prove \autoref{lem:g-is-well-defined}.

\begin{proof}[Proof of \autoref{lem:g-is-well-defined}]
  Fix a point $\bm p \in \mathbb{R}^n$ with $\|\bm p\|_2 \leq \srad$. We will bound the following probability, which can be thought as the  approximate-collision probability. 
  \[ A:= \Pr_{\bm x,\bm y \sim \mathcal{N}(\bm 0,I)} [|(f(\bm p-\bm x) + f(\bm x))-( f(\bm p -\bm y) +f(\bm y))|\leq 4\delta].\]
  Observe that
  \begin{align*}
    1-A & = \Pr_{\bm x,\bm y \sim \mathcal{N}(\bm 0,I)}[|f(\bm x) - f(\bm y) -f(\bm p - \bm y) + f(\bm p-\bm x)|\geq 4\delta ]\\
    & \leq \Pr_{\bm x,\bm y \sim \mathcal{N}(\bm 0,I)}[|f(\bm x-\bm y) - f(\bm p-\bm y) + f(\bm p-\bm x)|>3\delta]  \\
    & + \Pr_{\bm x,\bm y \sim \mathcal{N}(\bm 0,I)}[|f(\bm x) - f(\bm y) - f(\bm x-\bm y)|>\delta ] \tag{By Triangle Inequality} \\
    & < \Pr_{\bm x,\bm y \sim \mathcal{N}(\bm 0,I)}[|f(\bm x-\bm y) - f(\bm p-\bm y) + f(\bm p-\bm x)|>3\delta] + \frac{1}{1000} \tag{By \autoref{lem:f-additive-whp} \eqref{eq:lem3-3}}
  \end{align*}
  To bound the first term, we observe, by the fact that $\bm x-\bm p,\bm y-\bm p\sim\cN(-\bm p, I)$ and  $\bm p \approx 0$, the random variables $\bm x-\bm p$ and $\bm y-\bm p$ should be distributed similarly to $\bm x$ and $\bm y$. Indeed,
  {\allowdisplaybreaks
  \begin{align*}
    &\Pr_{\bm x,\bm y \sim \mathcal{N}(\bm 0,I)}[|f(\bm x-\bm y) -( f(\bm p-\bm y) - f(\bm p-\bm x))|>3\delta] \\
    =& \Pr_{\bm x,\bm y \sim \mathcal{N}(\bm 0,I)}[|f(\underbrace{\bm x-\bm p}_{\triangleq \bm{\tilde{x}}} -(\underbrace{\bm y-\bm p}_{\triangleq \bm{\tilde{y}}})) -( f(\bm p-\bm y) - f(\bm p-\bm x))|>3\delta] \\
    =&\Pr_{\bm {\tilde{x}},\bm{\tilde{y}} \sim \mathcal{N}(-\bm p,I)}[|f(\bm{\tilde{x}} - \bm{\tilde{y}}) - (f(-\bm{\tilde{y}}) - f(-\bm{\tilde{x}}))|>3\delta] \\
    \leq & \Pr_{\bm{\tilde{x}},\bm{\tilde{y}} \sim \mathcal{N}(\bm 0,I)}[|f(\bm{\tilde{x}}-\bm{\tilde{y}}) -( f(-\bm{\tilde{y}}) - f(-\bm{\tilde{x}}))|>3\delta] + 2\dtv \Big(\mathcal{N}(\bm 0, I), \mathcal{N}(-\bm p, I) \Big)  \\
     \leq & \Pr_{\bm{\tilde{x}},\bm{\tilde{y}} \sim \mathcal{N}(\bm 0,I)}[|f(\bm{\tilde{x}}-\bm{\tilde{y}}) - f(\bm{\tilde{x}}) + f(\bm{\tilde{y}})|>\delta]  +  \Pr_{\bm{\tilde{x}} \sim \mathcal{N}(\bm 0, I)}[ |f(-\bm{\tilde{x}}) - f(\bm{\tilde{x}})|>\delta]\\
     & +\Pr_{\bm{\tilde{y}} \sim \mathcal{N}(\bm 0, I)}[ |f(-\bm{\tilde{y}}) - f(\bm{\tilde{y}})|>\delta] + \frac{1}{50} \tag{By Triangle Inequality, and \autoref{lem:usefulDTVBound}}\\
    \leq &\; \frac{3}{1000} +  \frac{1}{50} \leq \frac{23}{1000}. \tag{By  \autoref{lem:f-additive-whp}$~\eqref{eq:lem3-1}-\eqref{eq:lem3-5}$}
  \end{align*}}
  Plugging this into our previous bound on $A$, we have
 
   \[ \Pr_{\bm x,\bm y\sim\cN(\bm 0,I)}[|g_{\bm x}(\bm p)-g_{\bm y}(\bm p)|\leq 4\delta] \geq 1- \left(\frac{1}{1000} + \frac{23}{1000} \right) = 1 - \frac{3}{125},  \]
Applying \autoref{thm:median-is-close}, we conclude that for every $\bm p\in\ball(\bm 0,\srad)$ 

  \[\Pr_{\bm q \sim \mathcal{N}(\bm 0, I)} [|g(\bm p) - (f(\bm p -\bm q) +f(\bm q))|\leq 4\delta] \geq 1-4 \cdot \frac{3}{125} = \frac{113}{125}.\]
\end{proof}

The following lemma is essentially condition~\eqref{eq:lem3-3} of ~\autoref{lem:f-additive-whp} with two fixed points.

\begin{lem}\label{lem:whp-g-additive-two-points}
  Suppose that $\eqref{eq:lem3-1}-\eqref{eq:lem3-5}$ of \autoref{lem:f-additive-whp} hold then, for every $\bm p,\bm q \in \mathbb{R}^n$ with $\|\bm p\|_2, \|\bm q\|_2, \|\bm p+\bm q\| \leq \srad$, it holds that
  \[ \Pr_{\bm{x,y,z} \sim \mathcal{N}(\bm 0, I)} \left[ \left|g(\bm p + \bm q) -\left( f \left(\bm p - \frac{\bm x-\bm z}{\sqrt{2}} \right) + f \left( \bm q - \frac{\bm z-\bm y}{\sqrt{2}} \right) + f \left( \frac{\bm x-\bm y}{\sqrt{2}} \right)\right)\right|>5\delta \right] <  \frac{177}{1000}. \]
\end{lem}

\begin{proof}
  Fix a pair of points $\bm p,\bm q \in \mathbb{R}^n$ with $\|\bm p\|_2, \|\bm q\|_2 \leq \srad$. We can bound the probability
  \begin{align*}
    &\Pr_{\bm x,\bm y,\bm z \sim \mathcal{N}(\bm 0, I)} \left[\left| g(\bm p + \bm q) -\left( f \left(\bm p - \frac{\bm x-\bm z}{\sqrt{2}} \right) + f \left( \bm q - \frac{\bm z-\bm y}{\sqrt{2}} \right) + f \left( \frac{\bm x-\bm y}{\sqrt{2}} \right)\right)\right|>5\delta\right] \\
    \leq & \Pr_{\bm x,\bm y \sim \mathcal{N}(\bm 0, I)} \left[ \left|g(\bm p+\bm q) -\left( f \left(\bm  p+\bm q - \frac{\bm x-\bm y}{\sqrt{2}} \right) + f \left( \frac{\bm x-\bm y}{\sqrt{2}} \right)\right)\right|>4\delta\right] \\
    +& \Pr_{\bm x,\bm y,\bm z \sim \mathcal{N}(\bm 0, I)} \left[ \left|f \left(\bm  p +\bm  q - \frac{\bm x-\bm y}{\sqrt{2}} \right) -\left( f \left(\bm  p - \frac{\bm x-\bm z}{\sqrt{2}} \right) + f \left(\bm  q -\frac{\bm z-\bm y}{\sqrt{2}} \right)\right)\right|>\delta \right]
  \end{align*}
  To bound the first term, observe that if $\bm x,\bm y \sim \mathcal{N}(\bm 0, I)$, then the random variable $\bm m\triangleq(\bm x-\bm y)/\sqrt{2}\sim\mathcal{N}(\bm 0,I)$. Furthermore, because $\|\bm p+\bm q \|_2 \leq \srad$,  we can apply ~\autoref{lem:g-is-well-defined} \eqref{eq:g_is_approximated_majority_gaussian} and conclude that
  \begin{align*}
    &\Pr_{\bm x,\bm y \sim \mathcal{N}(\bm 0, I)} \left[ \left|g(\bm p +\bm  q) -\left( f \left(\bm p +\bm  q -  \frac{\bm x-\bm y}{\sqrt{2}} \right) + f \left(\frac{\bm x-\bm y}{\sqrt{2}} \right)\right)\right|>4\delta \right] \\
    =& \Pr_{\bm m\sim \mathcal{N}(\bm 0, I)} \left[ \left|g(\bm p +\bm  q) -\left( f ((\bm p +\bm  q) -  \bm m ) + f (\bm m )\right)\right|>4\delta \right]\leq \frac{12}{125}.
  \end{align*}
  To bound the second term, observe that
  \begin{align*}
    &\Pr_{\bm x,\bm y,\bm z \sim \mathcal{N}(\bm 0, I)} \left[\left| f \left(\bm  p +\bm  q - \frac{\bm x-\bm y}{\sqrt{2}} \right) -\left( f\left(\bm p- \frac{\bm x-\bm z}{\sqrt{2}} \right) + f\left(\bm q- \frac{\bm z-\bm y}{\sqrt{2}} \right)\right)\right|>\delta \right] \\
    =&\Pr_{\bm x,\bm y,\bm z \sim \mathcal{N}(\bm 0, I)} \left[ \left|f \Bigg(\frac{(\sqrt{2}\bm q +\bm y)-(\bm x-\sqrt{2}\bm p)}{\sqrt{2}} \Bigg) -\Bigg(f \Bigg( \frac{(\sqrt{2}\bm q +\bm y)-\bm z}{\sqrt{2}} \Bigg) + f \Bigg( \frac{\bm z-(\bm x-\sqrt{2}\bm p)}{\sqrt{2}} \Bigg)\Bigg)\right|>\delta \right] \\
    =&\Pr_{\substack{\tilde{\bm x}\triangleq \bm x-\sqrt{2}\bm p \sim \mathcal{N}(-\sqrt{2}\bm p,I) \\\tilde{\bm  y}\triangleq\bm y +\sqrt{2}\bm q \sim \mathcal{N}(\sqrt{2}\bm q,I) \\ \bm z \sim \mathcal{N}(\bm 0,I)}}\left[ \left|f \left(\frac{\tilde{\bm y}-\tilde{\bm x}}{\sqrt{2}} \right) -\left( f \left( \frac{\tilde{\bm y}-\bm z}{\sqrt{2}} \right) + f\left( \frac{\bm z-\tilde{\bm x}}{\sqrt{2}} \right)\right)\right|>\delta \right] \\
    \leq  &\Pr_{\tilde{\bm x},\tilde{\bm y},\bm z \sim \mathcal{N}(\bm 0, I)}\left[\left| f \left(\frac{\tilde{\bm y}-\tilde{\bm x}}{\sqrt{2}} \right) -\left( f \left( \frac{\tilde{\bm y}-\bm z}{\sqrt{2}} \right) + f \left( \frac{\bm z-\tilde{\bm x}}{\sqrt{2}} \right)\right)\right|>\delta \right] \\
    & + 2\left(\dtv \left( \mathcal{N}(\bm 0, I), \mathcal{N}(-\sqrt{2}\bm p, I )\right) + \dtv \left( \mathcal{N}(\bm 0,I), \mathcal{N}(\sqrt{2}\bm q, I )\right)\right) \\
    \leq & \frac{1}{1000}  + \frac{\sqrt{2}}{25} < \frac{81}{1000}. \tag{By \autoref{lem:f-additive-whp} \eqref{eq:lem3-5} and ~\autoref{lem:usefulDTVBound}}
  \end{align*}
  Combining both of these bounds, we have
  \[\Pr_{\bm x,\bm y,\bm z \sim \mathcal{N}(\bm 0,I)} \left[ \left|g\left(\bm p +\bm  q\right) -\left( f \left(\bm p - \frac{\bm x-\bm z}{\sqrt{2}} \right) + f \left(\bm  q - \frac{\bm z-\bm y}{\sqrt{2}} \right) + f \left(\frac{\bm x-\bm y}{\sqrt{2}} \right)\right)\right|>5\delta \right]
  < \frac{177}{1000}.\]
\end{proof}

$O(\delta)$-additivity of $g$ within $\ball(\bm 0,\srad)$ is an immediate consequence of these two lemmas.

\begin{proof}[Proof of Lemma~\ref{lem:g-additive-in-ball}]
  Let $\bm p,\bm q \in \mathbb{R}^n$ be any pair of points satisfying $\|\bm p\|_2, \|\bm q\|_2, \|\bm p + \bm q\|_2 \leq \srad$. Our aim is to show that $\left|g(\bm p + \bm q) - g(\bm p) - g(\bm q)\right|\leq 14\delta$. By a union bound we show that the probability that $\bm x,\bm y,\bm z \sim \mathcal{N}(0,I)$ simultaneously satisfy:
  {\allowdisplaybreaks
  \begin{align}
     \left|g(\bm p + \bm q) -\left( f\left(\bm p - \frac{\bm x - \bm z}{\sqrt{2}}\right) + f\left(\bm  q - \frac{\bm z - \bm y}{\sqrt{2}}\right)+ f\left(\frac{\bm x - \bm y}{\sqrt{2}}\right)\right)\right|&<5\delta,\label{itm:itm-1}\\
     \left|g(\bm p) -\left( f\left(\bm p - \frac{\bm x - \bm z}{\sqrt{2}}\right) + f\left(\frac{\bm x - \bm z}{\sqrt{2}}\right)\right)\right|&<4\delta,\label{itm:itm-2}\\
    \left| g(\bm q) -\left( f\left(\bm q - \frac{\bm z - \bm y}{\sqrt{2}}\right) + f\left( \frac{\bm z - \bm y}{\sqrt{2}}\right)\right)\right|&<4\delta,\label{itm:itm-3}\\
    \left|f\left(\frac{\bm x - \bm y}{\sqrt{2}}\right) -\left(f\left(\frac{\bm x - \bm z}{\sqrt{2}}\right) - f \left(\frac{\bm z - \bm y}{\sqrt{2}}\right)\right)\right|&<\delta\label{itm:itm-4}
  \end{align}}
  is at least $1- (177/1000+ 2\cdot 12/125  + 1/1000) =63/100  > 0$. Probabilities for \eqref{itm:itm-1}, and \eqref{itm:itm-4} follow by \autoref{lem:whp-g-additive-two-points}, and \autoref{lem:f-additive-whp} \eqref{eq:lem3-3}, respectively. For \eqref{itm:itm-2} and \eqref{itm:itm-3} we are using the fact that $(\bm x - \bm z)/\sqrt{2}\text{, and }(\bm z - \bm y)/\sqrt{2}$ are distributed as $\mathcal{N}(0,I)$ and apply \autoref{lem:g-is-well-defined}~\eqref{eq:g_is_approximated_majority_gaussian}.
  Fixing such a triple $(\bm x,\bm y,\bm z)$, we conclude that
  {\allowdisplaybreaks
  \begin{align*}
    |g(\bm p + \bm q)-g(\bm p)-g(\bm q)| \leq & 
    \left|g ( \bm p + \bm q) - \left(f \left(\bm p - \frac{\bm x - \bm z}{\sqrt{2}} \right) + f \left( \bm q - \frac{\bm z - \bm y}{\sqrt{2}} \right) + f \left (\frac{\bm x - \bm y}{\sqrt{2}} \right)\right)\right| \\
    &+\left| f \left(\bm p - \frac{\bm x - \bm z}{\sqrt{2}} \right) +f \left( \bm q - \frac{\bm z - \bm y}{\sqrt{2}} \right) +f \left (\frac{\bm x - \bm y}{\sqrt{2}} \right)-g(\bm p) - g(\bm q) \right| \\
    \leq & 5\delta+\left| f \left(\bm p - \frac{\bm x - \bm z}{\sqrt{2}} \right)+ f \left(\frac{\bm x - \bm z}{\sqrt{2}}\right) -g(\bm p)\right|\\
    &+\left|f \left( \bm q - \frac{\bm z - \bm y}{\sqrt{2}} \right)+  f \left( \frac{\bm z - \bm y}{\sqrt{2}} \right)- g(\bm q)\right|\\
    &+\left|f \left (\frac{\bm x - \bm y}{\sqrt{2}} \right)- f \left(\frac{\bm x - \bm z}{\sqrt{2}} \right) -  f \left( \frac{\bm z - \bm y}{\sqrt{2}} \right)\right|
    \leq  14\delta.
  \end{align*}}
  Therefore, $g$ is $14\delta$-additive within $\ball(\bm 0,\srad)$.
\end{proof}

With this we are ready to prove \autoref{lem:g-is-additive}. 

\begin{proof}
  [Proof of \autoref{lem:g-is-additive}]
  $g$ is $14\delta$-additive by \autoref{lem:g-additive-in-ball}. And, 
  $g_{\bm x}(\bm p)=f(\bm p-\bm x)+f(\bm x)$ is a good estimation (up to $4\delta$) for $g(\bm p)$ with high probability $\left(\frac{113}{125}>\frac{1}{2}\right)$ for $x\sim\cN(\bm 0,I)$ by \autoref{lem:g-is-well-defined}\eqref{eq:g_is_approximated_majority_gaussian}.
\end{proof}

\subsection{Multiplicatively-Approximate Distribution-Free Additivity Tester}\label{subsec:distribution-free-additivity}

In this section, we show that a small adaption of our tester give us a distribution-free tester for multiplicatively approximate additivity, without any precondition on the unknown distribution $\mathcal D$ (such as assuming that it is concentrated).
After removing the condition of sampled points being inside  $\ball(\bm 0,\lrad)$, the adapted tester is represented in \autoref{alg:distribution-free-additivity}. 
\begin{algorithm}
  \caption{Distribution-Free Approximate Additivity Tester With Multiplicative Error}\label{alg:distribution-free-additivity}
  \Procedure{\emph{\Call{MultApproxAdditivityTester}{$f,\mathcal{D},\alpha,\eps,\lrad$}}}{
    \Given{Query access to $f\colon \mathbb{R}^n \to \mathbb{R}$, sampling access to an unknown $(\eps/4,\lrad)$-concentrated distribution $\mathcal{D}$, noise parameter $\alpha>0$, farness parameter $\delta>0$;}
    $\delta\gets 3\alpha,r\gets 1/50$\;
    \textbf{Reject} if \Call{TestAdditivity}{$f,\delta$} returns \textbf{Reject}\;
    \For{$N_{\ref{alg:distribution-free-additivity}} \gets O(1/\varepsilon)$ times}{
      Sample $\bm p \sim \mathcal{D}$\;
      \textbf{Reject} if $|f(\bm p) -$ \Call{Approximate-$g$}{$\bm p,f,\delta$}$|>5\delta n^{1.5}\kappa_{\bm p}$  or if \Call{Approximate-$g$}{$\bm p,f,\delta$} returns \textbf{Reject}.
    }
    \textbf{Accept}.}
\end{algorithm}

We note that the subroutines in \autoref{alg:subroutines_noisy} remain the same and still sample points from $\cN(\bm 0,I)$, in order to check that $f$ satisfies the characterization properties, and to approximate $g$ inside $\ball(\bm 0,\srad)$.

\paragraph{Distribution-Free Multiplicatively-Approximate Tester for Additivity.} 
Given query access to the input function $f: \R^n \to \R$, sampling access to unknown distribution $\mathcal{D}$, as well to $\cN(\bm 0, I)$, a parameter $0<\alpha\in \R$ and  a constant $0<\varepsilon \in \R$, a distribution-free, multiplicative-approximate tester for additivity distinguishes between the following two cases with probability at least $2/3$:
\begin{itemize}
\item
\textbf{\textsc{Yes Case:}} There exists an additive function $h: \R^n \to \R$ such that for all $\bm{p} \in \R^n$:
\[
|f(\bm{p})-h(\bm{p})| \leq \alpha;
\]
\item
\textbf{\textsc{No Case:}}  For any additive function  $h: \R^n \to \R$:
\[
\Pr_{\bm{p} \sim \mathcal{D}}[|f(\bm{p})-h(\bm{p})|>600 n^{1.5}\alpha\kappa_{\bm p}] > \eps.
\]
\end{itemize}

 Correctness of our tester, given in \autoref{alg:distribution-free-additivity}, follows from this theorem.

\begin{proof}
  The proof follows the same path as for \autoref{thm:central-gausssian}. We only adapt the \autoref{alg:distribution-free-additivity} to now test all points sampled by $\mathcal D$. 
  In the \textbf{\textsc{Yes case}}, the tests always accept. Indeed the \Call{TestAdditivity}{$f$} subroutine passes with probability $1$, and we claim \Call{Approximate-$g$}{$\bm p,f$} never rejects and returns an approximate value $\kappa_{\bm p} g_{\bm x_1 }\left(\frac{\bm p}{\kappa_{\bm p}}\right)$, when queried $g(\bm p)= \kappa_{\bm p} g\left(\frac{\bm p}{\kappa_{\bm p}}\right)$, where $\bm x_1\sim\cN(\bm 0,I)$. Recall that in the \textbf{\textsc{Yes case}}, $\left| g_{\bm x_1}\left(\frac{\bm p}{\kappa_{\bm p}}\right)-f\left(\frac{\bm p}{\kappa_{\bm p}}\right)\right|\leq \delta=3\alpha$. 
  Therefore we have, by triangle inequality, for every $\bm p,\bm x_1\in \R^n$, 
  \[\left| f(\bm p)- \kappa_{\bm p} g_{\bm x_1}\left(\frac{\bm p}{\kappa_{\bm p}}\right)\right| \leq \left|f(\bm p)-\kappa_{\bm p} f\left(\frac{\bm p}{\kappa_{\bm p}}\right)\right| + \kappa_{\bm p}\left| f\left(\frac{\bm p}{\kappa_{\bm p}}\right) - g_{\bm x_1}\left(\frac{\bm p}{\kappa_{\bm p}}\right)\right| \leq \alpha + \alpha\kappa_{\bm p} + \delta\kappa_p\leq 2\delta\kappa_p.  \]
  Last inequality by $f$ being point-wise close to an additive function. Thus Step 4 in \autoref{alg:distribution-free-additivity} always passes.

  In the \textbf{\textsc{No Case}}, we reject with probability at least $2/3$.
  Indeed, if \Call{TestAdditivity}{$f$} rejects with probability $>2/3$ we are done. So, assume that it accepts with probability at least $1/3$, then we see that the premise of  \autoref{lem:g-is-additive} holds. 
  
  We first bound the probability of Step 4 of \autoref{alg:distribution-free-additivity} to pass. For this we use the fact that the probability that $\tilde{g}(\bm p)\triangleq$\Call{Approximate-$g$}{$\bm p,f$} fails to approximate $g$ withing $6\delta$ error is at most $\frac{\varepsilon}{2}$ as we proved for \autoref{thm:central-gausssian}.
  {\allowdisplaybreaks
  \begin{align*}
      \Pr_{\bm p\sim\mathcal{D}}[\text{Step 6 passes}]
      &\leq \Pr_{\bm p\sim\mathcal{D}}\left[\left|f(\bm p)-\tilde{g}(\bm p)\right|<5\delta n^{1.5}\kappa_{\bm p}\right]\\
      &\leq \Pr_{\bm p\sim\mathcal{D}}\left[\left|f(\bm p)-g(\bm p)\right|<20\delta n^{1.5}\kappa_p \vee \left|\tilde{g}(\bm p)-g(\bm p)\right|>6\delta\right]\\
     & \leq 1-\Pr_{\bm p\sim\mathcal{D}}\left[\left|f(\bm p)-g(\bm p)\right|\geq 20\delta n^{1.5}\kappa_p \right]+\Pr_{\bm p\sim\mathcal{D}}\left[ \left|\tilde{g}(\bm p)-g(\bm p)\right|>6\delta\right]\\
     &\leq 1-\Pr_{\bm p\sim\mathcal{D}}\left[\left|f(\bm p)-g(\bm p)\right|\geq 20\delta n^{1.5}\kappa_p \right]+\frac{\varepsilon}{2}\\
     &\leq 1-\frac{\varepsilon}{2}.
  \end{align*}
  }
 For the last inequality we have to bound the probability that $f$ and $g$ are far, say $$\Pr_{\bm p\sim\mathcal{D}}\left[\left|f(\bm p)-g(\bm p)\right|\geq 20\delta n^{1.5}\kappa_p \right]\geq \varepsilon,$$for that we use  \autoref{thm:closeness_to_additivity_on_ball} to show that there  exist an additive function $h:\R^n\to\R$, such that for every $\bm x\in\ball(\bm 0,\srad),|g(\bm x)-h(\bm x)|\leq 150n^{1.5}\delta$. This gives us, for every $\bm p\in\R^n$,
  \[|g(\bm p)-h(\bm p)|=\left| \kappa_{\bm{p}}g\left(\frac{\bm p}{\kappa_{\bm{p}}}\right)-\kappa_{\bm{p}}h\left(\frac{\bm p}{\kappa_{\bm{p}}}\right)\right|\leq 150n^{1.5}\delta\kappa_{\bm{p}}.\]
  
  Note that since $f$ is $\varepsilon$-far from any additive function, it is also $\varepsilon$-far from $h$ and with probability $\varepsilon$ we draw $\bm p\sim\mathcal {D}$ that satisfies $|f(\bm p)-h(\bm p)|>200\delta n^{1.5}\kappa_{\bm p}$. For these $\bm p$, it holds that 
  \[200\delta n^{1.5}\kappa_{\bm p}<|f(\bm p)-h(\bm p)|<|f(\bm p)-g(\bm p)|+|g(\bm p)-h(\bm p)|\leq |f(\bm p)-g(\bm p)|+150\delta n^{1.5}\kappa_{\bm p},\]
  implying that $|f(\bm p)-g(\bm p)|>50\delta n^{1.5}\kappa_{\bm p}$.
\end{proof}

\section{Proof of \autoref{thm:median-is-close}}
\label{sec:appendixB}

In this appendix we prove the following lemma which gives a sufficient condition for the median of any distribution to be close to a random element sampled from that distribution. 
\medIsClose*

\begin{proof}
Define $S_{\leq}\triangleq\{\bm q\in\Omega:g(\bm q)\leq g_{\med}\}$, and $S_{\geq}\triangleq\{\bm q\in\Omega:g(\bm q)\geq g_{\med}\}$. Since, $g_{\med}$ is the median of the set $\{g(\bm q):\bm{q}\in\Omega\}$ over $\bm q\sim\caD$, 
\begin{equation}\label{eq:equiprob}
    \Pr_{\bm q\sim\caD}\left[\bm q\in S_{\leq}\right]=\Pr_{\bm q\sim\caD}\left[\bm q\in S_{\geq}\right]=\frac{1}{2}.
\end{equation}
Suppose for contradiction that the following hold:
\begin{equation}\label{eq:premise}
\begin{split}
    & \Pr_{\bm{q}_1,\bm{q}_2\sim\caD}[|g(\bm q_1)-g(\bm q_2)|<\delta]  >1-\eta, 
\end{split}
\end{equation}
\begin{equation}\label{eq:conclusion}
\begin{split}
    &\Pr_{\bm{q}_1\sim\caD}[|g_{\med}-g(\bm q_1)|<\delta]\leq 1-4\eta,
\end{split}
\end{equation}
By \eqref{eq:conclusion}, for $\bm q_1\sim\caD$, $g(\bm q_1)$ will be at least $\delta$-far from $g_{\med}$ with probability more than $4\eta$. We will argue that for $\bm q_2\sim\caD$, $g(\bm q_2)$ will be at least $\delta$-far from $g(\bm q_1)$, with probability more than $\eta$, contradicting \eqref{eq:premise}.

Suppose that $\bm q_1\in S_{\leq}$. Then, for any $\bm q_2\in S_{\geq}$, we have 
\[ |g(\bm q_2)-g(\bm q_1)|\geq |g(\bm q_2)-g_{\med}| +  |g_{\med}-g(\bm q_1)|\geq 0 + \delta = \delta.\]

Similarly, if $\bm q_1\in S_{\geq}$, then for any $\bm q_2\in S_{\leq}$, 
\[ |g(\bm q_2)-g(\bm q_1)|\geq |g(\bm q_2)-g_{\med}| +  |g_{\med}-g(\bm q_1)|\geq 0 + \delta = \delta.\]

Therefore, 
\begin{align*}
    &\Pr_{\bm q_1,\bm q_2\sim\caD}[|g(\bm q_1)-g(\bm q_2)|\geq\delta]\\
    &\geq\Pr_{\bm q_1\sim\caD}[\bm q_1\in S_{\leq}]\cdot\Pr_{\bm{q}_1\sim\caD}[|g_{\med}-g(\bm q_1)|\geq\delta\mid\bm q_1\in S_{\leq}]\cdot\Pr_{\bm q_2\sim\caD}[\bm q_2\in S_{\geq}]\\
    &+\Pr_{\bm q_1\sim\caD}[\bm q_1\in S_{\geq}]\cdot\Pr_{\bm{q}_1\sim\caD}[|g_{\med}-g(\bm q_1)|\geq\delta\mid\bm q_1\in S_{\geq}]\cdot\Pr_{\bm q_2\sim\caD}[\bm q_2\in S_{\leq}]\\
    &=\frac{1}{4}\left(\Pr_{\bm{q}_1\sim\caD}[|g_{\med}-g(\bm q_1)|\geq\delta\mid\bm q_1\in S_{\leq}] + \Pr_{\bm{q}_1\sim\caD}[|g_{\med}-g(\bm q_1)|\geq\delta\mid\bm q_1\in S_{\geq}]\right)  \tag{By \eqref{eq:equiprob}}\\
    &=\frac{1}{4}\left(\Pr_{\bm{q}_1\sim\caD}[|g_{\med}-g(\bm q_1)|\geq\delta]\right)\\
    &>\frac{4\eta}{4}=\eta. \tag{By \eqref{eq:conclusion}} 
\end{align*} \qedhere
\end{proof}

\ignore{
An immediate corollary of \autoref{thm:median-is-close} is the following. 
\begin{restatable}{cor}{medIsCloseGaussian}\label{cor:Median-of-scaled-gaussian-is-also-close}
Let $\Omega$ be a sample space, $g:\Omega\to\R$, and $\caD,\caD'$ be any distributions over $\Omega$. For every $\eta \in [0, 1/5]$, and any $\delta \in \mathbb{R}$, if both \begin{align*}
    \Pr_{\bm{q}_1,\bm q_2\sim\caD}[|g(\bm q_1)-g(\bm q_2)|<\delta] &>1-\eta, \mbox{ and} \\
    \Pr_{\substack{\bm{q}_1\sim\caD'\\ \bm q_2\sim\caD}}[|g(\bm q_1)-g(\bm q_2)|<\delta] &>1-\eta,
\end{align*}
then $ \Pr_{\bm{q}\sim\caD'}[|g_{\med}-g(\bm q)|<2\delta]>1-5\eta$, where $g_{\med}\triangleq\med_{\bm q\sim\caD}\{g(\bm q)\}$. 
\end{restatable}
\begin{proof}
Define the set $S\triangleq\{\bm q\in\Omega:|g_{\med}-g(\bm q)|\geq\delta\}$. Since we have
\begin{equation*}
\begin{split}
    &\Pr_{\bm{q}_1,\bm q_2\sim\caD}[|g(\bm q_1)-g(\bm q_2)|<\delta]>1-\eta,\\
    \equiv &\Pr_{\bm q_1,\bm q_2\sim\caD}[|g(\bm q_1)-g(\bm q_2)|\geq\delta]\leq\eta,
\end{split}
\end{equation*}
by \autoref{thm:median-is-close}, we claim
\begin{equation}
\begin{split}
    \Pr_{\bm q\sim\caD}[\bm q\in S]\leq 4\eta.
\end{split}
\end{equation}
If we randomly sample vectors $\bm {q_1}\sim\caD',\bm {q_2}\sim\caD$, by our premise, $g(\bm q_1)$ is at least $\delta$-far from $g(\bm q_2)$, with probability at most $\eta$, and $g(\bm q_2)\in S$, i.e., $g(\bm q_2)$ itself is at least $\delta$-far from $g_{\med}$, with probability at most $4\eta$. By the triangle inequality, and a union bound, we have
\begin{equation*}
\begin{split}
    \Pr_{\bm q_1\sim\caD'}[|g_{\med}-g(\bm q_1)|\geq 2\delta]& \leq\Pr_{\substack{\bm q_1\sim\caD'\\ \bm q_2\sim\caD}}[|g(\bm q_1)-g(\bm q_2)|\geq\delta] + \Pr_{\bm q_2\sim\caD}[\bm q_2\in S]\\
    &\leq \eta + 4\eta = 5\eta.
\end{split}
\end{equation*}
It thus follows that $\Pr_{\bm q\sim\caD'}[|g_{\med}-g(\bm q)|< 2\delta] >1-5\eta$, completing the proof.
\end{proof}
}



\end{appendices}
\end{document}